\documentclass[12pt]{article}
\usepackage{amsmath}
\usepackage{amsfonts}
\usepackage{amssymb}
\usepackage{amsthm}
\usepackage{graphicx}
\usepackage{fullpage}
\usepackage{setspace}
\usepackage{verbatim}
\usepackage{natbib}
\usepackage{appendix}
\usepackage{rotating}
\bibliographystyle{ecta}

\newtheorem{theorem}{Theorem}[section]
\newtheorem{lemma}{Lemma}[section]

\newtheorem{assumption}{Assumption}[section]

\onehalfspacing

\title{On the Choice of Test Statistic for Conditional Moment
  Inequalities}

\author{Timothy B. Armstrong\thanks{email: timothy.armstrong@yale.edu.  Support from National Science Foundation
Grant SES-1628939 is gratefully acknowledged.}
\\
Yale University}

\begin{document}

\maketitle

\begin{abstract}
This paper derives
asymptotic approximations to the power of
Cramer-von Mises (CvM) style tests 
for inference on a finite dimensional parameter defined by conditional moment inequalities in the case where the parameter is set identified.
Combined with power results for Kolmogorov-Smirnov (KS) tests, these results can be used to choose the optimal test statistic, weighting function and, for tests based on kernel estimates, kernel bandwidth.
The results show that, in the setting considered here, KS tests are preferred to CvM tests, and that a truncated variance weighting is preferred to bounded weightings.
\end{abstract}

\section{Introduction}\label{introduction_sec}

This paper compares methods for inference on a parameter $\theta$ defined by the conditional moment inequalities
\begin{align*}%
E(m(W_i,\theta)|X_i)\ge 0 \,\,\, a.s.
\end{align*}
where $m:\mathbb{R}^{d_W+d_\theta}\to \mathbb{R}^{d_Y}$ is a known
function of data $W_i$ and a parameter $\theta\in\Theta\subseteq \mathbb{R}^{d_\theta}$, and $\ge$ is
defined elementwise.
Here, $W_i$ is a $\mathbb{R}^{d_W}$ valued random variable and $X_i$ is a $\mathbb{R}^{d_X}$ valued random variable.  We are given independent, identically distributed (iid) observations $\{(X_i',W_i')'\}_{i=1}^n$.
This defines the identified set
\begin{align*}
\Theta_0\equiv \{\theta\in\Theta|
   E(m(W_i,\theta)|X_i)\ge 0 \,\,\, a.s.
 \}
\end{align*}
where $\Theta\subseteq\mathbb{R}^{d_\theta}$ is the parameter space.
If $\Theta_0$ contains more than one element, the model is said to be
set identified.

Following \citet{imbens_confidence_2004}, we are interested in confidence regions $\mathcal{C}_n$ that satisfy the converage criterion
\begin{align}\label{im_coverage_eq}
\text{for all } \theta_0\in\Theta_0, \,\liminf_{n\to\infty} P(\theta_0\in \mathcal{C}_n)\ge 1-\alpha.
\end{align}
We consider confidence regions constructed by inverting
a family of tests $\phi_n(\theta)=\phi_n(\theta,\{X_i,W_i\}_{i=1}^n)$, where $\phi_n(\theta)$ is a test of $H_{0,\theta}:\theta\in\Theta_0$:
\begin{align*}
\mathcal{C}_n=\{\theta|\phi_n(\theta)=0\}.
\end{align*}
Subject to the coverage criterion (\ref{im_coverage_eq}), we would like the confidence region $\mathcal{C}_n$ not to contain points that are far away from the identified set $\Theta_0$.  In particular, if we take a parameter $\theta_0$ on the boundary of $\Theta_0$ and consider a sequence $\theta_n=\theta_0+a_n$ where $a_n\to 0$, we would like to have $\theta_n\notin\mathcal{C}_n$ with high probability for $a_n$ converging to zero as quickly as possible (so long as $\theta_n$ approaches $\Theta_0$ from the outside, rather than from the interior).  Note that
\begin{align*}
P(\theta_n\notin\mathcal{C}_n)=P(\phi_n(\theta_n)=1).
\end{align*}
Thus, %
we can determine whether $\mathcal{C}_n$ contains points that are far away from $\Theta_0$
by examining the behavior of $P(\phi_n(\theta_n)=1)$, which is the power of the test $\phi_n(\theta_n)$ of $H_{0,\theta_n}$ at the alternative $P$.

This paper provides an asymptotic answer to this question by examining the asymptotic behavior of $P(\phi_n(\theta_n)=1)$ as $n\to\infty$.  We refer to limit of $P(\phi_n(\theta_n)=1)$ as the local asymptotic power of the sequence of tests $\phi_n(\theta_n)$ (note that this terminology differs from definitions often used in the literature, since the null hypothesis varies with $n$ while the alternative stays fixed).  The local asymptotic power of this sequence of tests will depend on the distribution $P$, the parameter $\theta_0$ on the boundary of $\Theta_0$ to which the sequence $\theta_n=\theta_0+a_n$ converges, and the sequence $a_n$.

This paper considers Cramer-von Mises (CvM) style test statistics, which integrate or add some function of the negative part of an objective function.  These can be compared with existing results for Kolmogorov-Smirnov (KS) statistics, which take the minimum of an objective function.  The results show that the power $P(\phi_n(\theta_n)=1)$ will be greater asymptotically for KS statistics when the distribution $P$ satisfies generic smoothness conditions of the form used in the nonparametric statistics literature.  In particular, the results imply that KS statistics are preferred according to a ``minimax within a smoothness class'' criterion of the form used to formulate nonparametric relative efficiency results in papers such as \citet{stone_optimal_1982}.

As an example of the types of problems covered by this setup, consider the interval regression model of \citet{manski_inference_2002}.  We observe $(X_i,W_i^L,W_i^H)$ where $[W_i^L,W_i^H]$ is known to contain the latent variable $W_i^*$, which follows the linear regression model $E(W_i^*|X_i)=(1,X_i')\theta$.  This falls into the setup of this paper with $W_i=(X_i,W_i^L,W_i^H)$ and $m(W_i,\theta)=(W_i^H-(1,X_i')\theta,(1,X_i')\theta-W_i^L)'$.  The identified set is then given by
\begin{align*}
\Theta_0=\{\theta|E(W_i^L|X_i)\le (1,X_i')\theta\le E(W_i^H|X_i) \,\,\, a.s.\}.
\end{align*}
Thus, a parameter $\theta_0$ in the identified set corresponds to a regression line $(1,x')\theta_0$ that is between the conditional means $E(W_i^L|X_i=x)$ and $E(W_i^H|X_i=x)$ for all $x$ on the support of $X_i$.  If $\theta_0$ is on the boundary of the identified set, it will be equal to one of these regression lines for some value of $x$.  For $\theta_n=\theta_0+a_n$ approaching the boundary of the identified set from the outside, the regression line $(1,x')\theta_n$ will be above $E(W_i^H|X_i=x)$ or below $E(W_i^L|X_i=x)$ for some values of $x$, and we would like the test $\phi_n(\theta_n)$ to detect this so that $\theta_n\notin \mathcal{C}_n$ with high probability.  We use primitive conditions to apply the general results in this paper to this setting, thereby giving asymptotic approximations to this probability.
These conditions correspond to smoothness conditions used in the nonparametric statistics literature and conditions on the shape of these conditional means near points where one of them is equal to $(1,x')\theta_0$ (see Section \ref{rootn_comparison_sec} and Appendix \ref{minimax_rates_sec}).

The remainder of this paper is organized as follows.
Section \ref{test_def_sec} defines the tests considered in this paper.
Section \ref{related_lit_sec} discusses related literature.
Section \ref{intuition_sec} gives an intuitive description of the power results in this paper and how they are derived.
Section \ref{larger_assumptions_sec} states formally the conditions used in this paper, and provides primitive conditions for the interval regression model.
Section \ref{loc_power_sec} derives the power results.
Section \ref{monte_carlo_sec} reports the results of a Monte Carlo study.
Section \ref{conclusion_sec} concludes.
An appendix contains
minimax power comparisons
as well as primitive conditions for the results in the main text in additional settings.
A supplementary appendix contains proofs and auxiliary results.

\subsection{Definition of Test Statistics}\label{test_def_sec}

The test statistics considered in this paper are as follows.  Given a set
$\mathcal{G}$ of nonnegative instruments, the null hypothesis %
$H_{0,\theta}:\theta\in\Theta_0$
implies that $E(m(W_i,\theta)g(X_i))\ge 0$ for all $g\in\mathcal{G}$.  Thus, under $H_{0,\theta}:\theta\in\Theta_0$,
the sample analogue
\begin{align}\label{inst_eq}
E_n(m(W_i,\theta)g(X_i))
  \equiv \frac{1}{n}\sum_{i=1}^n m(W_i,\theta)g(X_i)
\end{align}
should not be too negative for any $g\in\mathcal{G}$.  The results in this paper use classes of functions given by kernels with varying bandwidths and location, given by $\mathcal{G}=\{x\mapsto k((x-\tilde x)/h)|\tilde x\in\mathbb{R}^{d_X}, h\in\mathbb{R}_+\}$ for some kernel function $k$.
With this choice of $\mathcal{G}$,
$H_{0,\theta}:\theta\in\Theta_0$
holds if and only if $E(m(W_i,\theta)g(X_i))\ge 0$ for all $g\in\mathcal{G}$, so that (\ref{inst_eq}) can be used to form a consistent test \citep[see][for a discussion of this and other choices of $\mathcal{G}$]{andrews_inference_2013}.

Alternatively, one can test %
$H_{0,\theta}:\theta\in\Theta_0$
by estimating $E(m(W_i,\theta)|X_i=x)$ directly using the kernel estimate
\begin{align}\label{kern_eq}
\hat{\bar m}_j(\theta,x)
  =\frac{\sum_{i=1}^n m(W_i,\theta)k((X_i-x)/h)}{\sum_{i=1}^n k((X_i-x)/h)}
\end{align}
for some sequence $h=h_n\to 0$ and kernel function $k$.  If $H_{0,\theta}$ holds, (\ref{kern_eq}) should not be too negative for any $x$.

Thus, a test statistic of the null that $\theta\in\Theta_0$ can be formed by taking any function that is positive and large in magnitude when (\ref{inst_eq}) is negative and large in magnitude for some $g\in\mathcal{G}$, or when (\ref{kern_eq}) is negative and large in magnitude for some $x$.  One possibility is to use a CvM statistic that integrates the negative part of (\ref{inst_eq}) over some measure $\mu$ on $\mathcal{G}$.  This CvM statistic is given by
\begin{align}\label{iv_stat_eq}
T_{n,p,\omega,\mu}(\theta)
  =\left[\int \sum_{j=1}^{d_Y}
  |E_nm_j(W_i,\theta)g(X_i)\omega_j(\theta,g)|_{-}^p \,
    d\mu(g)\right]^{1/p}
\end{align}
for some $p\ge 1$ and weighting $\omega$,
where $|t|_{-}=|\min\{t,0\}|$.
I refer to this as an instrument based CvM (IV-CvM) statistic.
The CvM statistic based on the kernel estimate integrates the negative part of (\ref{kern_eq}) against some weighting $\omega$, and is given by
\begin{align}\label{kern_stat_eq}
T_{n,p,\text{kern}}(\theta)
=\left[\int \sum_{j=1}^{d_Y} \left| \hat{\bar m}_j(\theta,x)\omega_j(\theta,x)
   \right|_{-}^p \, d x\right]^{1/p}
\end{align}
for some $p\ge 1$.
I refer to this as a kernel based CvM (kern-CvM) statistic.

For the instrument based CvM statistic, the scaling for the power function will depend on $\omega$.  This paper considers both
a bounded weighting which, without loss of generality, can be taken to be constant (the measure $\mu$ can absorb any weighting that does not change with the sample size)
\begin{align}\label{bdd_weight_eq}
\omega_j(\theta,g)=1 \text{ all $\theta,g,j$}
\end{align}
as well as the truncated variance weighting 
used for KS statistics by \citet{armstrong_weighted_2014}, \citet{armstrong_multiscale_2016} and
\citet{chetverikov_adaptive_2012},
which is given by
\begin{align}\label{var_weight_eq}
\omega_j(\theta,g)=(\hat \sigma_j(\theta,g)\vee \sigma_n)^{-1}
\end{align}
where
\begin{align*}
\hat\sigma_j(\theta,g)
  =\{E_n[m_j(W_i,\theta)g(X_i)]^2-[E_nm_j(W_i,\theta)g(X_i)]^2\}^{1/2}
\end{align*}
and $\sigma_n$ is a sequence converging to zero and $a\vee b$ denotes the maximum of $a$ and $b$ for scalars $a$ and $b$.\footnote{For the critical value of the test, the results covered in this paper cover any critical value that is of the same order of magnitude asymptotically as a critical value based on the distribution where all moments bind.  See Section \ref{def_sec} for details.}

The results for CvM statistics derived in this paper can be compared to power results for KS statistics derived in \citet{armstrong_asymptotically_2015} and \citet{armstrong_weighted_2014}.  A KS statistic based on (\ref{inst_eq}) simply takes the most negative value of that expression over $g\in\mathcal{G}$, and is given by
\begin{align}\label{iv_stat_ks_eq}
T_{n,\infty,\omega}(\theta)
  =\max_j \sup_{g\in\mathcal{G}} 
  |E_nm_j(W_i,\theta)g(X_i)\omega_j(\theta,g)|_{-}.
\end{align}
I refer to this as an instrument based KS (IV-KS) statistic.
A KS statistic based on (\ref{kern_eq}) simply takes the most negative value of that expression over $x$, and is given by
\begin{align}\label{kern_stat_ks_eq}
T_{n,\infty,\text{kern}}(\theta)
=\max_j\sup_{\theta} \left| \hat{\bar m}_j(\theta,x)\omega_j(\theta,x)\right|_{-}.
\end{align}
I refer to this as a kernel based KS (kern-KS) statistic.
As with CvM statistics, the scaling for the local power function for the instrument based KS test depends on whether a bounded weighting or a truncated variance weighting is used.

To complete the definition of these tests, we need to define a critical value.
For tests that use instrument based CvM statistics with bounded weights or inverse variance weights with $p<\infty$, the test $\phi_{n,p,\omega,\mu}(\theta)$, which rejects when $\phi_{n,p,\omega,\mu}(\theta)=1$, is defined as
\begin{align}\label{test_def_rootn}
\phi_{n,p,\omega,\mu}(\theta)
=\left\{\begin{array}{cc}
1 & \text{if }\sqrt{n}T_{n,p,\omega,\mu}(\theta) > \hat c_{n,p,\omega,\mu}(\theta)  \\
0 & \text{otherwise}
\end{array}\right.
\end{align}
for some critical value $\hat c_{n,p,\omega,\mu}(\theta)$.  For kernel based CvM statistics, the test $\phi_{n,p,\text{kern}}(\theta)$, which rejects when $\phi_{n,p,\text{kern}}(\theta)=1$, is defined as
\begin{align}\label{test_def_kern}
\phi_{n,p,\text{kern}}(\theta)
=\left\{\begin{array}{cc}
1 & \text{if }(nh^{d_X})^{1/2} T_{n,p,\text{kern}}(\theta) > \hat c_{n,p,\text{kern}}(\theta)  \\
0 & \text{otherwise}
\end{array}\right.
\end{align}
While all of the new results in this paper are for CvM statistics, I refer to analogous results for KS statistics at some points for comparison.  For KS tests with bounded weights, the critical value is defined as in (\ref{test_def_rootn}).
For KS tests based on truncated variance weights, the test
$\phi_{n,\infty,(\sigma\vee \sigma_n)^{-1}}(\theta)$ is defined as
\begin{align}\label{test_def_rootlogn}
\phi_{n,\infty,(\sigma\vee \sigma_n)^{-1}}(\theta)
=\left\{\begin{array}{cc}
1 & \text{if }\sqrt{\frac{n}{\log n}}T_{n,\infty,(\sigma\vee \sigma_n)^{-1}}(\theta) > \hat c_{n,\infty,(\sigma\vee \sigma_n)^{-1}}(\theta)  \\
0 & \text{otherwise}
\end{array}\right.
\end{align}
for some critical value $\hat c_{n,p,\infty,(\sigma\vee \sigma_n)^{-1}}(\theta)$.

\subsection{Related Literature}\label{related_lit_sec}

Tests based on instrument based CvM and KS statistics have been considered by \citet{andrews_inference_2013}, \citet{kim_kyoo_il_set_2008},
\citet{khan_inference_2009} and \citet{armstrong_asymptotically_2015}
for bounded weights, and \citet{armstrong_weighted_2014},
\citet{armstrong_multiscale_2016}
and \citet{chetverikov_adaptive_2012}
for KS statistics with variance weights.
The statistics based on instruments with bounded weights use an approach to nonparametric testing problems that goes back at least to \citet{bierens_consistent_1982}.
\citet{aradillas-lopez_testing_2013} use a slightly different version of an instrument CvM approach.
\citet{chernozhukov_intersection_2013} consider kernel based KS statistics and
\citet{lee_testing_2013}
and
\citet{lee_testing_2015}
consider kernel based CvM statistics.  While some of these papers
derive local power results for CvM tests under conditions that appear
to be common in point identified models, these results do not apply in
set identified models except for in very special cases.  Indeed, the results in the present paper show that, when one uses a minimax criterion requiring uniformly good power in classes of underlying distributions defined by smoothness properties, the power of CvM tests is much worse (see Section \ref{minimax_rates_sec}).  The results
in this paper show that power comparisons in the set identified case
considered here are much different than settings that have been studied
previously.  \citet{armstrong_asymptotically_2015},
\citet{armstrong_weighted_2011},
\citet{armstrong_weighted_2014},
\citet{armstrong_multiscale_2016},
 and \citet{chetverikov_adaptive_2012}
derive power results for KS statistics under conditions similar to
those used in this paper, but do not consider CvM statistics.

The results in this paper are also related to the statistics literature on minimax testing of hypotheses of the form $H_{0,=}: f(x)=0$ all $x$, $H_{0,\ge}: f(x)\ge 0$ all $x$, $H_{0,\uparrow}: f(x)\ge f(x')$ all $x<x'$, (and related hypotheses such as convexity of $f$), where the function $f$ is observed with noise.  While much of this literature focuses on the Gaussian white noise model or Gaussian sequence model, the results are closely related to the case where $f(x)=E(Y_i|X_i=x)$, and iid observations of $X_i,Y_i$ are available (which falls into our setup if we take $Y_i=m(W_i,\theta_0)$).
To formulate the minimax testing problem considered in this literature, one specifies a smoothness class $\mathcal{F}$ for $f$ and a functional $\psi:\mathcal{F}\to [0,\infty)$ such that $\psi(f)$ is $0$ if $f$ satisfies the null and strictly positive otherwise.
For example, for $H_{0,=}$, one can take the $L_p$ norm $\psi(f)=[\int f(x)^p\, dx]^{1/p}$ and, for $H_{0,\ge}$, one can take the one-sided $L_p$ norm $\psi(f)=[\int |f(x)|_{-}^p]$.
The minimax testing problem is to obtain tests that have good worst-case power over alternatives $f$ in the smoothness class $\mathcal{F}$ with $\psi(f)\ge a_n$ for $a_n\to 0$ as quickly as possible.  \citet{dumbgen_multiscale_2001} and \citet{juditsky_nonparametric_2002} consider $H_{0,\ge}$ with $\psi$ given by the one-sided $L_\infty$ norm $\psi(f)=\sup_x |f(x)|_{-}$ and the one-sided $L_p$ norm with $p<\infty$ respectively, as well as $H_{0,\uparrow}$ and the hypothesis of convexity with related distance functions $\psi$.  \citet{lepski_asymptotically_2000} consider $H_{0,=}$ with $\psi(f)$ given by the $L_\infty$ norm and by $\psi(f)=|f(x_0)|$ for a given point $x_0$.  See \citet{ingster_nonparametric_2003} for further results and references to this literature.

In contrast to this literature, the results in this paper have implications for minimax rates of CvM statistics for testing the null that a given value of $\theta$ is in the identified set against the alternative that the distance between $\theta$ and any point in the identified set is at least $a_n$ (see Section \ref{minimax_rates_sec} in the appendix for a formal statement).  Since the dimension of $\theta$ is finite and fixed, the choice of distance (i.e. whether to use Euclidean distance or sup-norm distance when defining distance of $\theta$ from points in the identified set) does not matter for the rate at which $a_n$ can approach zero with the test having good power.  This contrasts with the nonparametric testing literature described above, in which the choice of distance function $\psi$ has implications for relative efficiency of different test statistics, and is part of the reason that CvM and KS tests can be ranked in this setting.  Interestingly, the problem of minimax inference on $\theta$ in the settings considered here appears to be closely related to nonparametric testing with $\psi$ given by the $L_\infty$ norm.  See \citet{armstrong_note_2014} for further discussion.

\section{Intuition for the Results}\label{intuition_sec}

To get some intuition for the results, consider the interval regression model defined in the introduction, with a one-dimesional covariate $X_i$.  Consider a sequence $\theta_n$ converging to a parameter $\theta_0=(\theta_{0,1},\theta_{0,2})'$ on the boundary of the identified set such that $(1,x)\theta_0=\theta_{0,1}+\theta_{0,2}x$ is tangent to $E(W_i^H|X_i=x)$ at some point $x_0$.  This is illustrated in Figure \ref{local_alt_fig} (the conditional mean of $W_i^L$ can be considered to be below the range depicted in the figure).
To keep the derivations below simple, we assume that $\theta_n$ is formed by adding a sequence $a_{n,1}$ to the constant term in $\theta_0$, so that $\theta_n=(\theta_{0,1}+a_{n,1},\theta_{0,2})'$ where $\theta_0=(\theta_{0,1},\theta_{0,2})'$.  However, our general results cover parameter sequences where the intercept changes with $n$ as well.

The test statistics $T_n(\theta_n)$ defined in Section \ref{test_def_sec} will take sample analogues of $E((W_i^H-(1,x)\theta_n)g(X_i))$ for functions $g$ of the form $g(X_i)=k((X_i-\tilde x)/h)$ for some $\tilde x$ and $h$, and integrate or take the minimum of those that are negative.
In order for the test $\phi_n(\theta_n)$ based on a test statistic $T_n(\theta_n)$ to have high power, we would like the test to place as much weight as possible on functions $g(X_i)=k((X_i-\tilde x)/h)$ that are supported on values of $X_i$ where the inequality is violated (in the case of the parameter $\theta_n$ illustrated in the figure, this corresponds to $X_i$ between about $.5$ and $.7$).
As $\theta_n$ approaches $\theta_0$, the portion of the support of $X_i$ where the inequality is violated will shrink towards a single point $x_0$, so we will want the test statistic to use functions $g(X_i)=k((X_i-\tilde x)/h)$ where $\tilde x$ is near $x_0$ and $h$ is close to zero.

If we knew a priori the point where the moment inequality was violated, we could use this to choose the function $g(X_i)$.  The fact that this is unknown leads to the tests described in Section \ref{test_def_sec}, where test statistics are formed by combining these functions using integration (for CvM statistics) or by taking the maximum (for KS statistics).  This is the step that leads to CvM and KS statistics having different power properties: taking the integral of the moment functions tends to give power when the inequality is violated by a small amount at many different points, while taking the maximum leads to more power when the inequality is violated at a small number of points.  Since the moment inequality is violated on a shrinking set, KS statistics have better power in this setting than CvM statistics.\footnote{To see this in a simpler setting, consider testing a finite number of unconditional moment inequalities $H_0:EY_{i,1}\ge 0,\ldots,E Y_{i,k}\ge 0$.  Tests based on the statistic $\sum_{j=1}^k \left|\sum_{i=1}^n Y_{i,k}\right|_-^2$ (which is analogous to a CvM statistic) will have more power when each of the inequalities is violated by a small amount, while tests based on the statistic $\max_{j=1}^k \left|\sum_{i=1}^n Y_{i,j}\right|_-$ (which is analogous to a KS statistic) will have more power when a single inequality is violated.  See \citet{armstrong_note_2014} for details and further references.}

To see this in more detail, let us give a heuristic derivation of some of the results in this setting.  Consider the instrument based CvM statistic with bounded weights, where the measure $\mu$ on the instruments $g(x)=k((x-\tilde x)/h)$ has a density $f_\mu(\tilde x,h)$ with respect to the Lebesgue measure, and assume that $X_i$ has a density $f_X(x)$.  For simplicity, suppose we only base the statistic on the inequality involving $W_i^H$.  The statistic is $T_n(\theta_n)=\left[\int\int|E_n(W_i^H-(1,X_i)\theta_n)k((X_i-\tilde x)/h)|_-^pf_\mu(\tilde x,h)\, d\tilde x\, dh\right]^{1/p}$,
which is an integral over a sample expectation.
We expect that the test will have power when the
integral over the corresponding population expectation is large
relative to the critical value, which, as discussed below, will be of
order $n^{-1/2}$.  Thus, to have power at $\theta_n=(\theta_{0,1}+a_{n,1},\theta_{0,2})'$, we
expect that
\begin{align}\label{drift_eq_intuition}
&\left[\int \int 
  |E(W_i^H-(1,X_i)\theta_n)k((X_i-\tilde x)/h)|_{-}^p
f_\mu(\tilde x,h)\, d\tilde x \, dh
\right]^{1/p}  \nonumber  \\
&=\left[\int \int 
  \left|\int (E(W_i^H|X_i=x)-(1,x)\theta_n)k((x-\tilde x)/h) f_X(x)\,dx\right|_{-}^p
f_\mu(\tilde x,h)\, d\tilde x \, dh
\right]^{1/p}
\end{align}
will have to be large relative to $n^{-1/2}$.

Since $E(W_i^H|X_i=x)$ is tangent to $(1,x)\theta_0=\theta_{0,1}+\theta_{0,2}x$ at $x_0$, a second order Taylor approximation gives $E(W_i^H|X_i=x)-\theta_{0,1}-\theta_{0,2}x\approx (x-x_0)^2 (V/2)$ where $V$ is the second derivative of $E(W_i^H|X_i=x)$ at $x_0$.  Since $\theta_n=(\theta_{0,1}+a_{n,1},\theta_{0,2})'$, this gives an approximation to the integrand in the above display:
$E(W_i^H|X_i=x)-(1,x)\theta_n=E(W_i^H|X_i=x)-\theta_{0,1}-\theta_{0,2}x - a_{n,1}\approx (x-x_0)^2 (V/2) - a_{n,1}$.  Substituting this into the above display gives
\begin{align*}
\left[\int \int 
  \left|\int ((x-x_0)^2(V/2)-a_{n,1})k((x-\tilde x)/h) f_X(x)\,dx\right|_{-}^p
f_\mu(\tilde x,h)\, d\tilde x \, dh
\right]^{1/p}.
\end{align*}
As $\theta_n$ approaches $\theta_0$, only values of $\tilde x$ near $x_0$ and values of $h$ near zero will contribute to the integrand, so that this approximation will hold with increasing accuracy.  Furthermore, assuming that $f_\mu$ and $f_X$ are smooth, this means that we can also replace $f_X(x)$ with $f_X(x_0)$ and $f_\mu(\tilde x,h)$ with $f_\mu(x_0,0)$:
\begin{align*}
\left[\int \int 
  \left|\int ((x-x_0)^2(V/2)-a_{n,1})k((x-\tilde x)/h) f_X(x_0)\,dx\right|_{-}^p
f_\mu(x_0,0)\, d\tilde x \, dh
\right]^{1/p}.
\end{align*}
Using the
change of variables
$u=(x-x_0)/a_{n,1}^{1/2}$,
$v=(\tilde x-x_0)/a_{n,1}^{1/2}$, $\tilde h=h/a_{n,1}^{1/2}$,
it can be seen that the above display is equal to
\begin{align*}
&\left[\int \int 
  \left|\int (a_{n,1}u^2(V/2)-a_{n,1})k((u-v)/\tilde h) f_X(x_0)a_{n,1}^{1/2}\,du\right|_{-}^p
f_\mu(x_0,0)a_{n,1}^{1/2} \,d\tilde v a_{n,1}^{1/2}\, d\tilde h
\right]^{1/p}  \\
&=a_{n,1}^{3/2+1/p}\left[\int \int 
  \left|\int (u^2(V/2)-1)k((u-v)/\tilde h) f_X(x_0)\,du\right|_{-}^p
f_\mu(x_0,0) \,d\tilde v \, d\tilde h
\right]^{1/p}.
\end{align*}
Thus, we expect to get power when $a_{n,1}^{3/2+1/p}$ decreases at least as slowly as $n^{-1/2}$, which corresponds to $a_{n,1}$ decreasing at the rate $n^{-1/(3+2/p)}$.  This is the rate derived formally for this test in Section \ref{inst_bdd_sec}, specialized to this setting (the general results use a smoothness parameter $\gamma$ which, in this case, is equal to $2$).

To understand how this differs from the corresponding KS test based on $T_n(\theta_n)=\sup_{\tilde x,h}|E_n(W_i^H-(1,X_i)\theta_n)k((X_i-\tilde x)/h)|_-$, note that similar derivations give the approximation
\begin{align*}
\sup_{\tilde x,h}
  \left|\int ((x-x_0)^2(V/2)-a_{n,1})k((x-\tilde x)/h) f_X(x_0)\,dx\right|_{-}.
\end{align*}
Applying the same change of variables gives
\begin{align*}
&\sup_{u,\tilde h}\left|\int (a_{n,1}u^2(V/2)-a_{n,1})k((u-v)/\tilde h) f_X(x_0)a_{n,1}^{1/2}\,du\right|_{-}  \\
&=a_{n,1}^{3/2}\sup_{u,\tilde h}\left|\int (u^2(V/2)-1)k((u-v)/\tilde h) f_X(x_0)\,du\right|_{-},
\end{align*}
and comparing this to $n^{-1/2}$ (which is the order of the critical value in this case as well) shows that we will have power when $a_{n,1}$ decreases at the rate $n^{-1/3}$.  This is shown formally in \citet{armstrong_asymptotically_2015}.  Note that the $n^{-1/3}$ rate for the KS statistic is faster than the $n^{-1/(3+2/p)}$ rate for the CvM statistic.

\section{Assumptions}\label{larger_assumptions_sec}

This section states the conditions used in this paper, and verifies them for the interval regression model defined in the introduction.  Section \ref{prim_cond_append} in the appendix verifies the conditions in other settings.

This paper considers the power $P(\phi_n(\theta_n)=1)$ of a sequence $\phi_n(\theta_n)$ of tests of $H_{0,\theta_n}:\theta_n\in\Theta_0$ under iid data from a fixed dgp $P$, where $\theta_n=\theta_0+a_n$ is a sequence converging to $\theta_0$ on the boundary of $\Theta_0$ (where $\Theta_0$ is the identified set under the given dgp $P$).
Thus, we need conditions on the tests $\phi_n(\theta_n)$ (in particular, the critical values and weighting functions, etc. used in forming the test statistics) and the dgp $P$ and the sequence $\theta_n$.  Section \ref{def_sec} gives the conditions on the tests $\phi_n(\theta_n)$ and Section \ref{assumptions_sec} gives the conditions on $P$ and $\theta_n$.  Section \ref{int_reg_conditions_sec} verifies these conditions for the interval regression model.  Section \ref{rootn_comparison_sec} explains how the conditions differ from those encountered in point identified settings.

\subsection{Assumptions on Test Statistics and Critical Values}\label{def_sec}

The properties of these tests will depend on the choice of critical value.
The only condition needed for upper bounds on power, stated in the following assumption, is that the critical value be of the same order of magnitude as a critical value based on a least favorable asymptotic distribution where all of the moments bind (i.e. $E(m(W_i,\theta)|X_i)=0$ a.s.).

\begin{assumption}\label{cval_bound_assump}
For some $\eta>0$,
the critical value $\hat c_n=\hat c_n(\theta_n)$ defined in (\ref{test_def_rootn}) or (\ref{test_def_kern}), depending on the weighting and form of the test, satisfies $\hat c_n(\theta_n)>\eta$ with probability approaching one.
\end{assumption}

Assumption \ref{cval_bound_assump} holds for the kernel CvM based test of \citet{lee_testing_2013}, which uses the least favorable null dgp, as well as the tests using instrument based CvM statistics with bounded weights proposed in \citet{andrews_inference_2013}.
Instrument based CvM statistics with variance weights have not been considered in the literature.  In Section \ref{cval_append} of the supplementary appendix, I consider critical values for this case and show that critical values based on the least favorable null dgp will satisfy Assumption \ref{cval_bound_assump}.

Assumption \ref{cval_bound_assump} only gives a lower bound for a critical value.  This gives bounds on the power, but to derive the exact local asymptotic power, we need the following condition, which gives a limiting value for this critical value.  Under mild conditions on the data generating process and sequence of local alternatives, this assumption will also hold for the methods of choosing critical values discussed above.

\begin{assumption}\label{cval_limit_assump}
For the critical value $\hat c_n=\hat c_n(\theta_n)$ defined in (\ref{test_def_rootn}) or (\ref{test_def_rootlogn}), depending on the weighting and form of the test, and some constant $c>0$,
$\hat c_n(\theta_n)\stackrel{p}{\to} c$.
\end{assumption}

The power properties of the test will also depend on the class of
functions $\mathcal{G}$ used as instruments.  I derive power results for
the case where $\mathcal{G}$ consists of kernel functions with different
bandwidths and locations, defined in the following assumption.

\begin{assumption}\label{g_kernel_assump}
For some bounded, nonnegative function $k$ 
with finite support and $\int k(u)\, du>0$,
$\mathcal{G}=\{x\mapsto k((x-\tilde x)/h)|\tilde
x\in\mathbb{R}^{d_X}, h\in\mathbb{R}_+\}$, and
the covering number $N(\varepsilon,\mathcal{G},L_1(Q))$ defined in
\citet{pollard_convergence_1984} satisfies
$\sup_{Q} N(\varepsilon,\mathcal{G},L_1(Q)) \le A\varepsilon^{-W}$,
where the supremum is over all
probability measures.
\end{assumption}

The covering number assumption in Assumption \ref{g_kernel_assump} is a technical condition that allows for uniform convergence of kernel estimates over $x$ and $h$.  A sufficient condition is that the kernel $k$ takes the form $k(x)=r(\|x\|)$ where $r$ is a monotone decreasing function on on $[0,\infty)$ \citep[see][chapter 2, problem 28]{pollard_convergence_1984}.

For CvM statistics, I place the following condition on the measure $\mu$ over which the sample means are integrated.

\begin{assumption}\label{mu_assump}
The measure $\mu$ has bounded support, and has a density $f_\mu(\tilde x, h)$ with respect to the Lebesgue measure on $\mathbb{R}^{d_X}\times [0,\infty)$ that is bounded and continuous.
\end{assumption}

Relaxing this assumption would lead to different power properties, although the general point that $L_p$ statistics perform worse in these models than supremum statistics would go through.

\subsection{Conditions on Data Generating Process}\label{assumptions_sec}

This section presents the main assumptions on the model and dgp used in this paper.
The conditions are similar to those used in
\citet{armstrong_asymptotically_2015},
\citet{armstrong_weighted_2014} and
\citet{armstrong_multiscale_2016}.
I first provide high level conditions, and then verify them for the interval regression model in Section \ref{int_reg_conditions_sec}.  Section \ref{rootn_comparison_sec} provides a discussion of the difference between these conditions and other settings, such as point identified models.
Section \ref{prim_cond_append} in the appendix verifies the conditions in this section for additional settings.
I assume throughout that the data are iid.

I place the following conditions on the data generating process
and the sequence $\theta_n=\theta_0+a_n$.
In these conditions, $\gamma$ is a smoothness parameter that is generally given by the minimum of the number of derivatives of the conditional mean and $2$.  The truncation of the smoothness parameter at $2$ comes from the fact that the test statistics here use positive kernels or instruments.

\begin{assumption}\label{smoothness_assump_multi}
For each $j$, the conditional mean $E(m_j(W_i,\theta_0)|X_i=x)\equiv \bar m_j(\theta_0,x)$
takes its minimum only on a finite set
$\{x|E(m_j(W_i,\theta_0)|X_i=x)=0 \text{ some
  $j$}\}=\mathcal{X}_0=\{x_1,\ldots,x_\ell\}$.  For each $k$ from $1$
to $\ell$, let $J(k)$ be the set of indices $j$ for which
$E(m_j(W_i,\theta_0)|X_i=x_k)=0$.
Assume that there exist neighborhoods $B(x_k)$ of each
$x_k\in\mathcal{X}_0$
such that
the following assumptions hold.
\begin{itemize}
\item[i.)] There exists $\eta>0$ such that, for $\theta$ in a neighborhood of $\theta_0$, we have (a) $\bar m_j(\theta,x)>\eta$ for $j\notin J(k)$ for $x\in B(x_k)$
and (b) $\bar m_j(\theta,x)>\eta$ for all $j$ for $x\notin \cup_{k=1}^\ell B(x_k)$.

\item[ii.)]
For $j\in J(k)$,
$\bar m_j(\theta_0,x)$ is continuous
on
the closure of
$B(x_k)$ and satisfies
\begin{align*}
\sup_{\|x-x_k\|\le \delta} \left\|
  \frac{\bar m_j(\theta_0,x)-\bar m_j(\theta_0,x_k)}{\|x-x_k\|^{\gamma(j,k)}}
  -\psi_{j,k}\left(\frac{x-x_k}{\|x-x_k\|}\right) \right\|
\stackrel{\delta \to 0}{\to} 0
\end{align*}
for some $\gamma(j,k)>0$ and some function
$\psi_{j,k}:\{t\in\mathbb{R}^{d_X}|\|t\|=1\}\to \mathbb{R}$ with
$\overline \psi \ge \psi_{j,k}(t)\ge \underline \psi$ for some
$\overline \psi <\infty$ and $\underline \psi>0$.
For future reference, define
$\gamma=\max_{j,k}\gamma(j,k)$ and $\tilde J(k)=\{j\in J(k)|
\gamma(j,k)=\gamma\}$.

\item[iii.)] $X_i$ has a continuous density $f_X$ on $B(x_k)$.

\item[iv.)] For %
$j\in J(k)$, $s_j^2(x,\theta)\equiv var(m_j(W_i,\theta)|X_i=x)$ is strictly positive and continuous at $(x_k,\theta_0)$.

\item[v.)]
For $x$ in the closure of $B(x_k)$ and $\theta$ in a neighborhood of $\theta_0$,
$\bar m(\theta,x)$ has a derivative as a function of
$\theta$
that is continuous as a
function of $(\theta,x)$.  Let $\bar m_{\theta,j}(\theta,x)$ denote the $j$th row of this derivative matrix (i.e. the derivative of $\bar m_j(\theta,x)$ with respect to $\theta$).
\end{itemize}
\end{assumption}

\begin{assumption}\label{bdd_y_assump_local}
The data are iid and for some fixed $\overline Y<\infty$ and $\theta$ in a
some neighborhood of
$\theta_0$, $|m(W_i,\theta)|\le \overline Y$ with probability one.
\end{assumption}

The deterministic bound in Assumption \ref{bdd_y_assump_local} allows for the use of certain technical results that are useful in the proofs.  It may be possible to relax this assumption, although additional technical arguments would be needed in some places.

The following assumption, which is used for kernel based statistics,
ensures that the kernel estimators do not encounter boundary problems \citep[cf. Assumption 1(iii) in][]{lee_testing_2013}.
\begin{assumption}\label{kern_dens_assump}
$X_i$ has a density $f_X$ that is bounded away from
infinity,
and the weighting function $\omega_j(\theta,x)$ is continuous for all $j$ and, for some $\varepsilon>0$, is equal to zero whenever $f_X(\tilde x)<\varepsilon$ for some $\tilde x$ with $\|\tilde x-x\|<\varepsilon$.
\end{assumption}

\subsection{Discussion and Primitive Conditions for Interval Regression}\label{int_reg_conditions_sec}

In discussing these assumptions, it is useful to keep in mind the interval regression model introduced in the introduction, in which $W_i=(X_i,W_i^L,W_i^H)$ and $m(W_i,\theta)=(W_i^H-(1,X_i')\theta,(1,X_i')\theta-W_i^L)'$.  The following gives a general discussion of these assumptions, with references to the interval regression model as an example.  I then state primitive sufficient conditions in the interval regression model that imply these assumptions with $\gamma=2$.  Section \ref{prim_cond_append} of the appendix gives primitive conditions in additional settings.

The assumptions used here are similar to
the conditions used in \citet{armstrong_asymptotically_2015} to derive the asymptotic distribution and local power of a KS statistic with bounded weights.  In particular, Assumption \ref{smoothness_assump_multi} corresponds to
the version of Assumption 3.1 in \citet{armstrong_asymptotically_2015} used in Section 5 of that paper, in which part (ii) is replaced by Assumption 5.1 in \citet{armstrong_asymptotically_2015}.
Part (i) strengthens the version used in \citet{armstrong_asymptotically_2015} by extending it to a neighborhood of $\theta_0$, and part (v) is an additional condition on the derivative with respect to $\theta$.  These additional conditions are used to derive local power, and are similar to Assumption 7.1 in \citet{armstrong_asymptotically_2015}.

Assumption \ref{smoothness_assump_multi} is the main substantive condition that gives rise to the local power results derived in this paper.
It states that the conditional mean of the moment conditions is equal to zero only at a finite number of points.  In the context of the interval regression model, this holds for $\theta_0$ on the boundary of the identified set when the regression line $x'\theta_0$ is tangent to $E(W_i^H|X_i=x)$ or $E(W_i^H|X_i=x)$ at a finite number of points.  In general, a sufficient condition for this in the case where $X_i$ has compact support is that $\bar m_j(\theta,x)$ takes its minimum on the interior of the support of $X_i$ and $\bar m_j(\theta,x)$ is twice continuously differentiable with a positive definite second derivative matrix at any point where it takes a minimum (see Section \ref{finite_contact_set_sec} in the appendix).

The most natural case where this does not hold is where $E(W_i^H|X_i=x)$ or $E(W_i^L|X_i=x)$ is linear and equal to $(1,x')\theta$ on a nondegenerate interval (the other possibility is for $E(W_i^H|X_i=x)-(1,x')\theta_0$ to be zero on a set with infinitely many elements, but with zero probability, such as with the function $\sin (1/x)$).  This holds in the point identified case where $P(W_i^H=W_i^L|X_i)=1$ for $X_i$ on a nondegenerate interval (and, in particular, in the special case where $W_i^H=W_i^L$ with probability one, leading to the usual linear regression model).  However, when $\theta$ is set identified, this is a knife-edge case: even if $E(W_i^H|X_i)=(1,X_i')\theta_0$ for $X_i$ on a nondegenerate interval for a given $\theta_0$ on the boundary of the identified set, we will typically have $E(W_i^H|X_i=x)=(1,x')\tilde\theta_0$ only on a finite set for $\tilde\theta_0$ close to $\theta_0$.

This is illustrated by Figures \ref{int_reg_smooth_fig} and \ref{int_reg_rootn_fig}, which are taken directly from Section 2.2 of \citet{armstrong_asymptotically_2015}.
Each figure shows the conditional mean $E(W_i^H|X_i=x)$ for some dgp along with regression lines corresponding to particular parameter values $\theta$ (the lower conditional mean $E(W_i^L|X_i=x)$ can be taken to be below the area shown in each figure).  In Figure \ref{int_reg_smooth_fig}, the regression line $(1,x')\theta=\theta_1+\theta_2x$ is tangent to the conditional mean at a single point, and Assumption \ref{smoothness_assump_multi} holds for the parameter $\theta$.  In Figure \ref{int_reg_rootn_fig}, the regression line $\theta_{a,1}+\theta_{a,2}x$ corresponding to the parameter $\theta_a$ is equal to $E(W_i^H|X_i=x)$ on a nondegenerate interval, so that Assumption \ref{smoothness_assump_multi} does not hold.  However, at nearby parameter values such as $\theta_b$, the regression line is equal to $E(W_i^H|X_i=x)$ at a single point and Assumption \ref{smoothness_assump_multi} holds.
See Section 2.2 of \citet{armstrong_asymptotically_2015} for further discussion.

In the case where $\bar m(\theta_0,x)$ is twice continuously differentiable in $x$, part (ii) of Assumption \ref{smoothness_assump_multi} follows from a second order Taylor expansion at $x_k$, so long as the second derivative matrix is positive definite.  In this case, Assumption \ref{smoothness_assump_multi} holds with $\gamma=2$ and $\psi_{j,k}(u)=u'V_j(x_k)u/2$, where $V_j(x_k)$ is the second derivative matrix of $x\mapsto \bar m(\theta_0,x)$ at $x_k$.  In the interval regression model, the second derivative of $m_1(\theta_0,x)$ is equal to the second derivative of $E(W_i^H|X_i=x)$ (and similarly for $m_2(\theta_0,x)$ and $-E(W_i^L|X_i=x)$), so this translates directly to an assumption of a positive definite second derivative matrix of $E(W_i^H|X_i=x)$.
In the case where $\bar m(\theta_0,x)$ is Lipschitz continuous, part (ii) of Assumption \ref{smoothness_assump_multi} will hold with $\gamma=1$ if we place additional regularity conditions on the one-sided directional derivative of $\bar m(\theta_0,x)$. 
The parameter $\theta$ in Figure \ref{int_reg_smooth_fig} illustrates a case where Assumption \ref{smoothness_assump_multi} holds with $\gamma=2$, while the parameter $\theta_b$ in Figure \ref{int_reg_rootn_fig} illustrates a case where Assumption \ref{smoothness_assump_multi} holds with $\gamma=1$.
See Theorem \ref{int_reg_lipschitz_thm} in Section \ref{int_reg_sec} of the appendix for a formal statement in the interval regression model.

The remaining assumptions are regularity conditions that translate easily to primitive objects in the case of interval regression.  For part (v), note that  $\bar m_{\theta,1}(\theta,x)=-(1,x')$ and $\bar m_{\theta,2}(\theta,x)=(1,x')$, which are clearly continuous, so this assumption holds without further conditions on the dgp.

The following gives a formal statement of primitive conditions for the interval regression model in the case where the conditional means are twice differentiable.  The proof of this result uses the ideas in the discussion above, and is given in Section \ref{int_reg_sec} of the appendix.

\begin{theorem}\label{int_reg_second_deriv_thm}
Suppose that the following conditions hold.
\begin{itemize}
\item[i.)] The conditional means $E(W_i^H|X_i=x)$ and $E(W_i^L|X_i=x)$ are twice differentiable with continuous second derivatives, $X_i$ has a continuous density and compact support, and $W_i^H$ and $W_i^L$ are bounded from above and below by finite constants.

\item[ii.)]
For any point $\tilde x$ such that $E(W_i^H|X_i=\tilde x)=(1,\tilde x')\theta_0$,
$\tilde x$ is in the interior of the support of $X_i$, $var(W_i^H|X_i=x)$ is positive and continuous at $\tilde x$ and
$E(W_i^H|X_i=x)$ has a positive definite second derivative matrix at $\tilde x$.
The same holds for $E(W_i^L|X_i=x)$ with ``positive definite'' replaced by ``negative definite.''
\end{itemize}

Then Assumptions
\ref{smoothness_assump_multi},
and \ref{bdd_y_assump_local} hold, with $\gamma=2$ in Assumption \ref{smoothness_assump_multi}.
\end{theorem}

\subsection{Comparison with Conditions Leading to Parametric Rates}\label{rootn_comparison_sec}

Under Assumption \ref{smoothness_assump_multi}, the conditional mean $\bar m_j(\theta_0,x)=E(m(W_i,\theta_0)|X_i=x)$ is minimized on a finite set, and behaves like $\|x-x_k\|^\gamma$ for $x_k$ in this set and nearby $x$.  As shown in Section \ref{loc_power_sec} below, this leads to power against alternatives that approach the identified set at a slower than $\sqrt{n}$ rate.  As suggested by the intuitive description of these results in Section \ref{intuition_sec}, this arises because, as $\theta_n$ approaches the identified set, the conditional moment inequalities are violated on a set with vanishing probability.  This is similar to the case of nonparametric kernel estimation, in which bias-variance tradeoffs and the level of smoothness determine the rate of convergence \citep[see, e.g.,][]{wasserman_all_2007}.

In contrast, \citet{andrews_inference_2013}, \citet{kim_kyoo_il_set_2008} and \citet{lee_testing_2013} consider the case where $\bar m_j(\theta_0,x)$ is minimized on a nondegenerate interval.  In this case, the portion of the support of $X_i$ on which the inequality is violated does not vanish as $\theta_n$ approaches the boundary of the identified set.  This leads to nontrivial power at alternatives that approach the null at a $1/\sqrt{n}$ rate.
As discussed above, the latter case is typical under point identification and holds by construction with moment equalities, but it corresponds to a knife-edge case under set identification.

To understand these issues, it is helpful to make a comparison to the case of nonparametric regression, where kernel estimators can converge at a faster rate if certain derivatives are equal to zero.  For example, local linear estimators converge at a $n^{2/5}$ rate when the conditional mean is twice differentiable with nonzero derivative and a bandwidth is used that decreases like $n^{-1/5}$, but a faster rate can be obtained when the second derivative is zero, using a bandwidth sequence that converges more slowly.
The typical approach to formalizing the notion that the optimal rate under a second derivative condition is $n^{2/5}$ is to use a minimax criterion, in which one requires good performance uniformly over all dgps with a certain bound on the second derivative \citep[see][for a formulation of this approach for local linear estimators]{fan_local_1993}.
Minimax results of this form are often cited in econometrics when making claims of optimality of nonparametric estimators (for example, \citealt{ichimura_chapter_2007} cite minimax bounds in \citealt{stone_optimal_1982}).

In the present setting, the results in this paper show that, even though $\sqrt{n}$ local power is possible in certain special cases, the minimax (worst-case) power is slower than $\sqrt{n}$ when one only places bounds on derivatives of certain objects.  In particular, while a bound on the second derivative of $E(W_i^H|X_i=x)$ and $E(W_i^L|X_i=x)$ does not imply Assumption \ref{smoothness_assump_multi} in the interval regression model, one can construct a dgp such that Assumption \ref{smoothness_assump_multi} holds with $\gamma=2$ for any nonzero bound on the second derivative.  Thus, the minimax rates of local power for CvM statistics under a bound on the second derivative are at least as slow as the rates derived in this paper, which are slower than $\sqrt{n}$.  Since the results in \citet{armstrong_weighted_2014} show that the corresponding KS statistics achieve a better rate for local alternatives uniformly over dgps with a bound on the second derivative (and additional regularity conditions), this means that the KS statistic is preferred to the CvM statistic under a minimax criterion in this class.  See Section \ref{minimax_rates_sec} in the appendix for formal statements.

\section{Local Power Results}\label{loc_power_sec}

This section derives local power results for CvM
test statistics under the conditions given in Section \ref{larger_assumptions_sec}.

\subsection{Instrument Based CvM Statistics with Bounded Weights}\label{inst_bdd_sec}

To describe the power results, we need some additional notation.  Define
\begin{align*}
&\lambda_{\text{bdd}}(a,j,k,p)
=\lambda_{\text{bdd}}(a,\bar m_{\theta,j}(\theta_0,x_k),\psi_{j,k},f_X(x_k),f_\mu(x_k,0),p)  \\
&\equiv \int\int \left| \int
  \left[\|x\|^\gamma \psi_{j,k}\left(\frac{x}{\|x\|}\right)
    + \bar m_{\theta,j}(\theta_0,x_k)a \right]k((x-\tilde x)/h)f_X(x_k)\,
  dx\right|_{-}^p f_\mu(x_k,0)\, d\tilde x\, dh.
\end{align*}

\begin{theorem}\label{bdd_weight_lim_thm}
Let
\begin{align*}
a_n=a n^{-\gamma/\{2[d_X+\gamma+(d_X+1)/p]\}}
\end{align*}
for some vector $a$.
Under Assumptions
\ref{g_kernel_assump},
\ref{mu_assump},
\ref{smoothness_assump_multi},
and
\ref{bdd_y_assump_local},
\begin{align*}
n^{1/2}T_{n,p,1,\mu}(\theta_0+a_n)
\stackrel{p}{\to} \left(\sum_{k=1}^{|\mathcal{X}_0|}\sum_{j\in \tilde J(k)}
  \lambda_{\text{bdd}}(a,j,k,p)\right)^{1/p}
\equiv r_{\text{bdd}}(a)
\end{align*}
where $r_{\text{bdd}}(a)\to 0$ as $a\to 0$.
\end{theorem}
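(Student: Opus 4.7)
The plan is to decompose $n^{1/2} E_n m_j(W,\theta_n) g(X) = A_{n,j}(g) + B_{n,j}(g)$ into deterministic drift $A_{n,j}(g) = n^{1/2} E[m_j(W,\theta_n) g(X)]$ and empirical process $B_{n,j}(g) = \mathbb{G}_n[m_j(\cdot,\theta_n) g(\cdot)]$, and show that the drift produces the claimed limit while the fluctuation is negligible. The three main steps are localization, drift computation via Taylor expansion and change of variables, and $L_p$-control of the stochastic fluctuation.

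For localization, Assumptions~\ref{smoothness_assump_multi}(i) and \ref{diff_m_assump} imply that for small $\delta > 0$ there is $c > 0$ such that $E[m_j(W,\theta_n) g(X)] \geq c$ for $n$ large whenever $g(x) = k((x-\tilde x)/h)$ with $\tilde x \notin \cup_k B(x_k, \delta)$ or $h > \delta$. Then $A_{n,j}(g) \geq c\sqrt{n} \to \infty$ on this ``far'' region, and the VC-type covering bound from Assumption~\ref{g_kernel_assump} combined with the boundedness in Assumption~\ref{bdd_y_assump_local} yields $\sup_g |B_{n,j}(g)| = O_p(1)$, so $|A+B|_-^p$ vanishes there uniformly with high probability.

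On each localized neighborhood of $x_k$, I would Taylor-expand via Assumptions~\ref{smoothness_assump_multi}(ii) and \ref{diff_m_assump} as
\[
\bar m_j(\theta_n, x) = \|x-x_k\|^{\gamma(j,k)}\psi_{j,k}\!\bigl(\tfrac{x-x_k}{\|x-x_k\|}\bigr) + \bar m_{\theta,j}(\theta_0, x_k)\, a_n + o(\|x-x_k\|^{\gamma(j,k)}) + o(\|a_n\|),
\]
and apply the change of variables $u = (x-x_k)/r_n^{1/\gamma}$, $v = (\tilde x - x_k)/r_n^{1/\gamma}$, $\tilde h = h/r_n^{1/\gamma}$ with $a_n = a r_n$. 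For $j \in \tilde J(k)$ this gives $A_{n,j}(g) = \alpha_n[\tilde a_{j,k}(v, \tilde h) + o(1)]$, where $\alpha_n := n^{1/2} r_n^{1+d_X/\gamma}$ and $\tilde a_{j,k}(v, \tilde h) = \int [\|u\|^\gamma \psi_{j,k}(u/\|u\|) + \bar m_{\theta,j}(\theta_0, x_k) a]\, k((u-v)/\tilde h) f_X(x_k)\, du$ is the inner integrand of $\lambda_{\text{bdd}}(a, j, k, p)$. Continuity of $f_X$ and $f_\mu$ (Assumptions~\ref{smoothness_assump_multi}(iii), \ref{mu_assump}) yields $d\mu(g) = f_\mu(x_k, 0)\, r_n^{(d_X+1)/\gamma}(1+o(1))\, dv\, d\tilde h$. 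Since $r_n$ is chosen so that $\alpha_n^p\, r_n^{(d_X+1)/\gamma} = 1$, the scaled drift integral converges to $\sum_k \sum_{j\in \tilde J(k)} \lambda_{\text{bdd}}(a, j, k, p)$. For $j \in J(k)\setminus \tilde J(k)$ with $\gamma(j,k) < \gamma$, the leading $\|x-x_k\|^{\gamma(j,k)}$ term scales like $r_n^{\gamma(j,k)/\gamma} \gg r_n$, dominating the $a_n$-drift and keeping $A_{n,j} \geq 0$, so those indices contribute $0$. The claim $r_{\text{bdd}}(a) \to 0$ as $a \to 0$ follows because at $a = 0$, $\tilde a_{j,k} \geq 0$, making $|\tilde a_{j,k}|_- \equiv 0$.

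The stochastic part is the main technical hurdle. Work in scaled coordinates and split into a large box $\mathcal{D}_M = \{\|v\|, \tilde h \leq M\}$ and its complement. On $\mathcal{D}_M^c$ for $M$ large, either $\tilde a_{j,k}$ is large-positive so the uniform control $\sup_g |B_{n,j}|/\alpha_n = o_p(1)$ (using Assumption~\ref{g_kernel_assump}) prevents a sign flip, or the relevant scaled measure is too small to contribute. On $\mathcal{D}_M$, use convexity of $|\cdot|_-^p$ to write $\bigl||\alpha+\beta|_-^p - |\alpha|_-^p\bigr| \leq p(|\alpha|_-^{p-1} + |\alpha+\beta|_-^{p-1})|\beta|$, then apply H\"older:
\[
\Bigl|\!\int_{\mathcal{D}_M}\![|A+B|_-^p - |A|_-^p]\,d\mu\Bigr|
\lesssim \Bigl[\int |A|_-^p d\mu + \int_{\mathcal{D}_M}|B|^p d\mu\Bigr]^{(p-1)/p} \Bigl[\int_{\mathcal{D}_M}|B|^p d\mu\Bigr]^{1/p}.
\]
A Rosenthal moment inequality gives $E|B_{n,j}(g)|^p \lesssim \mathrm{Var}(m_j g)^{p/2} + n^{1-p/2} E|m_j g|^p \lesssim h^{pd_X/2} + n^{1-p/2} h^{d_X}$ using $|m_j g| \leq C$. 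On $\mathcal{D}_M$ after rescaling this integrates to $o(1)$, giving $\int_{\mathcal{D}_M}|B|^p d\mu = o_p(1)$ by Markov, whence the H\"older bound is $o_p(1)$. The key subtlety here is that the raw $L_p$ fluctuation $E\int|B_{n,j}|^p d\mu$ is $O(1)$, not $o(1)$; one must exploit the $\alpha_n^{-p}$ scaling absorbed into the scaled measure on the localized region, together with uniform empirical-process control on the tail, to render the fluctuation effect $o_p(1)$ relative to the $O(1)$ drift.
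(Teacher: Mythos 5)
Your decomposition into drift $A_{n,j}(g)=\sqrt n\,E[m_j(W,\theta_n)g(X)]$ plus fluctuation $B_{n,j}(g)$, and the rescaled drift computation, are essentially the same route the paper takes (Lemmas \ref{inst_equiv_lemma_2} and \ref{drift_approx_bdd_lemma}); the drift part of your argument is fine. The gap is in your treatment of small bandwidths. Your localization step asserts $E[m_j(W,\theta_n)g(X)]\ge c$, and hence $A_{n,j}(g)\ge c\sqrt n$, whenever $\tilde x\notin\cup_k B(x_k,\delta)$ or $h>\delta$. That is false once $h$ is small: since $g=k((\cdot-\tilde x)/h)$ has support of volume $O(h^{d_X})$, you only get $E[m_j g]\ge c\,E g\asymp h^{d_X}$, so $A_{n,j}(g)\asymp\sqrt n\,h^{d_X}$ can be arbitrarily small and $\sup_g|B_{n,j}(g)|=O_p(1)$ does not force $|A+B|_-=0$ on the ``far'' region. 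The same problem re-enters your $\mathcal D_M^c$ step: $\mu\bigl(\mathcal D_M^c\cap\{\|\tilde x-x_k\|\le\delta,\,h\le\delta\}\bigr)=O(1)$, not $o(1)$, and on the slice of $\mathcal D_M^c$ with $\|v\|$ large and $\tilde h$ small, the scaled drift $\tilde a_{j,k}(v,\tilde h)\approx\bigl[\|v\|^\gamma\psi_{j,k}(v/\|v\|)+\bar m_{\theta,j}(\theta_0,x_k)a\bigr]f_X(x_k)\tilde h^{d_X}\int k$ vanishes as $\tilde h\to 0$, so neither branch of your dichotomy (``drift large-positive'' or ``measure too small'') covers it. As a result, the H\"older/Rosenthal bound over that region yields only $O_p(1)$, not the claimed $o_p(1)$.

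The step that closes this, and what the paper actually does in Lemma \ref{inst_equiv_lemma_2}, is a split at a \emph{fixed} bandwidth $\varepsilon$ rather than a scaled box $\mathcal D_M$. For $h>\varepsilon$ the kernel averages over a set of diameter $\gtrsim\varepsilon$, so Assumptions \ref{smoothness_assump_multi} and \ref{diff_m_assump} give $E[m_j g]\ge\eta\,E g\gtrsim\eta\,\mathrm{var}(m_j g)$ uniformly, and the uniform empirical-process bound of Lemma \ref{unif_conv_lemma} (namely $\sup_g|(E_n-E)m_j g|/\mathrm{var}(m_j g)=o_P(1)$) then yields $E_n m_j g>0$ for all such $g$ with probability approaching one; these $g$ contribute nothing to the negative-part integral. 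For $h\le\varepsilon$ no positivity is claimed: one bounds $\sqrt n\,\bigl|T_{n,p,1,\mu}(\theta_n)-\bigl[\int\sum_j|\sqrt n\,E m_j g|_-^p d\mu\bigr]^{1/p}\bigr|$ by the $L_p(\mu)$ norm of $\sqrt n(E_n-E)m_j g$ restricted to the set where a sign flip can occur, and then uses Fubini together with the Bernstein-type pointwise moment bound of Lemma \ref{lp_bound_lemma} to get $E\int_{h\le\varepsilon}|\sqrt n(E_n-E)m_j g|^p\,d\mu\le C\,\mu(\{h\le\varepsilon\})$, which can be made as small as desired by shrinking $\varepsilon$ (Assumption \ref{mu_assump}). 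Your proposal is missing this Fubini/small-$\mu$-measure device at a \emph{fixed} small-$h$ cutoff; the ``drift dominates uniformly'' claim that you use in its place is the step that fails.
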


Theorem \ref{bdd_weight_lim_thm} has immediate consequences for the power
of tests based on CvM statistics with bounded weightings.

\begin{theorem}\label{bdd_weight_power_thm}
If, in addition to the conditions of Theorem \ref{bdd_weight_lim_thm},
Assumption \ref{cval_bound_assump} holds, the power
\begin{align*}
E\phi_{n,p,1,\mu}(\theta_0+a_n)
\end{align*}
of the test $\phi_{n,p,1,\mu}(\theta_0+a_n)$
will converge to zero for $r_{\text{bdd}}(a)<c$.
If $a$ is close enough to zero, $r_{\text{bdd}}(a)$ will be less than $c$
so that the power will converge to zero.  If, in
addition, Assumption \ref{cval_limit_assump} holds,
the power
will converge to $1$ for
$r_{\text{bdd}}(a)>c$.
\end{theorem}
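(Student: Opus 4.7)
The plan is to apply Theorem \ref{bdd_weight_lim_thm} to reduce the claim to an elementary comparison between the probability limit of the scaled statistic and the critical value, and then to handle the two conclusions with a Slutsky-style argument.

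First I would invoke Theorem \ref{bdd_weight_lim_thm} to obtain
$$\sqrt{n}\,T_{n,p,1,\mu}(\theta_0+a_n)\stackrel{p}{\to} r_{\text{bdd}}(a),$$
along with the fact, also recorded in that theorem, that $r_{\text{bdd}}(a)\to 0$ as $a\to 0$. All of the heavy lifting (the change of variables exposing the rate $n^{-\gamma/\{2[d_X+\gamma+(d_X+1)/p]\}}$, the localization of the integral near the contact set $\mathcal{X}_0$, and the control of the stochastic error via the covering-number bound from Assumption \ref{g_kernel_assump}) is already absorbed into that earlier convergence.

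For the first conclusion, Assumption \ref{cval_bound_assump} supplies a constant $c>0$ with $\hat c_{n,p,1,\mu}\ge c$ for all $n$ large (on an event of probability tending to one, if one prefers the stochastic reading). Whenever $r_{\text{bdd}}(a)<c$, the inclusion
$$\{\sqrt{n}\,T_{n,p,1,\mu}(\theta_0+a_n)>\hat c_{n,p,1,\mu}\}\subseteq \{\sqrt{n}\,T_{n,p,1,\mu}(\theta_0+a_n)>c\}$$
combined with the convergence in probability to a limit strictly below $c$ shows that the right-hand event has probability tending to zero, hence $E\phi_{n,p,1,\mu}(\theta_0+a_n)\to 0$. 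Since $r_{\text{bdd}}(a)\to 0$ as $a\to 0$, the hypothesis $r_{\text{bdd}}(a)<c$ automatically holds for $a$ sufficiently close to zero, delivering the second sentence of the statement.

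For the last conclusion, Assumption \ref{cval_limit_assump} gives $\hat c_{n,p,1,\mu}\stackrel{p}{\to} c$. Combining with $\sqrt{n}\,T_{n,p,1,\mu}(\theta_0+a_n)\stackrel{p}{\to} r_{\text{bdd}}(a)$ via Slutsky's theorem yields
$$\sqrt{n}\,T_{n,p,1,\mu}(\theta_0+a_n)-\hat c_{n,p,1,\mu}\stackrel{p}{\to} r_{\text{bdd}}(a)-c,$$
which is strictly positive when $r_{\text{bdd}}(a)>c$, so the rejection probability tends to one. The main obstacle in this argument is really the appeal to Theorem \ref{bdd_weight_lim_thm}; once that probabilistic limit is in hand, the remaining deductions are routine applications of convergence in probability and Slutsky's theorem, and there is no genuine new difficulty to confront in this corollary.
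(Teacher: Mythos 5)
Your proposal is correct and matches what the paper intends: the paper explicitly states that this result follows "immediately" from Theorem \ref{bdd_weight_lim_thm} and gives no further proof, and your deduction via convergence in probability plus Assumptions \ref{cval_bound_assump}/\ref{cval_limit_assump} and Slutsky is precisely the routine argument that "immediate" points to.
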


The $n^{-\gamma/\{2[d_X+\gamma+(d_X+1)/p]\}}$ rate for instrument based CvM statistics with bounded weights
is slower than
the $n^{-\gamma/\{2[d_X+\gamma]\}}$ rate derived for the corresponding KS test
in Theorem 14 of
\citet{armstrong_asymptotically_2015} (for $\gamma=2$)
and
Theorem 5.1 of
\citet{armstrong_weighted_2014} ($\alpha$ from that paper plays the role
of $\gamma$ here).  Note also that local power increases as $p$ increases,
and becomes aribrarily close to the rate for the KS test as $p$
increases.

\subsection{Instrument Based CvM Statistics with Variance Weights}\label{inst_var_sec}

Define
\begin{align*}
&\lambda_{\text{var}}(a,j,k,p)  \\
&\equiv \int\int \left| \int
  \left[\|x\|^\gamma \psi_{j,k}\left(\frac{x}{\|x\|}\right)
    + \bar m_{\theta,j}(\theta_0,x_k)a\right]
w_j(x_k) h^{-d_X/2}k((x-\tilde x)/h)f_X(x_k)\,
  dx\right|_{-}^p  \\
& f_\mu(x_k,0)\, d\tilde x\, dh
\end{align*}
where
$w_j(x_k)\equiv(s_j^2(x_k,\theta_0)f_X(x_k)\int k(u)^2\, du)^{-1/2}$.

\begin{theorem}\label{var_weight_lim_thm}
Let
\begin{align*}
a_n=a n^{-\gamma/\{2[d_X/2+\gamma+(d_X+1)/p]\}}.
\end{align*}
Suppose that %
$\sigma_n(n/\log n)^{1/2}\to \infty$
and Assumptions
\ref{g_kernel_assump},
\ref{mu_assump},
\ref{smoothness_assump_multi},
and
\ref{bdd_y_assump_local}
hold.  Then
\begin{align*}
n^{1/2}T_{n,p,(\hat \sigma\vee \sigma_n)^{-1},\mu}(\theta_0+a_n)
\le 
\left(
 \sum_{k=1}^{|\mathcal{X}_0|}\sum_{j\in J(k)}
 \lambda_{\text{var}}(a,j,k,p) \right)^{1/p}
+o_p(1)
\equiv r_{\text{var}}(a)+o_p(1)
\end{align*}
where $r_{\text{var}}(a)\to 0$ as $a\to 0$.  If, in addition,
$\sigma_n n^{d_X/\{4[d_X/2+\gamma+(d_X+1)/p]\}}\to 0$,
the above display
will hold with the inequality replaced by equality.
\end{theorem}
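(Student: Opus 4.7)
The plan is to follow the argument for Theorem \ref{bdd_weight_lim_thm} while tracking the additional factor $(\hat\sigma_j(\theta_n,g)\vee\sigma_n)^{-1}$ that comes from variance weighting. First I localize: decompose the $\mu$-integration over $(\tilde x,h)$ into a ``local'' piece $\|\tilde x-x_k\|\le Cr_n^{1/\gamma}$, $h\le Cr_n^{1/\gamma}$ around some $x_k\in\mathcal{X}_0$, and a complementary piece. On the complement, Assumptions \ref{smoothness_assump_multi}(i) and \ref{diff_m_assump} keep $Em_j(W_i,\theta_n)g(X_i)$ bounded away from zero uniformly; combined with $\sqrt{n}$-consistency of the sample mean and the floor $\hat\sigma\vee\sigma_n\ge\sigma_n$, the integrand $|\cdot|_-$ vanishes there with probability tending to one, so the complement contributes $o_p(1)$ to $\sqrt{n}T_{n,p,(\hat\sigma\vee\sigma_n)^{-1},\mu}$.

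Next I approximate the weighting. An empirical-process bound using the VC-type covering number in Assumption \ref{g_kernel_assump} and the envelope in Assumption \ref{bdd_y_assump_local} yields $\sup_{g\in\mathcal{G}}|\hat\sigma_j(\theta_n,g)^2-\sigma_j(\theta_0,g)^2|=O_p(\sqrt{\log n/n})$ (continuity of $s_j^2$ in $\theta$ absorbs the drift to $\theta_n$). Expanding $\sigma_j(\theta_0,g)^2=E[m_j^2 k((X-\tilde x)/h)^2]-(Em_j k)^2$ using continuity of $s_j^2$ and $f_X$ (Assumption \ref{smoothness_assump_multi}(iii)--(iv)) yields $\sigma_j(\theta_0,g)=w_j(x_k)^{-1}h^{d_X/2}(1+o(1))$ on the local piece. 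Because $\sigma_n\sqrt{n/\log n}\to\infty$, the stochastic error in $\hat\sigma$ is $o(\sigma_n)$, so uniformly $\hat\sigma\vee\sigma_n=[\sigma_j(\theta_0,g)\vee\sigma_n](1+o_p(1))$.

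Expanding the population mean via Taylor in $\theta$ (Assumption \ref{diff_m_assump}) and the smoothness of Assumption \ref{smoothness_assump_multi}(ii) gives, for $j\in\tilde J(k)$, $Em_j(W_i,\theta_n)g(X_i)=\int[\|x-x_k\|^\gamma\psi_{j,k}((x-x_k)/\|x-x_k\|)+\bar m_{\theta,j}(\theta_0,x_k)a_n]k((x-\tilde x)/h)f_X(x_k)\,dx$ plus lower order. For $j\in J(k)\setminus\tilde J(k)$, $\gamma(j,k)<\gamma$ forces $\|x-x_k\|^{\gamma(j,k)}$ to be of strictly larger order than $a_n=ar_n$ after rescaling, so the integrand is asymptotically nonnegative and its negative part converges to zero; such indices do not contribute in the limit, which is why only $\tilde J(k)$ terms drive the equality while the $J(k)$ sum still gives a valid upper bound. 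Applying the change of variables $x=x_k+r_n^{1/\gamma}u$, $\tilde x=x_k+r_n^{1/\gamma}v$, $h=r_n^{1/\gamma}\tilde h$ as in Section \ref{intuition_sec}, Jacobians give $r_n^{d_X/\gamma}$ twice and $r_n^{1/\gamma}$, the inner bracket scales as $r_n$, and the variance weighting contributes $h^{-d_X/2}=r_n^{-d_X/(2\gamma)}\tilde h^{-d_X/2}$; collecting these produces an overall factor $r_n^{[\gamma+d_X/2+(d_X+1)/p]/\gamma}$ after the $p$-th root, and equating to $n^{-1/2}$ reproduces the stated $r_n$ together with the limit $r_{\text{var}}(a)$.

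The main obstacle, and the source of the inequality-versus-equality distinction, is the $\sigma_n$-truncation. When $\sigma_j(\theta_0,g)<\sigma_n$, i.e.\ $h^{d_X/2}\lesssim\sigma_n w_j(x_k)$, the weighting is capped at $\sigma_n^{-1}$ rather than the ideal $h^{-d_X/2}w_j(x_k)$, strictly shrinking the integrand and yielding only an upper bound in general. Under the extra condition $\sigma_n n^{d_X/\{4[d_X/2+\gamma+(d_X+1)/p]\}}\to 0$ one has $\sigma_n=o(r_n^{d_X/(2\gamma)})$, so after rescaling the clipped region collapses into $\tilde h\to 0$, a Lebesgue-null set; dominated convergence then upgrades the bound to equality. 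The residual stochastic term $\sqrt{n}(E_n-E)m_j(W_i,\theta_n)g(X_i)(\hat\sigma\vee\sigma_n)^{-1}$ on the local piece is controlled by a uniform empirical-process inequality with envelope of order $\sigma_n^{-1}$ and the VC bound from Assumption \ref{g_kernel_assump}; since the local piece has Lebesgue volume $O(r_n^{(d_X+1)/\gamma})=o(1)$ and $\sigma_n^{-1}=o((n/\log n)^{1/2})$, this remainder is $o_p(1)$ after the $n^{1/2}$ scaling.
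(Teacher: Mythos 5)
Your overall architecture (replace $\hat\sigma$ by $\sigma$, then $E_n$ by $E$, then localize and rescale at the $r_n^{1/\gamma}$ scale, with the $\sigma_n$-cap giving the inequality that becomes equality when $\sigma_n = o(r_n^{d_X/(2\gamma)})$) matches the paper's, and your scaling calculation $r_n^{[\gamma+d_X/2+(d_X+1)/p]/\gamma}\asymp n^{-1/2}$ is correct. But there is a genuine gap in the localization step.

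You decompose directly into the local piece $\{\|\tilde x-x_k\|\le Cr_n^{1/\gamma},\ h\le Cr_n^{1/\gamma}\}$ and its complement, and assert that on the complement ``Assumptions \ref{smoothness_assump_multi}(i) and \ref{diff_m_assump} keep $Em_j(W_i,\theta_n)g(X_i)$ bounded away from zero uniformly.'' This is false. The complement of that $r_n^{1/\gamma}$-scale box still contains points $(\tilde x,h)$ arbitrarily close to $(x_k,0)$ — for instance $\|\tilde x-x_k\|\asymp r_n^{1/(2\gamma)}$ or $h\asymp r_n^{2/\gamma}$ — where $E m_j(W_i,\theta_n)k((X_i-\tilde x)/h)$ tends to zero, not a positive constant. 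Worse, the downstream claim that ``the integrand vanishes there w.p.a.1'' fails precisely in the region $\{\|\tilde x-x_k\|>Cr_n^{1/\gamma}\}\cap\{h\text{ very small}\}$: there the signal-to-noise ratio $E m_j g/\sigma_j(\theta_n,\tilde x,h)\asymp \|\tilde x-x_k\|^\gamma h^{d_X/2}\to 0$ as $h\to 0$, so the drift does \emph{not} dominate the fluctuation even after scaling by $\sqrt{\log n/n}$, and the sample mean goes negative with non-vanishing probability on part of this region. So the complement's contribution cannot be dismissed by a uniform positivity argument.

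What saves the statement — and what the paper actually does in Lemma \ref{inst_equiv_lemma_2} — is a two-scale argument: first fix an $\varepsilon>0$ and an auxiliary sequence $\tilde\sigma_n=o(\sigma_n)$, show via a uniform empirical-process bound (Lemma \ref{unif_conv_lemma}) that the sample mean is positive uniformly on $\{h>\varepsilon,\ \sigma_j(\theta_n,\cdot)\ge\tilde\sigma_n\}$ w.p.a.1; then, for the residual regions $\{h\le\varepsilon\}$ and $\{\sigma_j<\tilde\sigma_n\}$, do \emph{not} argue pointwise positivity but instead bound the $L^p(\mu)$-contribution in expectation, via Fubini together with the pointwise Bernstein-type moment bound of Lemma \ref{lp_bound_lemma}, using that $\mu(\{h\le\varepsilon\})\to 0$ and $\tilde\sigma_n/\sigma_n\to 0$. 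Only after replacing $E_n$ by $E$ does the $r_n^{1/\gamma}$-scale localization become legitimate, and at that point it is a purely deterministic statement about the drift (your change-of-variables and dominated-convergence steps, which are fine). Without this split you have no control over the small-$h$ slab, which is exactly where the CvM integral differs qualitatively from the KS supremum: its contribution is small because the measure is small, not because the integrand is zero.

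Two smaller remarks. Your uniform bound $\sup_g|\hat\sigma_j^2-\sigma_j^2|=O_p(\sqrt{\log n/n})$, normalized by the floor $\sigma_n$, serves the same purpose as the paper's ratio-form Lemma \ref{sigma_unif_conv_lemma} and is acceptable. And your reading of the $J(k)$ versus $\tilde J(k)$ discrepancy (indices with $\gamma(j,k)<\gamma$ contribute zero in the limit) is the right one, though as in Lemma \ref{drift_approx_var_lemma} this needs a short argument to show the corresponding regions of $(\tilde x, h)$ wash out after the change of variables.
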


The result has immediate consequences for the power of tests based on CvM
statistics with truncated variance weightings.

\begin{theorem}

Let $a_n$ be defined as in Theorem \ref{var_weight_lim_thm} and
suppose that the conditions of that theorem and
Assumption \ref{cval_bound_assump} hold.  The power 
\begin{align*}
E\phi_{n,p,(\sigma\vee\sigma_n)^{-1},\mu}(\theta_0+a_n)
\end{align*}
of the test $\phi_{n,p,(\sigma\vee\sigma_n)^{-1},\mu}(\theta_0+a_n)$
will converge to zero for $r_{\text{var}}(a)<c$.
For $a$ close enough to 0, $r_{\text{var}}(a)$ will be less than $c$ so that the
power will converge to zero.
If, in addition, Assumption \ref{cval_limit_assump} holds and
$\sigma_n n^{d_X/\{4[d_X/2+\gamma+(d_X+1)/p]\}}\to 0$,
the 
power will converge to $1$ for
$r_{\text{var}}(a)>c$.
\end{theorem}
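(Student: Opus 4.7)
The plan is to derive all three conclusions as essentially immediate corollaries of Theorem \ref{var_weight_lim_thm} combined with the critical-value assumptions. Throughout, write $S_n \equiv n^{1/2} T_{n,p,(\hat\sigma \vee \sigma_n)^{-1},\mu}(\theta_0 + a_n)$, so that by definition \eqref{test_def_rootn} the test rejects precisely when $S_n > \hat c$, where $\hat c$ denotes the associated critical value $\hat c_{n,p,(\hat\sigma\vee\sigma_n)^{-1},\mu}$.

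For the zero-power claim, Assumption \ref{cval_bound_assump} supplies a constant $c > 0$ with $\hat c \ge c$ eventually, and Theorem \ref{var_weight_lim_thm} supplies the one-sided bound $S_n \le r_{\text{var}}(a) + o_p(1)$. Consequently, whenever $r_{\text{var}}(a) < c$, we have $P(S_n > \hat c) \le P(r_{\text{var}}(a) + o_p(1) > c) \to 0$. The assertion that the power vanishes for $a$ close enough to zero then follows from the continuity property $r_{\text{var}}(a) \to 0$ as $a \to 0$, which is part of the conclusion of Theorem \ref{var_weight_lim_thm}: simply choose $a$ small enough that $r_{\text{var}}(a) < c$.

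For the power-to-one claim, Assumption \ref{cval_limit_assump} strengthens the critical-value control to $\hat c \stackrel{p}{\to} c$, and the additional rate condition $\sigma_n n^{d_X/\{4[d_X/2 + \gamma + (d_X+1)/p]\}} \to 0$ is exactly the hypothesis under which Theorem \ref{var_weight_lim_thm} upgrades its one-sided inequality to the two-sided equality $S_n = r_{\text{var}}(a) + o_p(1)$. When $r_{\text{var}}(a) > c$, a routine $\varepsilon$-argument (take $\varepsilon = (r_{\text{var}}(a) - c)/3$ and absorb the $o_p(1)$ terms appearing on both sides of the rejection inequality) yields $P(S_n > \hat c) \to 1$.

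There is no genuine obstacle in this argument: every substantive step is already done inside Theorem \ref{var_weight_lim_thm}, and the present proof is essentially a bookkeeping exercise that converts the convergence in probability of $S_n$ and of $\hat c$ into convergence of the rejection probabilities. The only point deserving explicit attention is that the equality form of Theorem \ref{var_weight_lim_thm} (not merely the inequality form) is what drives the power-to-one direction, which is precisely why the extra condition on $\sigma_n$ is required for the third conclusion but not for the first two.
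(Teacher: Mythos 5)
Your proof is correct and takes exactly the route the paper intends: the paper states this theorem without a separate proof precisely because it is an "immediate consequence" of Theorem \ref{var_weight_lim_thm} together with the critical-value assumptions, which is the bookkeeping you carry out. The only small interpretive point worth noting is that the constant $c$ in the first part of the statement must be read as the lower bound supplied by Assumption \ref{cval_bound_assump} (or, when both assumptions hold, as the limit $c$ from Assumption \ref{cval_limit_assump}, which is automatically such a lower bound), and you resolve this the natural way.
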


As with bounded weighting functions, the rate for detecting local alternatives with CvM statistics with variance weights is slower than the rate for the corresponding KS test.  The $n^{-\gamma/\{2[d_X/2+\gamma+(d_X+1)/p]\}}$ rate for variance weighted CvM statistics derived above contrasts with the $(n/\log n)^{-\gamma/[2(d_X/2+\gamma)]}$ rate for the corresponding KS test derived in \citet{armstrong_multiscale_2016} and \citet{armstrong_weighted_2014} (the results from the latter paper on rates of convergence of confidence regions in the Hausdorff metric imply these local power results).  The rate for CvM statistics approaches the rate for KS statistics as $p\to \infty$.

\subsection{Statistics Based on Kernel Estimates}\label{kern_sec}

 To describe the results, define
 \begin{align*}
 \lambda_{\text{kern}}(a,h,j,k,p)
 \equiv \int \left| 
 \int \left[\|x\|^\gamma \psi_{j,k}\left(\frac{x}{\|x\|}\right)
      +\bar m_{\theta,j}(\theta_0,x_k)a\right]
   h^{-d_X}k((x-\tilde x)/h) \omega_j(\theta_0,x_k) \, dx
    \right|_{-}^p \, d\tilde x.
 \end{align*}
 and
 \begin{align*}
 \tilde \lambda_{\text{kern}}(a,j,k,p)
 \equiv\int \left|\left[[\|v\|^\gamma \psi_{j,k}\left(\frac{v}{\|v\|}\right)
   +\bar m_{\theta,j}(\theta_0,x_k) a\right]
 \omega_j(\theta_0,x_k) \right|_{-}^p\, dv.
 \end{align*}

 \begin{theorem}\label{kern_lim_thm}
 Suppose that Assumptions
 \ref{mu_assump},
 \ref{smoothness_assump_multi},
 \ref{bdd_y_assump_local} and
 \ref{kern_dens_assump}
 hold, and that the kernel function $k$ satisfies Assumption \ref{g_kernel_assump}.
 In addition, suppose that the bandwidth $h$ satisfies $h/n^{-s}\to c_h$
 for some $0<s<1/d_X$ and $c_h>0$,
 the kernel function $k$ satisfies
 $\int k(u)\, du=1$
 and that the functions $\psi_{j,k}$ in Assumption
 \ref{smoothness_assump_multi} are continuous.
 Let $a_n=an^{-q}$ for some $a\in\mathbb{R}^{d_\theta}$ where
 \begin{align*}
 q=\left\{\begin{array}{cc}
 s\gamma & \text{if }s<1/[2(\gamma+d_X/p+d_X/2)]  \\
 (1-sd_X)/[2(1+d_X/(p\gamma))] & \text{if }s\ge 1/[2(\gamma+d_X/p+d_X/2)]
 \end{array}\right.
 \end{align*}
 and let $\theta_n=\theta_0+a_n$.
 If $s>1/[2(\gamma+d_X/p+d_X/2)]$, then
 \begin{align*}
 (nh^{d_X})^{1/2}T_{n,p,\text{kern}}(\theta_n)
   \stackrel{p}{\to} c_h^{d_X/2}\left(\sum_{k=1}^{|\mathcal{X}_0|}\sum_{j\in J(k)}
      \tilde\lambda_{\text{kern}}(a,j,k,p)\right)^{1/p}
   \equiv \tilde r_{\text{kern}}(a).
 \end{align*}

 If $s=1/[2(\gamma+d_X/p+d_X/2)]$, then
 \begin{align*}
 (nh^{d_X})^{1/2}T_{n,p,\text{kern}}(\theta_n)
   \stackrel{p}{\to} c_h^{d_X/2}\left(\sum_{k=1}^{|\mathcal{X}_0|}\sum_{j\in J(k)}
      \lambda_{\text{kern}}(a,c_h,j,k,p)\right)^{1/p}
 \equiv r_{\text{kern}}(a,c_h).
 \end{align*}

 If $s<1/[2(\gamma+d_X/p+d_X/2)]$, then
 \begin{align*}
 (nh^{d_X})^{1/2}T_{n,p,\text{kern}}(\theta_n)
 \end{align*}
 will converge in probability to $0$ if %
 \begin{align*}
 \left(\sum_{k=1}^{|\mathcal{X}_0|}\sum_{j\in J(k)}\lambda_{\text{kern}}(a,c_h,j,k,p)\right)^{1/p}
 \end{align*}
 is $0$ in a neighborhood of $(a,c_h)$, and will converge to $\infty$ if
 this expression is %
 strictly positive.  %

 \end{theorem}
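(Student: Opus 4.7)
The plan is to reduce $(nh^{d_X})^{1/2}T_{n,p,\text{kern}}(\theta_n)$ to a deterministic local integral around each zero $x_k\in\mathcal{X}_0$ through localization, a local Taylor expansion, and a change of variables to the natural scale $r_n=n^{-q/\gamma}$, while absorbing the stochastic error via uniform convergence of the kernel regression estimator. First I would localize: Assumption \ref{smoothness_assump_multi}(i) together with the $\theta$-continuity in Assumption \ref{diff_m_assump} imply that $\bar m_j(\theta_n,x)$ is bounded below away from zero uniformly for $x$ outside fixed small neighborhoods $B(x_k)$ of the $x_k$, for all $n$ large. The standard uniform bound
\begin{align*}
\sup_x |\hat{\bar m}_j(\theta_n,x)-\bar m_j(\theta_n,x)|=O_p((nh^{d_X}/\log n)^{-1/2}),
\end{align*}
obtained from Assumption \ref{bdd_y_assump_local}, Assumption \ref{kern_dens_assump}, and the VC condition in Assumption \ref{g_kernel_assump}, then forces $\hat{\bar m}_j(\theta_n,x)>0$ for $x$ outside $\cup_k B(x_k)$ with probability tending to one, so only the neighborhoods of the $x_k$ contribute to the integrated negative part.

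On each $B(x_k)$ and for $j\in J(k)$, substituting $x=x_k+r_n u$ gives, via Assumptions \ref{smoothness_assump_multi}(ii) and \ref{diff_m_assump},
\begin{align*}
\bar m_j(\theta_n,x_k+r_n u)=n^{-q}\left[\|u\|^\gamma\psi_{j,k}(u/\|u\|)+\bar m_{\theta,j}(\theta_0,x_k)a\right]+o(n^{-q}),
\end{align*}
and transforms the kernel argument into $k((v-u)/\tilde h)$ under $X_i=x_k+r_n v$, with $\tilde h=h/r_n$. In Case 1 ($s>$ critical) one has $\tilde h\to 0$; using $\int k=1$ and continuity of $\psi_{j,k}$, the kernel collapses into a point evaluation so that $\hat{\bar m}_j(\theta_n,x)\approx\bar m_j(\theta_n,x)$, producing the unsmoothed form $\tilde\lambda_{\text{kern}}$. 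In Case 2 ($s=$ critical), $\tilde h\to c_h$ and the full convolution against $k$ is retained, giving $\lambda_{\text{kern}}(a,c_h,j,k,p)$. In Case 3 ($s<$ critical), $\tilde h\to c_h$ as in Case 2, but the scaling $(nh^{d_X})^{1/2}T$ picks up an extra factor of $n^{(1-s(2\gamma+2d_X/p+d_X))/2}$ that diverges, because $q=s\gamma$ lies strictly below the balancing exponent.

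To complete the argument I compare signal to noise on the localized scale. A direct calculation shows $r_n^\gamma(nh^{d_X})^{1/2}\to\infty$ in all three cases (in Cases 1--2 because the stated $q$ satisfies $q<(1-sd_X)/2$ strictly, and in Case 3 because $s<1/[2(\gamma+d_X/p+d_X/2)]$ forces $s<1/(2\gamma+d_X)$), so the signal dominates the pointwise noise $(nh^{d_X})^{-1/2}$; a continuous-mapping argument on the negative-part operation upgrades this into an $L^p$ bound for the stochastic contribution, which is $o_p(1)$ after multiplication by $(nh^{d_X})^{1/2}$. Collecting the Jacobian $r_n^{d_X}$ and the factor $r_n^{p\gamma}$ extracted from $|\cdot|_-^p$, multiplying by $(nh^{d_X})^{p/2}$, and substituting the specified $q$ yields the finite nonzero limits in Cases 1 and 2, with the factor $c_h^{d_X/2}$ arising from $h^{d_X/2}=c_h^{d_X/2}n^{-sd_X/2}$, and the divergence to $+\infty$ in Case 3 when the inner integral is strictly positive. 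The main obstacle is the uniform ratio-estimator bound at the sharp $(nh^{d_X}/\log n)^{-1/2}$ rate, which is where the VC condition in Assumption \ref{g_kernel_assump} drives the empirical process calculation. The Case-3 vanishing alternative --- convergence to $0$ when $\sum_{k,j}\lambda_{\text{kern}}(a,c_h,j,k,p)$ vanishes on a neighborhood of $(a,c_h)$ --- is the most delicate point: there the leading-order signal is pointwise nonnegative, so $|\hat{\bar m}_j|_-$ is supported only on a shrinking set whose measure is controlled by the noise scale, and a careful estimate on this set yields the vanishing of the scaled statistic.
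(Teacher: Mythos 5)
Your high-level plan---localize to the zeros $x_k$, Taylor expand, change variables to the $r_n=n^{-q/\gamma}$ scale, and control the stochastic error---matches the structure of the paper's proof (which proceeds through Lemmas \ref{equiv_lemma_1}, \ref{kern_equiv_lemma_2} and \ref{drift_approx_kern_lemma}). But the step that makes or breaks the argument---controlling the stochastic contribution after scaling by $(nh^{d_X})^{1/2}$---is not correctly developed and, as written, has a genuine gap.

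The claim that ``$r_n^\gamma(nh^{d_X})^{1/2}\to\infty$, so the signal dominates the pointwise noise $(nh^{d_X})^{-1/2}$; a continuous-mapping argument on the negative-part operation upgrades this into an $L^p$ bound'' does not work. Near $x_k$ the signal is not of order $r_n^\gamma$: the drift $\bar m_j(\theta_n,x_k+r_nu)\approx r_n^\gamma[\|u\|^\gamma\psi_{j,k}+\bar m_{\theta,j}a]$ vanishes (or is negative) for small $u$, and the negative-part integral is computed precisely over this inner region where the signal does \emph{not} dominate the noise. There is a real argument here but it is two-scale: (i) on the annulus $Cr_n<\|x-x_k\|<\varepsilon$ the signal $\gtrsim r_n^\gamma C^\gamma$ does beat the \emph{uniform} (not pointwise) noise $O_p(\sqrt{\log n/(nh^{d_X})})$---note the $\log n$ factor you suppress---so the negative part vanishes there w.h.p.; and (ii) on the core $\|x-x_k\|\le Cr_n$ the Lipschitz property of $t\mapsto|t|_-$ gives an $L^p$ perturbation bounded by $O_p(\sqrt{\log n/(nh^{d_X})})(Cr_n)^{d_X/p}$, which after multiplication by $(nh^{d_X})^{1/2}$ is $O_p(\sqrt{\log n})r_n^{d_X/p}\to 0$. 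Step (ii) has nothing to do with whether the signal dominates the noise; it works because the core region has shrinking measure. Without separating these two regimes your sketch bounds only the wrong quantity. The paper avoids the extra $\log n$ altogether by using Fubini plus the $L^p$ moment bound of Lemma \ref{lp_bound_lemma}, integrating the noise over a small (fixed) $\omega$-measure set rather than applying a sup-norm bound; both routes can be made to work, but yours needs the two-scale decomposition made explicit.

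The Case-3 vanishing alternative is also not handled correctly. You argue that because the leading-order smoothed signal is pointwise nonnegative, $|\hat{\bar m}_j|_-$ is supported on ``a shrinking set whose measure is controlled by the noise scale.'' But mere pointwise nonnegativity at $(a,c_h)$ does not give you a margin: the smoothed drift could touch zero, and then the stochastic fluctuation can be negative on sets of measure comparable to the full $r_n$-window, leaving you exactly with the borderline step-(ii) estimate above---which diverges polynomially in Case 3 before being integrated over the shrinking set, so the ``careful estimate'' you gesture at needs to be quantitative and is not obviously small. The paper's Lemma \ref{drift_approx_kern_lemma} uses the hypothesis that $\lambda_{\text{kern}}$ vanishes on a \emph{neighborhood} of $(a,c_h)$ to extract a strictly positive margin (evaluating the inequality at $(\tilde a,\tilde c_{h,r})$ with $\tilde c_{h,r}^\gamma\le c_{h,r}^\gamma(1-\varepsilon)$ and $\bar m_{\theta,j}\tilde a\le(1+\varepsilon)\bar m_{\theta,j}a$), concluding that the deterministic inner integral is eventually nonnegative for all $v$, hence $\tilde{\tilde T}_{n,p,\text{kern}}=0$ exactly for large $n$, while the stochastic correction is $o_p(1)$ uniformly across cases by Lemmas \ref{equiv_lemma_1}--\ref{kern_equiv_lemma_2}. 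You should invoke the neighborhood hypothesis explicitly to get this margin; without it the argument fails.
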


 The result has immediate implications for the power of tests based on kernel CvM statistics.

 \begin{theorem}
 Let $a_n$ be defined as in Theorem \ref{kern_lim_thm} and
 suppose that the conditions of that theorem and
 Assumption \ref{cval_bound_assump} hold.  If $s>1/[2(\gamma+d_X/p+d_X/2)]$, the power
 \begin{align*}
 E\phi_{n,p,\text{kern}}(\theta_0+a_n)
 \end{align*}
 of the test $\phi_{n,p,\text{kern}}(\theta_0+a_n)$
 will converge to zero for $\tilde r_{\text{kern}}(a)<c$.  If
 $s=1/[2(\gamma+d_X/p+d_X/2)]$, the power given by the above display
 will converge to zero for $\tilde r_{\text{kern}}(a,c_h)<c$.  If
 $s<1/[2(\gamma+d_X/p+d_X/2)]$, the power given by the above display
 will converge to zero if $\tilde r_{\text{kern}}(a,c_h)=0$ in a
 neighborhood of $(a,c_h)$.  If, in addition, Assumption
 \ref{cval_limit_assump} holds, the power given by the above display
 will converge to $1$ if $\tilde r_{\text{kern}}(a)>c$,
 $r_{\text{kern}}(a,c_h)>c$, or $r_{\text{kern}}(a,c_h)>0$ in the cases
 where $s$ is greater than, equal to, or less than
 $1/[2(\gamma+d_X/p+d_X/2)]$ respectively.

 \end{theorem}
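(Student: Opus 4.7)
The proof is a direct consequence of Theorem \ref{kern_lim_thm} combined with the definition of the test in (\ref{test_def_kern}) and the critical value assumptions. The plan is to treat the three regimes of $s$ in parallel, as the argument in each is structurally identical: compare the in-probability limit of $(nh^{d_X})^{1/2}T_{n,p,\text{kern}}(\theta_0+a_n)$ (supplied by Theorem \ref{kern_lim_thm}) with the critical value (controlled by Assumption \ref{cval_bound_assump} or \ref{cval_limit_assump}) and apply Slutsky.

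For $s>1/[2(\gamma+d_X/p+d_X/2)]$, Theorem \ref{kern_lim_thm} gives $(nh^{d_X})^{1/2}T_{n,p,\text{kern}}\stackrel{p}{\to}\tilde r_{\text{kern}}(a)$. Under Assumption \ref{cval_bound_assump}, $\hat c$ is eventually bounded below by some $c>0$, so when $\tilde r_{\text{kern}}(a)<c$ a standard Slutsky argument yields $P((nh^{d_X})^{1/2}T_{n,p,\text{kern}}>\hat c)\to 0$, hence power vanishes. Adding Assumption \ref{cval_limit_assump}, so that $\hat c\stackrel{p}{\to}c$, the symmetric argument gives power approaching one whenever $\tilde r_{\text{kern}}(a)>c$. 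The case $s=1/[2(\gamma+d_X/p+d_X/2)]$ is handled identically, with $r_{\text{kern}}(a,c_h)$ (a deterministic constant by Theorem \ref{kern_lim_thm}) playing the role of $\tilde r_{\text{kern}}(a)$.

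For $s<1/[2(\gamma+d_X/p+d_X/2)]$, Theorem \ref{kern_lim_thm} supplies a dichotomy: $(nh^{d_X})^{1/2}T_{n,p,\text{kern}}$ either converges in probability to $0$, when $r_{\text{kern}}(a,c_h)=0$ on a neighborhood of $(a,c_h)$, or diverges to $+\infty$, when $r_{\text{kern}}(a,c_h)>0$. In the first sub-case, the lower bound on $\hat c$ from Assumption \ref{cval_bound_assump} forces the rejection probability to zero; in the second, divergence of the statistic overwhelms any critical value that is bounded in probability (as guaranteed by the convergence in Assumption \ref{cval_limit_assump}), so power tends to one. There is no substantial mathematical obstacle beyond careful bookkeeping of the three regimes and matching each limit ($\tilde r_{\text{kern}}$ versus $r_{\text{kern}}$) to the appropriate case; all the analytical work has already been absorbed into Theorem \ref{kern_lim_thm}.
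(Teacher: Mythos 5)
Your proof is correct and takes essentially the same approach the paper implicitly uses: the paper states this theorem as an "immediate implication" of Theorem \ref{kern_lim_thm} without a separate proof, and the argument is exactly the bookkeeping you describe — compare the in-probability limit (or divergence) of the scaled statistic with the critical value under Assumptions \ref{cval_bound_assump} and \ref{cval_limit_assump}, regime by regime. One small remark: for the power-to-one conclusion in the $s<1/[2(\gamma+d_X/p+d_X/2)]$ regime you correctly note that only $O_p(1)$-ness of $\hat c$ is needed, which is weaker than Assumption \ref{cval_limit_assump}; the theorem bundles the stronger assumption across all three regimes for uniformity, but your observation is accurate.
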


 As with instrument based statistics, the rate for detecting local alternatives with the kernel CvM test is slower than the rate for the corresponding KS statistic.  The rate derived in Theorem \ref{kern_lim_thm} can be written as
 $\max\{(nh^{d_X})^{-1/[2(1+d_X/(p\gamma))]},h^{\gamma}\}$, which is slower than the
 $\max\left\{(n h^{d_X}/\log n)^{-1/2},h^{\gamma}\right\}$ rate for kernel based KS statistics derived in \citet{armstrong_weighted_2014}.  As with the instrument based statistics, the CvM test is more powerful for $p$ larger, and the rate approaches the rate for the KS test as $p$ goes to $\infty$.

 Theorem \ref{kern_lim_thm} can be used to choose the optimal bandwidth in this setting.
 The rate $a_n=an^{-q}$ is best when $s=1/[2(\gamma+d_X/p+d_X/2)]$, which gives an exponent in the rate of
 \begin{align*}
 &q=\frac{\gamma}{2(\gamma+d_X/p+d_X/2)}
 =\frac{1-s d_X}{2(1+d_X/(p\gamma))}
 =s\gamma.
 \end{align*}
 Note that this rate is faster than the $n^{-\gamma/[2(d_X/2+\gamma+(d_X+1)/p))]}$ rate that can be obtained with instrument based CvM tests with variance weights.  Thus, restricting the class of instruments using prior knowledge of the data generating process leads to a faster rate with CvM statistics.  In contrast, instrument based KS statistics with variance weights can achieve the same rate as kernel KS statistics that use prior knowledge of the data generating process to choose the bandwidth optimally
\citep[cf.][]{armstrong_weighted_2014,armstrong_multiscale_2016,chetverikov_adaptive_2012}.

 \section{Monte Carlo}\label{monte_carlo_sec}

 This section reports the results of a Monte Carlo study of the finite sample properties of the statistics considered in this paper.
 I perform a Monte Carlo based on a median regression model with potentially endogenously missing data.  I use the same data generating processes as for the Monte Carlo for variance weighted KS statistics in \citet{armstrong_multiscale_2016}.  A description of the model and data generating processes is repeated here for convenience.

 The latent variable $W^*_i$ follows a linear median regression model given the observed covariate $X_i$: $q_{1/2}(W_i^*|X_i)=\theta_1+\theta_2 X_i$ where $q_{1/2}(W_i^*|X_i)$ is the conditional median of $W_i^*$ given $X_i$.  Define $W_i^H=W_i^*$ when $W_i^*$ is observed and $W_i^H=\infty$ otherwise.  This gives the conditional moment inequality $E[I(\theta_1+\theta_2 X_i\le W_i^H)-1/2|X_i]\ge 0$ a.s. (a similar inequality can be formed with the lower bound $W_i^L$ defined analogously, but with $W_i^L=-\infty$ when $W_i^*$ is unobserved, which would give the interval quantile regression setup of Section \ref{int_quant_reg_sec} of the appendix; the Monte Carlo focuses on the inequality corresponding to $W_i^H$ for simplicity).  This model allows for arbitrary correlation between the ``missingness'' process and $(W_i^*,X_i)$, so that the resulting bounds can be used to assess sensitivity to missingness at random assumptions that would point identify the model.

 Each design uses data from the true model $W_i^*=\theta_1^*+\theta_2^* X_i+u_i$, where $(\theta_1^*,\theta_2^*)=(0,0)$ and $u_i$ is independent of $X_i$ with $u_i\sim\text{unif}(-1,1)$.  The outcome variable $W_i^*$ is then set to be missing independently of $W_i^*$ with probability $p(X_i)$ (note that, while the data are generated according to a missingness at random assumption and a particular parameter value, the tests are robust to failure of this assumption, which leads to a lack of point identification), where $p(x)$ is varied in each of three designs:
 \begin{align*}
 \begin{array}{ll}
 \text{Design 1:} & p(x)=.1  \\
 \text{Design 2:} & p(x)=.02+2\cdot .98\cdot |x-.5|  \\
 \text{Design 3:} & p(x)=.02+4\cdot .98\cdot (x-.5)^2.
 \end{array}
 \end{align*}
This leads to the identified set $\Theta_0=\{(\theta_1,\theta_2)'|\theta_1+\theta_2x\le q_{1/2}(W_i^H|X_i=x)\text{ all } x\in [0,1]\}$ where $q_{1/2}(W_i^H|X_i=x)$ can be calculated for each design as $q_{1/2}(W_i^H|X_i=x)=1/(1-p(x))-1$.
 For each design, the Monte Carlo power of $\phi(\theta)$ for each test $\phi$ under the dgp in the given design is reported for
$\theta=(\overline\theta_1+a,0)$ where
 $\overline\theta_1=\sup\{\theta_1|(\theta_1,0)\in\Theta_0\}$ and $a$ varies over the set $\{.1,.2.,.3,.4,.5\}$.  This leads to local alternatives that satisfy the conditions of this paper with $\gamma=1$ for Design 2 and $\gamma=2$ for Design 3.  Design 1 leads to a flat conditional mean for which asymptotic theory predicts the following rates (for the instrument functions used here): $n^{-1/2}$ for kernel and instrument based CvM and unweighted instrument based KS statistics, $(n/\log n)^{-1/2}$ for variance weighted instrument KS statistics and $(nh/\log n)^{-1/2}$ for kernel KS statistics
 \citep[see][]{andrews_inference_2013,armstrong_weighted_2014,chernozhukov_intersection_2013,lee_testing_2013}.

For the instrument based statistics, I use the class of functions $\{x\mapsto I(s<x<s+t)|0\le s\le s+t\le 1\}$ and the the Lebesgue measure on $\{(s,t)|0\le s\le s+t\le 1\}$ for $\mu$ for the instrument based CvM statistics.  This corresponds to the multiscale kernel instruments in Assumption \ref{g_kernel_assump} with the uniform kernel.  For the kernel based statistics, the uniform kernel is used, and the supremum or integral is taken over the set $[h/2,1-h/2]$, so that the support of the kernel function is always contained in the support of $X_i$.
For the CvM statistics, the simulations use the test with $L_p$ exponent $p=1$.
For each test statistic, the critical value is taken from the least favorable null distribution, calculated exactly (up to Monte Carlo error) using the distribution under $(\overline\theta_1,0)$ under Design 1.
For the kernel estimators, the bandwidths $n^{-1/5}$, $n^{-1/3}$ and $n^{-1/2}$ are used, and, for the truncated variance weighted CvM statistics, the values $n^{-1/5}/4$, $n^{-1/3}/4$ and $n^{-1/2}/4$ are used for the
truncation parameter $\sigma_n^2$ (this corresponds to truncating the variance of functions $I(s<x<s+t)$ with $t$ less than $n^{-1/5}$, $n^{-1/3}$ and $n^{-1/2}$).
For comparison, results for the variance weighted instrument KS statistic, which corresponds to the multiscale statistic of \citet{armstrong_multiscale_2016}, are reported as well (taken directly from that paper).

Overall, the Monte Carlo results support the claim that, for the data generating processes and classes of instrument functions considered in the theoretical results in this paper, KS statistics perform better than CvM statistics.  For Design 2 and Design 3, which follow the conditions of this paper with $\gamma=1$ and $\gamma=2$ respectively, the instrument based KS statistic has more power than the instrument based CvM statistic in basically all cases.  For the kernel statistics, the KS test performs better unless the bandwidth is chosen to be much too small.  For example, for Design 3, the optimal bandwidth for the kernel statistic is of order $n^{-1/5}$, and the kernel KS statistic performs better than the kernel CvM statistic with this bandwidth.  However, the kernel statistic performs worse for smaller bandwidths when the sample size is not too large (although the KS statistic does almost as well or better with $1000$ observations, suggesting that the asymptotics of Theorem \ref{kern_lim_thm} have started to kick in at this point).

Note also that power in the Monte Carlo is very sensitive to the design, with greater power for Design 3 than Design 2.  This is to be expected given the asymptotic results.  Under Design 3, the assumptions of this paper hold with $\gamma=2$, while, under Design 2, the assumptions hold with $\gamma=1$.  The results of Section \ref{loc_power_sec} show that asymptotic power is increasing in $\gamma$ (the rate at which local alternatives may approach the null with nontrivial power is faster for larger $\gamma$) for each of the test statistics considered.

For Design 1, asymptotic results from elsewhere in the literature predict that the instrument based statistics with the instruments used here perform about the same (in terms of the rate for detecting local alternatives) for KS and CvM statistics, although the variance weighted KS statistic performs slightly worse (by a $\log n$ factor).  For kernel statistics, asymptotic theory predicts that KS statistics will perform worse than CvM statistics in this case (the latter can achieve a $n^{-1/2}$ rate, while the former cannot if the bandwidth goes to zero).  All of these predictions are borne out in the Monte Carlo: instrument based statistics all perform well with the weighted KS statistics performing slightly worse, while CvM version is better for kernel statistics.

The Monte Carlo results also fit well with the prescription of the weighted instrument KS or ``multiscale'' statistic of \citet{armstrong_weighted_2011}, \citet{armstrong_weighted_2014}, \citet{armstrong_multiscale_2016} and \citet{chetverikov_adaptive_2012} as the only test among the ones considered here that comes close to having the best power among these test statistics for all three Monte Carlo designs (according to asymptotic approximations, the weighted instrument KS test achieves the best rate to at least within a $\log n$ factor in all three cases, while each of the other statistics considered here performs worse by a polynomial factor in at least one case).  While other statistics perform slightly better in certain cases, they perform much worse in others (e.g. the kernel KS statistic performs slightly better in Design 3 with the optimal bandwidth, $n^{-1/5}$, but performs much worse when other bandwidths are chosen, or with any bandwidth choice in Design 1).

\section{Conclusion}\label{conclusion_sec}

This paper derives local power results for tests for conditional moment inequality models based on several forms of CvM statistics in the set identified case.
The power comparisons hold under conditions that arise naturally in the set identified case, and determine the minimax rate.
The results show that KS tests are preferred to CvM statistics and that variance weightings are preferred to bounded weightings.

\appendix

\section{Primitive Conditions and Minimax Bounds}\label{prim_cond_append}

This appendix gives primitive conditions for the assumptions used in this paper, and shows how the (pointwise in the underlying distribution) results for local alternatives considered in the paper can be used to bound the minimax power of CvM tests in classes of underlying distributions where the conditional mean is constrained only by smoothness assumptions.  Since the corresponding KS statistic has a faster rate in these classes, this justifies the claim that the CvM tests considered here perform worse in these models under a minimax criterion.
Section \ref{finite_contact_set_sec} gives general primitive conditions for the assumption that the contact set $\mathcal{X}_0$ in Assumption \ref{smoothness_assump_multi} is finite.
Sections \ref{int_reg_sec}, \ref{int_quant_reg_sec} and \ref{selection_model_sec} provide primitive conditions for the assumptions used in this paper in various settings.
Section \ref{minimax_rates_sec} uses the results in the body of this paper to give conditions under which the CvM statistics considered in this paper do not achieve the optimal rate minimax rate, and verifies these conditions for the interval regression model.

\subsection{Primitive Conditions for Finite Contact Set}\label{finite_contact_set_sec}

If we assume that the support of $X_i$ is compact, and that the minimizing set $\{x|\bar m_j(\theta,x)=0\}$ is contained on the interior of the support of $X_i$, then the minimizing set will be finite so long as $\bar m_j(\theta,x)$ is twice continuously differentiable with strictly positive definite second derivative matrix at any minimum.  This follows from the proof of Lemma B.1 in the supplementary appendix of \citet{armstrong_asymptotically_2015}, and we state the result here for convenience.  (Note that the lemma in \citet{armstrong_asymptotically_2015} assumes a third derivative, since a third derivative is used for other results in that paper.  However, a inspection of the proof shows that a continuous second derivative suffices.)

\begin{lemma}\label{finite_contact_set_lemma}
Let $h:\mathcal{X}\to\mathbb{R}$ be twice continuously differentiable on the compact set $\mathcal{X}\subseteq\mathbb{R}^{k}$.  Suppose that, for any minimizer $\tilde x$ of $h(x)$, $\tilde x$ is on the interior of $\mathcal{X}$, and that the second derivative matrix of $h$ is strictly positive definite at $\tilde x$.  Then the set of minimizers of $h(x)$ over $\mathcal{X}$ is finite.
\end{lemma}
\begin{proof}
The result follows from the proof of Lemma B.1 in the supplementary appendix of \citet{armstrong_asymptotically_2015}.
\end{proof}

\subsection{Interval Regression}\label{int_reg_sec}

This section gives primitive conditions for the interval regression model described in the Introduction, which falls into the setup of this paper with
$W_i=(X_i,W_i^L,W_i^H)$ and $m(W_i,\theta)=(W_i^H-(1,X_i')\theta,(1,X_i')\theta-W_i^L)'$.
First, I prove Theorem \ref{int_reg_second_deriv_thm}.  Then, I give conditions under which the assumptions in the main text hold with $\gamma=1$.

\begin{proof}[Proof of Theorem \ref{int_reg_second_deriv_thm}]
First, note that the set of $x$ such that $\bar m_j(\theta,x)=0$ for some $j$ is finite by Lemma \ref{finite_contact_set_lemma}.
Part (ii) of Assumption \ref{smoothness_assump_multi} follows from a second order Taylor expansion, and part (i) follows by compactness of the support of $X_i$ and continuity of the first two derivatives of the conditional means.
Part (iv) is immediate from part (ii) of the conditions of the theorem
 and the fact that the conditional variance is constant in $\theta$ for this model.
For part (v), note that $\frac{d}{d\theta}\bar m_1(\theta,x)=-\frac{d}{d\theta}\bar m_2(\theta,x)=(1,x')$, which is clearly continuous in $(\theta,x)$.
Assumption \ref{bdd_y_assump_local} is immediate from the bounds on $W_i^H$ and $W_i^L$.
\end{proof}

For the Lipschitz case ($\gamma=1$), we can replace the assumption of two derivatives with a condition on the directional one-sided first derivatives.  Here, we make the assumption of finiteness of the set where the conditional moments bind directly, since arguments involving second derivatives do not apply.  In the following, $\mathbb{S}^{d_X-1}$ denotes the unit sphere $\{u\in\mathbb{R}^{d_X}|\|u\|=1\}$.

\begin{assumption}\label{int_reg_assump_lipschitz}
\begin{itemize}
\item[i.)] The conditional means $E(W_i^H|X_i=x)$ and $E(W_i^L|X_i=x)$ are Lipschitz continuous, $X_i$ has a continuous density and compact support, and $W_i^H$ and $W_i^L$ are bounded from above and below by finite constants.

\item[ii.)]
The set $\mathcal{X}_0\equiv\{x|E(W_i^H|X_i=x)=(1,x')\theta_0\}$ is finite, and,
for any point $\tilde x\in\mathcal{X}_0$, $\tilde x$ is in the interior of the support of $X_i$, $var(W_i^H|X_i=x)$ is positive and continuous at $\tilde x$ and
the one-sided directional derivative
$\frac{d}{dt_+}[E(W_i^H|X_i=\tilde x+tu)-(1,(\tilde x+tu)')\theta_0]$
is bounded from below away from zero at $t=0$
and is right continuous at $t=0$ uniformly over $u\in\mathbb{S}^{d_X-1}$.
The same holds for $E(W_i^L|X_i=x)$ with ``positive'' replaced by ``negative''
in the last statement.
\end{itemize}
\end{assumption}

\begin{theorem}\label{int_reg_lipschitz_thm}
Under Assumption \ref{int_reg_assump_lipschitz}, Assumptions
\ref{smoothness_assump_multi}
and \ref{bdd_y_assump_local} hold, with $\gamma=1$ in Assumption \ref{smoothness_assump_multi}.
\end{theorem}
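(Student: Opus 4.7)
The plan is to mirror the proof for the $\gamma = 2$ case given just above, replacing the second-order Taylor expansion at each contact point by a first-order directional expansion. Most of the assumptions do not depend on the smoothness level and follow essentially as before. Part (i) of Assumption \ref{smoothness_assump_multi} follows from compactness of the support of $X_i$ together with continuity of the (Lipschitz, hence continuous) conditional means $E(W_i^H|X_i=\cdot)$ and $E(W_i^L|X_i=\cdot)$; part (iii) is exactly Assumption \ref{int_reg_assump_lipschitz}(i); part (iv) is Assumption \ref{int_reg_assump_lipschitz}(ii) combined with the fact that $\mathrm{Var}(m_j(W_i,\theta)|X_i=x)$ is constant in $\theta$ for this model; Assumption \ref{diff_m_assump} follows because $\bar m_1(\theta,x) = E(W_i^H|X_i=x)-(1,x')\theta$ and $\bar m_2(\theta,x) = (1,x')\theta-E(W_i^L|X_i=x)$ have derivative $\mp(1,x')$ in $\theta$, continuous in $(\theta,x)$, with the bounded-away-from-zero condition following from the directional expansion below together with part (i); and Assumption \ref{bdd_y_assump_local} is immediate from the bounds on $W_i^H$ and $W_i^L$.

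The only genuinely new step is part (ii) of Assumption \ref{smoothness_assump_multi} with $\gamma(j,k) = 1$. Fix $\tilde x \in \mathcal{X}_0$ with $1 \in J(k)$ (the case $2 \in J(k)$ is symmetric) and, for each $u \in \mathbb{S}^{d_X-1}$, set $g_u(t) = E(W_i^H|X_i = \tilde x + tu) - (1,(\tilde x + tu)')\theta_0$, so that $g_u(0) = \bar m_1(\theta_0,\tilde x) = 0$. By Assumption \ref{int_reg_assump_lipschitz}(ii), $g_u'(t)$ exists for small $t > 0$ and is continuous at $t = 0$ uniformly in $u$, with $\psi_{1,k}(u) \equiv g_u'(0)$ bounded away from $0$ below uniformly in $u$ and bounded above by the Lipschitz constant of $E(W_i^H|X_i=\cdot)$ plus $\|\theta_0\|$. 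Writing $x = \tilde x + tu$ with $t = \|x-\tilde x\|$ and $u = (x-\tilde x)/\|x-\tilde x\|$, the mean value theorem then gives $\bar m_1(\theta_0,x)/\|x-\tilde x\| = g_u'(\xi)$ for some $\xi \in (0,t)$, so
\[
\sup_{0 < \|x-\tilde x\| \le \delta} \left|\frac{\bar m_1(\theta_0,x)}{\|x-\tilde x\|} - \psi_{1,k}\!\left(\frac{x-\tilde x}{\|x-\tilde x\|}\right)\right| \le \sup_{u \in \mathbb{S}^{d_X-1}} \sup_{0 < \xi \le \delta}|g_u'(\xi) - g_u'(0)|,
\]
which tends to $0$ as $\delta \to 0$ by the uniform-in-$u$ continuity of $g_u'$ at $0$. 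The analogous argument for $j = 2$, applied to $-E(W_i^L|X_i=\tilde x + tu) + (1,(\tilde x + tu)')\theta_0$, gives the corresponding expansion with $\psi_{2,k}(u) > 0$.

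The main obstacle is simply making sure the uniform-in-direction continuity of the derivative is used correctly: mere pointwise existence of a directional derivative at $\tilde x$ would not suffice for the MVT bound above to be uniform in the direction, and pointwise convergence would not produce the Hölder-type expansion required by part (ii). Assumption \ref{int_reg_assump_lipschitz}(ii) is written precisely so that uniform continuity in $u$ at $t=0$ delivers this step, and once the first-order expansion is in place the rest of the verification runs exactly as in the $\gamma = 2$ case.
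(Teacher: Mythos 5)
Your proof is correct and follows essentially the same route as the paper's: first-order (directional) Taylor/mean-value expansion at each $x_k$, using the uniform-in-direction continuity of the directional derivative at $t=0$ to deliver part (ii) of Assumption \ref{smoothness_assump_multi}, with the remaining conditions verified as in the $\gamma=2$ case. Your write-up is simply more explicit than the paper's terse version, and your observation that pointwise existence of directional derivatives would not suffice is exactly the point the stated assumption is designed to handle.
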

\begin{proof}
Part (ii) of Assumption \ref{smoothness_assump_multi} follows from a first order Taylor expansion, and part (i) follows by compactness of the support of $X_i$ and the continuity and lower bound on the directional derivatives.
The verification of the remaining conditions is the same as in the twice differentiable case.
\end{proof}

\subsection{Interval Quantile Regression}\label{int_quant_reg_sec}

For the interval quantile regression model, the latent variable $W_i^*$ follows a linear quantile regression model $q_\tau(W_i^*|X_i)=(1,X_i')\theta$, where $\tau$ is given and $q_\tau(U|V)$ denotes the $\tau$th conditional quantile of $U$ given $V$ for random variables $U$ and $V$.  As with interval mean regression, we observe $(X_i,W_i^L,W_i^H)$ where $[W_i^L,W_i^H]$ is known to contain $W_i^*$.  This falls into our setup with
$m(W_i,\theta)=(\tau-I(W_i^H\le (1,X_i')\theta),I(W_i^L\le (1,X_i')\theta)-\tau)'$.

For the interval quantile regression model, one can use essentially the same assumptions as for the interval mean regression model considered above, but with conditional means replaced by conditional quantiles.  In the interest of space, we consider only the case where the conditional quantile function has two derivatives ($\gamma=2$).

\begin{assumption}\label{quant_reg_assump}
\begin{itemize}
\item[i.)] The conditional quantiles $q_\tau(W_i^H|X_i=x)$ and $q_\tau(W_i^L|X_i=x)$ are twice differentiable with continuous second derivatives and $X_i$ has a continuous density and compact support.

\item[ii.)]
For any $\tilde x$ such that $q_\tau(W_i^H|X_i=\tilde x)=(1,\tilde x')\theta_0$,
$\tilde x$ is in the interior of the support of $X_i$ and 
$q_\tau(W_i^H|X_i=x)$ has a positive definite second derivative matrix at $\tilde x$.
The same holds for $q_\tau(W_i^L|X_i=x)$ with ``positive definite'' replaced by ``negative definite.''
\end{itemize}
\end{assumption}

In addition, we will also require an assumption on the conditional densities of $W_i^H$ and $W_i^L$ given $X_i$.

\begin{assumption}\label{W_density_assump}
For some $\eta>0$, $W_i^H|X_i$ and $W_i^L|X_i$ have conditional densities $f_{W_i^H|X_i}(w|x)$ and $f_{W_i^L|X_i}(w|x)$ on $\{(x,w)| q_{\tau,P}(W_i^H|X_i=x)-\eta\le w\le q_{\tau,P}(W_i^H|X_i=x)+\eta\}$ and $\{(x,w)| q_{\tau,P}(W_i^L|X_i=x)-\eta\le w\le q_{\tau,P}(W_i^L|X_i=x)+\eta\}$ respectively
that are continuous as a function of $(x,w)$ and bounded away from zero on these sets.
\end{assumption}

Assumption \ref{W_density_assump} is similar to Assumption B.3 in \citet{armstrong_weighted_2014}.  As discussed in \citet{armstrong_weighted_2014}, this type of condition will hold, for example, when $(X_i,W_i^*)$ has a smooth joint density, and $W_i^*$ is either missing (in which case $W_i^L=-\infty$ and $W_i^H=\infty$) or fully observed (in which case $W_i^L=W_i^H=W_i^*$), so long as the probability that $W_i^*$ is missing conditional on $(X_i,W_i^*)=(x,w)$ is smooth as a function of $(x,w)$.

\begin{theorem}
Suppose that Assumptions \ref{quant_reg_assump} and \ref{W_density_assump} hold.  Then Assumptions \ref{smoothness_assump_multi} and \ref{bdd_y_assump_local} hold, with $\gamma=2$ in Assumption \ref{smoothness_assump_multi}.
\end{theorem}
\begin{proof}
Let $\theta_0\in\Theta_0$ satisfy the conditions of the theorem and let $\tilde x$ be such that $q_{\tau}(W_i^H|X_i=\tilde x)=(1,\tilde x')\theta_0$.  Let $V(x)$ denote the second derivative matrix of $x\mapsto q_{\tau}(W_i^H|X_i= x)$.
Then, for $\delta$ small enough and $\|x-\tilde x\|\le \delta$,
\begin{align*}
&\bar m_1(\theta,x)=\tau-P(W_i^H\le (1,X_i')\theta_0|X_i=x)
=\int_{(1,x')\theta_0}^{q_{\tau}(W_i^H|X_i=x)} f_{W_i^H|X_i}(w|x)\, dw  \\
&=\int_{(1,x')\theta_0}^{(1,x')\theta_0+(x-\tilde x)'V(\tilde x)(x-\tilde x)+r(x)} f_{W_i^H|X_i}(w|x)\, dw
\end{align*}
where $\lim_{x\to \tilde x} r(x)=0$ and the last step follows from a second order Taylor expansion.  This expression is bounded from above by $\overline f(\delta)\cdot [(x-\tilde x)'V(\tilde x)(x-\tilde x)+\overline r(\delta)]$ and from below by $\underline f(\delta)\cdot [(x-\tilde x)'V(\tilde x)(x-\tilde x)+\underline r(\delta)]$ where $\overline f(\delta)$ and $\overline r(\delta)$ are upper bounds for $f_{W_i^H|X_i}(w|x)$ and $r(x)$ on $\{(x,w)| \|x-\tilde x\|\le \delta, (1,x')\theta_0\le w\le q_\tau(W_i^H|X_i=x)\}$ and $\underline f(\delta)$ and $\underline r(\delta)$ are lower bounds.  As $\delta\to 0$, $\overline f(\delta)$ and $\underline f(\delta)$ converge to $f_{W_i^H|X_i}((1,\tilde x')\theta_0|\tilde x)$ and $\overline r(\delta)$ and $\underline r(\delta)$ converge to $0$, so that
\begin{align*}
\sup_{\|x-\tilde x\|\le \delta}
\left\|\frac{\tau-P(W_i^H\le (1,X_i')\theta_0|X_i=x)}
  {\|x-\tilde x\|^2}
-\frac{(x-\tilde x)'}{\|x-\tilde x\|}V(\tilde x)\frac{(x-\tilde x)'}{\|x-\tilde x\|}\cdot f_{W_i^H|X_i}((1,\tilde x')\theta_0|\tilde x)\right\|
\stackrel{\delta\to 0}{\to} 0.
\end{align*}
Applying this argument to the finite set of values $\tilde x$ such that $\tau-P(W_i^H\le (1,X_i')\theta_0|X_i=x)=0$ and a symmetric argument for $W_i^L$, it follows that part (ii) of Assumption \ref{smoothness_assump_multi} holds with $\gamma=2$.

To verify part (i) of Assumption \ref{smoothness_assump_multi} first note that the set $\mathcal{X}_0=\{x | q_\tau(W_i^H|X_i=x)=(1,x')\theta\}$ is finite by Lemma \ref{finite_contact_set_lemma}.  Using this and similar arguments to those used in the proof of Theorem \ref{int_reg_second_deriv_thm}, there exists $\varepsilon>0$ and $\delta>0$ such that
 $q_{\tau}(W_i^H|X_i=x)-(1,x)'\theta$ is bounded away from zero for $\|\theta-\theta_0\|<\varepsilon$
and $x$ such that, for all $\tilde x\in\mathcal{X}_0$, $\|x-\tilde x\|\ge \delta$.
It then follows from Assumption \ref{W_density_assump} that $\tau-P(W_i^H\le (1,X_i')\theta_0|X_i=x)$ is bounded away from zero on such a set.  Part (i) of Assumption \ref{smoothness_assump_multi} follows from this and a similar argument for $W_i^L$.

For part (iv) of Assumption \ref{smoothness_assump_multi}, note that the conditional variance of the moment function corresponding to $W_i^H$ is $P(W_i^H\le (1,x')\theta|X_i=x)[1-P(W_i^H\le (1,x')\theta|X_i=x)]$, so it suffices to show that $P(W_i^H\le (1,x')\theta|X_i=x)$ is in the set $(0,1)$ and is continuous in $(\theta,x)$ at each $(\theta_0,\tilde x)$ such that $\bar m_1(\theta,x)=P(W_i^H\le (1,\tilde x')\theta_0|X_i=\tilde x)=\tau$.  This follows since, by Assumption \ref{W_density_assump}, $W_i^H$ has a continuous conditional density in a neighborhood of $(1,\tilde x')\theta_0$.

For part (v) of Assumption \ref{smoothness_assump_multi}, note that, for $(x,\theta)$ such that $W_i^H$ has a conditional density given $X_i=x$ at $(1,x')\theta$,
\begin{align*}
\bar m_{\theta,1}(\theta,x)=-\frac{d}{d\theta'} P(W_i^H\le (1,x')\theta|X_i=x)
=-f_{W_i^H|X_i=x}((1,x')\theta|x)(1,x').
\end{align*}
This is continuous in $(\theta,x)$ in a small enough neighborhood of any $(\theta_0,\tilde x)$ with $\bar m_{\theta,1}(\theta_0,\tilde x)=0$, since $f_{W_i^H|X_i=x}(w|x)$ is continuous for $w$, $x$ in a neighborhood of
at $x=\tilde x$ and $w=(1,\tilde x')\theta_0$ for any such $\theta_0$ and $\tilde x$ by Assumption \ref{W_density_assump}.

\end{proof}

\subsection{Selection Model}\label{selection_model_sec}

The interval regression model contains, as a special case, an approach to selection models based on bounds suggested in \citet{manski_nonparametric_1990}.
In particular, consider a selection model in which we are interested in the mean of $Y_i^*$, which is not always observed.
Suppose that $Y_i^*$ is known to take values in $[\underline Y, \overline Y$] for some fixed $\underline Y$ and $\overline Y$, and a variable $X_i$ is available such that $E(Y_i^*|X_i)=E(Y_i^*)$ (i.e. $Y_i^*$ is mean independent of $X_i$), and such that $X_i$ shifts the conditional probability of observing $Y_i^*$.  For example, we may be interested in the offer wage $Y_i^*$, which is typically only observed when individual $i$ actually works.  In this case, the variable $X_i$ can be taken to be anything that shifts labor force participation through the opportunity cost of working (such as income from other sources such as family or government benefits) while being independent of the distribution of offer wages.

Let $D_i$ denote an indicator variable that is $1$ when $Y_i^*$ is observed and $0$ otherwise.  We observe $(X_i,Y_i,D_i)$ where $Y_i=D_i\cdot Y_i^*$.  Following \citet{manski_nonparametric_1990}, note that, letting $W_i^L=Y_i\cdot D_i+\underline Y\cdot (1-D_i)$ and $W_i^L=Y_i\cdot D_i+\overline Y\cdot (1-D_i)$, we have $W_i^L\le Y_i^*\le W_i^H$ with probability one.  Letting $\theta=E(Y_i^*)$ and using the fact that $E(Y_i^*)=E(Y_i^*|X_i)$ a.s., we obtain our setup with
$m(W_i,X_i,\theta)=(W_i^H-\theta,\theta-W_i^L)'$.  This is a special case of the interval regression model of Section \ref{int_reg_sec}, with $(\theta,0_{1\times d_X})$ playing the role of $\theta$.  That is, we have the interval regression model with the slope parameter constrained to be zero.  Thus, if we consider a null value $\theta_0$ and a sequence of alternatives in the interval regression model for which the slope parameter is zero, the results of Section \ref{int_reg_sec} apply immediately to give primitive conditions for Assumption \ref{smoothness_assump_multi} (here Assumption \ref{bdd_y_assump_local} holds by construction and the assumption that $Y_i^*$ is bounded).

Note that
$E(W_i^H|X_i=x)=E(Y_i^*D_i|X_i=x)+\overline Y\cdot [1-P(D_i=1|X_i=x)]$.
Thus, a sufficient condition for $E(W_i^H|X_i=x)$ to be twice differentiable (or Lipschitz) is for $P(D_i=1|X_i=x)$ and $E(Y_i^*D_i|X_i=x)$ to be twice differentiable (or Lipschitz).
It is also worth noting that cases where $E(W_i^H|X_i=x)$ is minimized at the (possibly infinite) boundary of the support of $X_i$ are often of interest, and arise naturally in this setting (see, e.g., \citealt{andrews_semiparametric_1998} and \citealt{heckman_varieties_1990}).
While Assumption \ref{smoothness_assump_multi} formally precludes the possibility that the minimum of $E(W_i^H|X_i=x)$ is taken at the boundary of the support of $X_i$, such cases can be handled for certain forms of instrument based statistics by transforming the support of $X_i$ (see Section B.3 of \citealt{armstrong_weighted_2014} for an example of this type of argument applied to instrument based KS statistics).  We leave this extension for future research.

\subsection{Minimax Rates}\label{minimax_rates_sec}

The power results in this paper hold under conditions that are arguably common in practice in the set identified case.  However, there are certainly cases (data generating processes, points on the boundary of the identified set and directions for the local alternative) for which other conditions will be appropriate.  The purpose of this section is to show that, if the underlying distribution is constrained only by smoothness conditions and other regularity conditions, there will always exist a possible underlying distribution and sequence of local alternatives that satisfy these properties, with $\gamma$ governed by the smoothness conditions imposed.  Thus, any test that achieves good uniform power in these classes against alternatives that are closer than the pointwise rates derived here for CvM statistics will be preferred under a minimax criterion.
By results in \citet{armstrong_weighted_2014}, it follows that,
for certain classes of alternatives defined by smoothness conditions, the variance weighted KS statistic of \citet{armstrong_weighted_2014}, \citet{armstrong_multiscale_2016} and \citet{chetverikov_adaptive_2012} is preferred to the CvM statistics considered in this paper under a minimax criterion.

To formalize these ideas, the rest of this section considers classes $\mathcal{P}$ of underlying distributions and uses the notation $E_P$ and $\Theta_0(P)$ to denote expectations and the identified set under a distribution $P$.
In the results below, $d(\theta,\tilde\theta)$ denotes the Euclidean distance $\|\theta-\tilde\theta\|$.

\begin{theorem}\label{minimax_cvm_thm_highlevel}
Let $\phi_{CvM}(\theta)$ be one of the CvM tests defined in (\ref{test_def_rootn}) or (\ref{test_def_kern}) with the critical value satisfying Assumption \ref{cval_bound_assump}, the class $\mathcal{G}$ or kernel function $k$ satisfying Assumption \ref{g_kernel_assump}, and the measure $\mu$ satisfying Assumption \ref{mu_assump} for the instrument case and the weighting satisfying Assumption
\ref{kern_dens_assump} for the kernel case.
Let $\mathcal{P}$ be any class of distributions such that, for some $P^*\in\mathcal{P}$ and $\theta_0^*$ on the boundary of $\Theta_0(P^*)$, Assumptions \ref{smoothness_assump_multi}
and \ref{bdd_y_assump_local} hold, and either (a) $\theta_0^*$ is on the boundary of the convex hull of $\Theta_0(P^*)$ or (b) for some $a\in\mathbb{R}^{d_\theta}$ and a constant $K$, $d(\theta_0^*,\theta_0^*+a r)\le K\cdot d(\theta_0,\theta_0^*+a r)$ for all $\theta_0\in\Theta_0(P^*)$ and $r$ small enough.
Then, for a small enough constant $C_*>0$,
\begin{align*}
\limsup_{n\to\infty}\inf_{P\in\mathcal{P}}\inf_{\theta \text{ s.t. } d(\theta,\theta_0)\ge C_* r_n \text{ all } \theta_0\in\Theta_0(P)}
E_P\phi_{CvM}(\theta)=0,
\end{align*}
where $r_n$ is the rate for the given test in Section \ref{loc_power_sec} with $\gamma$ given in Assumption \ref{smoothness_assump_multi}.
\end{theorem}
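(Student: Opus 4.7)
The plan is to apply the pointwise local power results (Theorems~\ref{bdd_weight_lim_thm}, \ref{var_weight_lim_thm}, and \ref{kern_lim_thm} together with the corollary power statements that follow them) to the single distribution $P^*$ and boundary point $\theta_0^*$ supplied by the hypothesis, and then exhibit, for each large $n$, a parameter value $\theta_n$ that simultaneously (i) has Euclidean distance at least $C_* r_n$ from $\Theta_0(P^*)$, and (ii) satisfies $E_{P^*}\phi_{CvM}(\theta_n)\to 0$. Because $P^*\in\mathcal{P}$, this $\theta_n$ is an admissible witness for the inner infima over $P$ and over $\theta$, so the minimax power is bounded above by $E_{P^*}\phi_{CvM}(\theta_n)$, giving the stated limsup.

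The candidate is $\theta_n=\theta_0^*+a\,r_n$ for a fixed direction $a\in\mathbb{R}^{d_\theta}$ whose magnitude will be chosen small. Whichever of the three CvM tests is in play, the relevant pointwise theorem says the scaled statistic converges in probability to a quantity $r(a)$ (namely $r_{\text{bdd}}(a)$, $r_{\text{var}}(a)$, or $\tilde r_{\text{kern}}(a)$) that tends to $0$ as $a\to 0$. Assumption~\ref{cval_bound_assump} furnishes $c_0>0$ with $\hat c\ge c_0$ eventually. Picking $\|a\|$ small enough that $r(a)<c_0/2$, the scaled statistic lies below the critical value with probability tending to one, so $E_{P^*}\phi_{CvM}(\theta_n)\to 0$.

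The remaining step is the geometric lower bound $d(\theta_n,\Theta_0(P^*))\ge C_* r_n$, and this is where conditions (a) and (b) enter. Under (b) the bound is immediate: for $r_n$ small and any $\theta_0\in\Theta_0(P^*)$, $\|a\|r_n=d(\theta_0^*,\theta_n)\le K\,d(\theta_0,\theta_n)$, giving $d(\theta_n,\Theta_0(P^*))\ge \|a\|r_n/K$. Under (a), I take $a=\varepsilon\nu$ where $\nu$ is the outward unit normal to a supporting hyperplane of $\mathrm{conv}(\Theta_0(P^*))$ at $\theta_0^*$, so that $\nu\cdot(\theta_0-\theta_0^*)\le 0$ for every $\theta_0\in\Theta_0(P^*)$. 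Then
\begin{align*}
\|\theta_n-\theta_0\|\ge \nu\cdot(\theta_n-\theta_0)=\nu\cdot(\theta_0^*-\theta_0)+\varepsilon r_n\ge \varepsilon r_n,
\end{align*}
so one may take $C_*=\varepsilon$ (or $\|a\|/K$ in case (b)).

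The only real tension in the argument is that the direction $a$ must be small enough to keep $r(a)$ below the threshold $c_0$ while still being aligned with the geometry of $\Theta_0(P^*)$ near $\theta_0^*$. Since rescaling $a$ preserves its direction, once the direction has been chosen (along $\nu$ in case (a), along the vector supplied by (b) in case (b)), reducing $\|a\|$ delivers both properties. The supporting-hyperplane step in case (a) is the main substantive ingredient; everything else is a direct transcription of the pointwise power results combined with this elementary convexity observation.
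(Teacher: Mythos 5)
Your proposal is correct and takes essentially the same route as the paper: fix $P^*$ and $\theta_0^*$, invoke the pointwise power corollaries to show $E_{P^*}\phi_{CvM}(\theta_0^*+ar_n)\to 0$ for $\|a\|$ small, and use the supporting-hyperplane/condition-(b) geometry to verify $d(\theta_0^*+ar_n,\Theta_0(P^*))\ge C_*r_n$; the paper packages this by first showing (a) implies (b) with $K=1$, while you handle (a) and (b) in parallel, but the key step (supporting hyperplane in case (a)) is identical. The only small imprecision is describing the scaled kernel statistic as converging to a finite $r(a)$ in the regime $s<1/[2(\gamma+d_X/p+d_X/2)]$ -- there Theorem~\ref{kern_lim_thm} instead gives a $0$-or-$\infty$ dichotomy -- but since you appeal to the corollary power statements directly, the conclusion you need (power $\to 0$ for $\|a\|$ small) is still correctly supported.
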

\begin{proof}
Under condition (b), the result is immediate from the results in the main text, since the quantity in the display in the theorem is less than
$\limsup_{n\to\infty} E_{P^*}\phi_{CvM}(\theta_0^*+a C_* r_n K/\|a\|)$ for $P^*$, $\theta_0^*$ and $a$ given in the theorem.
The result follows since
condition (a) implies condition (b) with $K=1$.
To see this, note that, by the supporting hyperplane theorem, there exists a vector $a$ with $\|a\|=1$ such that $a'\tilde\theta_0\le a'\theta_0^*$ for all $\tilde\theta_0$ in the convex hull of $\Theta_0(P^*)$.  For this $a$ and any scalar $r>0$ and $\tilde\theta_0\in\Theta_0(P^*)$,
$d(\theta_0^*+a r,\tilde \theta_0)^2-d(\theta_0^*+a r,\theta_0)^2
=\|\theta_0^*+a r-\tilde \theta_0\|^2-r^2a'a
=\|\theta_0^*-\tilde \theta_0\|^2+2ra'(\theta_0^*-\tilde\theta_0)+r^2 a'a-r^2a'a
\ge \|\theta_0^*-\tilde \theta_0\|^2\ge 0$.
\end{proof}

A class $\mathcal{P}$ of underlying distributions will typically contain a $P^*$ satisfying these conditions so long as it is sufficiently unrestricted (e.g. if the only restrictions are smoothness conditions, etc.).  Theorems \ref{lipschitz_minimax_thm} and \ref{second_deriv_minimax_thm} below give primitive conditions for this in the interval regression model.

Under additional regularity conditions on $\mathcal{P}$, the inverse variance weighted KS statistic of \citet{armstrong_weighted_2014}, \citet{armstrong_multiscale_2016} and \citet{chetverikov_adaptive_2012} achieves a strictly better minimax rate than the upper bounds for CvM statistics given in Theorem \ref{minimax_cvm_thm_highlevel}.  This is stated in the next theorem, which follows immediately from results in \citet{armstrong_weighted_2014} (the results in \citealp{armstrong_weighted_2014} consider a stronger notion of coverage and power).

For concreteness, let us consider a specific version of the inverse variance weighted KS statistic considered in \citet{armstrong_weighted_2014}.  Let $T_{n,\infty,(\sigma\vee \sigma_n)^{-1}}(\theta)$ be given by (\ref{iv_stat_ks_eq}) with $\mathcal{G}=\{x\mapsto I(\|x-\tilde x\|\le h)|\tilde x\in\mathbb{R}^{d_X}, h\in[0,\infty)\}$ and $\omega_j(\theta,g)=\{\hat\sigma_j(\theta,g)\vee [(\log n)^2/n]\}^{-1}$.  Let $\phi_{n,\infty,(\sigma\vee \sigma_n)^{-1}}(\theta)$ be given by (\ref{test_def_rootlogn})
with this definition of $T_{n,\infty,(\sigma\vee \sigma_n)^{-1}}(\theta)$ and
with $\hat c_{n,\infty,(\sigma\vee \sigma_n)^{-1}}$ given by the constant $K$ in Theorem 3.1 in \citet{armstrong_weighted_2014}.  In the interest of concreteness, the above formulation uses certain conservative constants and tuning parameters in defining the test $\phi_{n,\infty,(\sigma\vee \sigma_n)^{-1}}(\theta)$.  Less conservative and data driven methods for choosing these constants have been considered by \citet{armstrong_multiscale_2016} and \citet{chetverikov_adaptive_2012}.

\begin{theorem}\label{minimax_ks_thm_highlevel}
\begin{sloppypar}
Suppose that $\mathcal{P}$ satisfies Assumptions 4.1, 4.3, 4.4 and 4.5
in \citet{armstrong_weighted_2014}, with $\gamma$ taking the place of $\alpha$ in that paper.
Then
$\limsup_{n\to\infty}\sup_{P\in\mathcal{P}}\sup_{\theta_0\in\Theta_0(P)}
E_P\phi_{n,\infty,(\sigma\vee \sigma_n)^{-1}}(\theta_0)=0$
and, for a large enough constant $C^*$,
\begin{align*}
\liminf_{n\to\infty}\inf_{P\in\mathcal{P}}\inf_{\theta \text{ s.t. } d(\theta,\theta_0)\ge C^* [(\log n)/n]^{\gamma/(d_X+2\gamma)} \text{ all } \theta_0\in\Theta_0(P)}
E_P\phi_{n,\infty,(\sigma\vee \sigma_n)^{-1}}(\theta)=1.
\end{align*}
\end{sloppypar}
\end{theorem}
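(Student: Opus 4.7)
The plan is to obtain both displays as direct consequences of the uniform coverage and Hausdorff rate of convergence results for the inverse variance weighted KS statistic established in \citet{armstrong_weighted_2014}. The test $\phi_{n,\infty,(\sigma\vee \sigma_n)^{-1}}(\theta)$ as defined just above the theorem uses $\mathcal{G}=\{x\mapsto I(\|x-\tilde x\|\le h)\}$, truncated variance weighting with floor $(\log n)^2/n$, and the critical value equal to the constant $K$ from Theorem 3.1 of \citet{armstrong_weighted_2014}; this is exactly the test analyzed in that paper. The hypothesized Assumptions 4.1, 4.3, 4.4, 4.5 are imposed precisely so that the results of that paper apply, with $\gamma$ playing the role of $\alpha$. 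Writing $\hat C_n = \{\theta : \sqrt{n/\log n}\, T_{n,\infty,(\sigma\vee \sigma_n)^{-1}}(\theta) \le K\}$ for the induced confidence region, the event $\{\phi_{n,\infty,(\sigma\vee \sigma_n)^{-1}}(\theta)=0\}$ coincides with $\{\theta \in \hat C_n\}$, so both displays can be translated into statements about $\hat C_n$ and its Hausdorff distance to $\Theta_0(P)$.

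For the first display, I would invoke the uniform coverage statement from \citet{armstrong_weighted_2014}, which asserts $\inf_{P\in\mathcal{P}}P(\Theta_0(P)\subseteq \hat C_n)\to 1$. This is the stronger coverage statement alluded to parenthetically in the paper, and it is what allows the supremum over $\theta_0\in\Theta_0(P)$ to be taken inside the probability: for any $P$ and any $\theta_0\in\Theta_0(P)$, the event $\{\phi_{n,\infty,(\sigma\vee \sigma_n)^{-1}}(\theta_0)=1\}$ is contained in $\{\Theta_0(P)\not\subseteq \hat C_n\}$, and the latter has probability tending to zero uniformly in $P\in\mathcal{P}$. Taking the supremum over $\theta_0\in\Theta_0(P)$ and then over $P\in\mathcal{P}$ yields the first display.

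For the second display, I would use the Hausdorff rate result from \citet{armstrong_weighted_2014}, which gives a constant $C^*$ such that $P(\hat C_n \subseteq \{\theta : d(\theta,\Theta_0(P))\le C^*[(\log n)/n]^{\gamma/(d_X+2\gamma)}\})\to 1$ uniformly in $P\in\mathcal{P}$. Contrapositively, any $\theta$ with $d(\theta,\theta_0)\ge C^*[(\log n)/n]^{\gamma/(d_X+2\gamma)}$ for all $\theta_0\in\Theta_0(P)$ lies outside $\hat C_n$ with probability tending to $1$, uniformly in $P$. This is exactly the event $\{\phi_{n,\infty,(\sigma\vee\sigma_n)^{-1}}(\theta)=1\}$, so taking the infimum over such $\theta$ and over $P\in\mathcal{P}$ gives the second display. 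The only nontrivial step is the bookkeeping to match up the present formulation with that of \citet{armstrong_weighted_2014}, namely checking that the class of indicators of balls satisfies the covering number condition there (a standard VC consequence), that the variance floor $(\log n)^2/n$ lies in the admissible range, and that the conditional moment inequality model as posed here fits the framework of that paper; none of these is a genuine obstacle, and the real work has already been done in \citet{armstrong_weighted_2014}.
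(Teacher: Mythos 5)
Your proposal is correct and takes essentially the same approach as the paper: translate the rejection event $\{\phi_{n,\infty,(\sigma\vee\sigma_n)^{-1}}(\theta)=1\}$ into non-containment in the confidence region and then invoke the uniform coverage and Hausdorff rate-of-convergence results (Theorems 4.2 and 4.3) from \citet{armstrong_weighted_2014}. The paper's proof carries out precisely this reduction, deriving the second display via the chain of inequalities bounding power below by $P\bigl(d_H(\Theta_0(P),\mathcal{C}_n)<C^*[(\log n)/n]^{\gamma/(d_X+2\gamma)}\bigr)$.
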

\begin{proof}

Since Assumptions 3.1-3.3 in \citet{armstrong_weighted_2014} follow by definition of the statistic, the result follows from Theorem 4.2 in that paper, with Assumption 4.2(i) in \citet{armstrong_weighted_2014} following from Theorem 4.3 in that paper (since Assumption 4.6 and 4.2(ii) in that paper hold by construction).  For $\mathcal{C}_n$ the setwise confidence set constructed from $\phi_{n,\infty,(\sigma\vee \sigma_n)^{-1}}(\theta)$ in \citet{armstrong_weighted_2014},
\begin{align*}
&\inf_{P\in\mathcal{P}}\inf_{\theta \text{ s.t. } d(\theta,\theta_0)\ge C^* [(\log n)/n]^{\gamma/(d_X+2\gamma)} \text{ all } \theta_0\in\Theta_0(P)}
E_P\phi_{n,\infty,(\sigma\vee \sigma_n)^{-1}}(\theta)  \\
&=\inf_{P\in\mathcal{P}}\inf_{\theta \text{ s.t. } d(\theta,\theta_0)\ge C^* [(\log n)/n]^{\gamma/(d_X+2\gamma)} \text{ all } \theta_0\in\Theta_0(P)}
P(\theta\not\in \mathcal{C}_n)  \\
&\ge \inf_{P\in\mathcal{P}}
P(\theta\not\in \mathcal{C}_n \text{ all }
{\theta \text{ s.t. } d(\theta,\theta_0)\ge C^* [(\log n)/n]^{\gamma/(d_X+2\gamma)} \text{ all } \theta_0\in\Theta_0(P)})  \\
&\ge \inf_{P\in\mathcal{P}}P(d_H(\Theta_0(P),\mathcal{C}_n)< C^* [(\log n)/n]^{\gamma/(d_X+2\gamma)})
\end{align*}
where $d_H(A,B)=\max\{\sup_{a\in A}\inf_{b\in B}d(a,b),\sup_{b\in B}\inf_{a\in A}d(a,b)\}$ is the Hausdorff distance.
This converges to $1$ for large enough $C^*$ by Theorem 4.2 in \citet{armstrong_weighted_2014}.
\end{proof}

The classes $\mathcal{P}$ used in Theorem \ref{minimax_ks_thm_highlevel} impose smoothness conditions on the conditional mean along with a condition on the derivative of the conditional mean with respect to $\theta$ (cases where the latter condition fails appear to favor KS statistics over CvM statistics as well; see Section A.4 of \citealp{armstrong_weighted_2014}).
Note that the rate
given above for the weighted KS statistic $\phi_{n,\infty,(\sigma\vee\sigma_n)^{-1}}$
corresponds to the minimax $L_\infty$ rate for nonparametric testing problems
\citep{lepski_asymptotically_2000} and to the minimax rate for estimating a conditional mean (\citealp{stone_optimal_1982}; see \citealp{menzel_consistent_2010} for related results for estimating the identified set in a setting similar to the one considered here).  The results here show that the CvM statistics considered here do not achieve this rate, and in fact have a minimax rate that is worse by at least a polynomial amount.

I now turn to the interval regression model and consider primitive conditions.
The next two theorems show that certain classes of underlying distributions for the interval regression model will always contain a distribution with a sequence of local alternatives that satisfy the conditions of this paper.
The conclusion of Theorem \ref{minimax_cvm_thm_highlevel} then follows immediately, since the identified set is convex in the interval regression model.
Theorem \ref{lipschitz_minimax_thm} considers the case where the constraints on the conditional mean embodied in $\mathcal{P}$ essentially only restrict the conditional means of $W_i^H$ and $W_i^L$ to a Lipschitz smoothness class.  Theorem \ref{second_deriv_minimax_thm} considers the smoother case where a bound is placed on the second derivative.
For primitive conditions for the conditions of Theorem \ref{minimax_ks_thm_highlevel} in the interval regression model for the case where $d_X=1$ and $\gamma=1$ or $2$, see \citet{armstrong_weighted_2014}, Section 6.2.

\begin{theorem}\label{lipschitz_minimax_thm}
Let $\mathcal{P}$ be any class of underlying distributions for $(X_i,W_i^H,W_i^L)$ in the interval regression model such that,
for all $P\in\mathcal{P}$,
$W_i^H$ and $W_i^L$ are bounded and
$X_i$ has a continuous density on its support $\mathcal{X}_P$.
Suppose that, for some set $\mathcal{X}\subseteq\mathbb{R}^{d_X}$ and some interval $[a,b]$, the following holds:
for any function $f:\mathcal{X}\to [a,b]$ such that
  \begin{align*}
    |f(x)-f(\tilde x)|\le K\|x-\tilde x\|,
  \end{align*}
there exists a $P\in\mathcal{P}$ such that $E_P(W_i^H|X_i)=f(X_i)$ and $E_P(W_i^L|X_i)\le a$ almost surely, and $\mathcal{X}_P=\mathcal{X}$.
Then there exists a $P^*\in\mathcal{P}$ and $\theta_0^*\in\Theta_0(P^*)$ that satisfies
the conditions of Theorem \ref{minimax_cvm_thm_highlevel},
with $\gamma=1$ and $\psi_{j,k}(u)=K$ in Assumption \ref{smoothness_assump_multi}.
\end{theorem}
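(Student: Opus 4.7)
The plan is to realize the worst-case configuration by taking the conditional mean of $W_i^H$ to be a Lipschitz cone that just touches the linear model from above at a single interior point. Fix $x_0 \in \operatorname{int}(\mathcal{X})$, take $\theta_0^* = (a, 0, \ldots, 0)' \in \mathbb{R}^{d_X+1}$ so that $(1, x')\theta_0^* \equiv a$, and pick $M \in (0, (b-a)/K]$ small enough that the closed ball of radius $M$ about $x_0$ lies inside $\mathcal{X}$. Define
\[
f(x) = a + K\min\bigl(\|x - x_0\|,\, M\bigr).
\]
This $f$ is Lipschitz with constant $K$ (the truncation preserves the Lipschitz bound) and maps $\mathcal{X}$ into $[a, a + KM] \subseteq [a, b]$. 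By the hypothesis of the theorem there exists $P^* \in \mathcal{P}$ with $E_{P^*}(W_i^H \mid X_i) = f(X_i)$ and $E_{P^*}(W_i^L \mid X_i) \leq a$ almost surely; I will additionally select a version of $P^*$ with $E_{P^*}(W_i^L \mid X_i) \leq a - \delta$ for some $\delta > 0$ and with $\operatorname{Var}_{P^*}(W_i^H \mid X_i = x_0) > 0$, both of which are consistent with the stated restrictions (the hypothesis pins down only the conditional means of the endpoints, so the class is rich enough to allow a non-degenerate conditional distribution and a strict gap on the lower side).

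Next I would show $\theta_0^* \in \partial \Theta_0(P^*)$. Membership is immediate: $\bar m_1(\theta_0^*, x) = f(x) - a \geq 0$ and $\bar m_2(\theta_0^*, x) = a - E_{P^*}(W_i^L \mid X_i = x) \geq \delta > 0$. To see it lies on the boundary, perturb along the intercept direction $v = (1, 0, \ldots, 0)'$: for any $\epsilon > 0$, $\bar m_1(\theta_0^* + \epsilon v, x_0) = f(x_0) - (a + \epsilon) = -\epsilon < 0$, so $\theta_0^* + \epsilon v \notin \Theta_0(P^*)$. Since $\Theta_0(P^*)$ is an intersection of halfspaces (one per $x$) and is therefore convex, its convex hull equals itself, so $\theta_0^*$ lies on the boundary of the convex hull of $\Theta_0(P^*)$, verifying condition~(a) of Theorem~\ref{minimax_cvm_thm_highlevel}.

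Finally I would check Assumptions~\ref{smoothness_assump_multi}, \ref{diff_m_assump}, and \ref{bdd_y_assump_local} for $(P^*, \theta_0^*)$. For Assumption~\ref{smoothness_assump_multi}: $\mathcal{X}_0 = \{x_0\}$ with $J(1) = \{1\}$ (since $\bar m_2 \geq \delta > 0$ everywhere); on $\{x : \|x - x_0\| < M\}$, $\bar m_1(\theta_0^*, x) - \bar m_1(\theta_0^*, x_0) = K\|x - x_0\|$ exactly, so part~(ii) holds with $\gamma(1,1) = 1$ and $\psi_{1,1}(u) \equiv K$ (yielding $\underline{\psi} = \overline{\psi} = K$); part~(i) follows from $f(x) - a \geq K\min(r, M)$ for $\|x - x_0\| \geq r$; parts~(iii)--(iv) follow from the continuous density of $X_i$ and the variance-positivity selection of $P^*$. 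Assumption~\ref{diff_m_assump} is immediate from $\partial_\theta \bar m_1 = -(1, x')$ and $\partial_\theta \bar m_2 = (1, x')$ being jointly continuous, together with continuity on the compact support of $X_i$ for the strict-positivity clauses (using the gap $\delta$ for $m_2$ and the cap $M$ for $m_1$). Assumption~\ref{bdd_y_assump_local} is immediate from boundedness of $W_i^H$ and $W_i^L$ and compactness of the support. The main obstacle is the construction in the first paragraph: one must simultaneously obtain the exact cone behavior near $x_0$, the global Lipschitz bound with constant $K$, and the codomain constraint $[a, b]$. The truncation $\min(\cdot, M)$ with $M \leq (b - a)/K$ resolves all three at once, after which the remaining verifications are routine given the interval-regression structure and the convexity of $\Theta_0(P^*)$.
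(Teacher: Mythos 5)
Your approach closely parallels the paper's: both build a cone of slope $K$ touching the linear model from above at a single interior point $x_0$. But your placement of the cone introduces a genuine gap. You put the cone at the bottom, $f(x) = a + K\min(\|x-x_0\|,M)$, and take $\theta_0^* = (a,0,\ldots,0)'$ so that $(1,x')\theta_0^* \equiv a$. Then $\bar m_2(\theta_0^*,x) = a - E_{P^*}(W_i^L\mid X_i=x)$, and you need this bounded away from zero (for part (i) of Assumption~\ref{smoothness_assump_multi} with $J(1)=\{1\}$ and indeed for $\mathcal{X}_0$ to be just $\{x_0\}$). The hypothesis only guarantees the existence of some $P$ with $E_P(W_i^L\mid X_i)\le a$ almost surely; it does not say the class contains a $P$ with $E_P(W_i^L\mid X_i)\le a-\delta$. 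It is consistent with the hypothesis that every $P$ realizing $f$ has $E_P(W_i^L\mid X_i=x)=a$ on a set of positive measure, in which case $\bar m_2$ is not bounded away from zero near $x_0$ and $x_0$ may even become a binding point for both moments. Your remark that ``the class is rich enough to allow $\ldots$ a strict gap on the lower side'' asserts exactly the richness the hypothesis fails to supply.

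The paper places the cone at the top: $E_P(W_i^H\mid X_i=x) = b - K[(\varepsilon - \|x-x_0\|)\vee 0]$ with $\theta_0 = (b-K\varepsilon,0)$ and $\varepsilon$ small enough that $b-K\varepsilon > a$. Then $\bar m_2(\theta_0,x) = (b-K\varepsilon) - E_P(W_i^L\mid X_i=x) \ge (b-K\varepsilon) - a > 0$ automatically, using only $E_P(W_i^L\mid X_i)\le a$ and $b>a$ --- no extra assumption on $\mathcal{P}$ is needed. Relocating your cone to the top repairs the argument; the rest of your verification then carries over essentially verbatim. (Your simultaneous claim to select a $P^*$ with positive conditional variance at $x_0$ is informal in the same way, but the paper makes the same implicit appeal when invoking Theorem~\ref{int_reg_lipschitz_thm}, so that particular step is not where your argument diverges from the paper's.)
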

\begin{proof}
Under these assumptions, there exists a distribution $P\in\mathcal{P}$ such that
$E_P(W_i^H|X_i=x)=b-K[(\varepsilon-\|x-x_0\|)\vee 0]$ for some $\varepsilon>0$ and $x_0$ on the interior of the support of $X_i$,
and $E_P(W_i^L|X_i=x)$ is bounded from above away from $b-2\varepsilon$.  For $\theta=(b-K\varepsilon,0)$, this satisfies the conditions of Theorem \ref{int_reg_lipschitz_thm}.
\end{proof}

\begin{theorem}\label{second_deriv_minimax_thm}
Let $\mathcal{P}$ be any class of underlying distributions for $(X_i,W_i^H,W_i^L)$ in the interval regression model such that,
for all $P\in\mathcal{P}$,
$W_i^H$ and $W_i^L$ are bounded and
$X_i$ has a continuous density on its support $\mathcal{X}_P$.
Suppose that, for some set $\mathcal{X}\subseteq\mathbb{R}^{d_X}$ and some interval $[a,b]$,
for any function $f:\mathcal{X}\to [a,b]$ such that
\begin{align*}
\left|\frac{d^2}{dt^2}f(x+tu)\right|\le K
\end{align*}
for all $u\in\mathbb{R}^{d_X}$ with $\|u\|=1$,
there exists a $P\in\mathcal{P}$ such that $E_P(W_i^H|X_i)=f(X_i)$ and $E_P(W_i^L|X_i)\le a$ almost surely, and $\mathcal{X}_P=\mathcal{X}$.
Then there exists a $P^*\in\mathcal{P}$ and $\theta_0^*\in\Theta_0(P^*)$ that satisfies
the conditions of Theorem \ref{minimax_cvm_thm_highlevel},
with $\gamma=2$ and $\psi_{j,k}(u)=K/2$ in Assumption \ref{smoothness_assump_multi}.
\end{theorem}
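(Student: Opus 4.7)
The plan is to mirror the proof of Theorem \ref{lipschitz_minimax_thm}: I will exhibit a specific $P^* \in \mathcal{P}$ and a parameter $\theta_0^* \in \Theta_0(P^*)$ at which the upper conditional moment inequality binds at exactly one point $x_0$, with the local behavior prescribed by $\psi_{j,k}(u)=K/2$ and $\gamma = 2$. Once such a $P^*$ and $\theta_0^*$ are in hand, the theorem immediately following Assumption \ref{int_reg_assump} delivers Assumptions \ref{smoothness_assump_multi}, \ref{diff_m_assump}, and \ref{bdd_y_assump_local}, and convexity of $\Theta_0(P^*)$ (automatic in the interval regression model, where the identified set is an intersection of half-spaces) gives condition (a) of Theorem \ref{minimax_cvm_thm_highlevel}.

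For the construction, I would pick $x_0$ in the interior of $\mathcal{X}$ and a radius $\varepsilon>0$ with $B(x_0,\varepsilon)\subseteq \mathcal{X}$ and $(K/2)\varepsilon^2 < (b-a)/2$, then choose $c_0\in (a, a+(b-a)/2)$. On $B(x_0,\varepsilon)$ I set $f(x)=c_0+(K/2)\|x-x_0\|^2$; outside I extend smoothly so that $f$ takes values in $[a,b]$, remains bounded away from $c_0$, and satisfies $|d^2/dt^2 f(x+tu)|\le K$ for every unit $u$. The hypothesis then supplies a $P^*\in\mathcal{P}$ with $E_{P^*}(W_i^H|X_i)=f(X_i)$ and $E_{P^*}(W_i^L|X_i)\le a$ a.s.; using the slack allowed by the hypothesis, I would take $E_{P^*}(W_i^L|X_i)$ to be a smooth function bounded strictly below $a$ so the lower moment is bounded away from binding.

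Setting $\theta_0^*=(c_0,0,\ldots,0)$, one checks that $\bar m_1(\theta_0^*,x)=f(x)-c_0$ equals $(K/2)\|x-x_0\|^2$ near $x_0$ (so $\psi_{1,1}(u)=K/2$ can be read off directly and $\gamma=2$), is bounded away from zero outside any neighborhood of $x_0$, and has Hessian $K\cdot I$ at $x_0$, which is positive definite; $\mathcal{X}_0=\{x_0\}$ is finite; and $\theta_0^*$ lies on the boundary of $\Theta_0(P^*)$ because any upward perturbation of the intercept violates the upper constraint at $x_0$. Hence Assumption \ref{int_reg_assump} holds at $(P^*,\theta_0^*)$, and the theorem connecting it to the main assumptions of the paper closes the argument.

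The main obstacle is the $C^2$ extension in the construction of $f$. Because the local quadratic already saturates $|d^2/dt^2 f(x+tu)|=K$ in every unit direction $u$, the extension has no slack, and a naive radial plateau $\phi(\|x-x_0\|)$ would generate a tangential second derivative of order $\phi'(r)/r$ that can diverge. A clean way to sidestep the issue when $\mathcal{X}$ is bounded and $b-a$ is large enough is to take $f$ globally quadratic; in general, I would substitute $\eta(\|x-x_0\|^2)$ for $\|x-x_0\|^2$, where $\eta$ is a $C^2$ concave function equal to the identity on a neighborhood of zero and nondecreasing and bounded above, and use the identity $d^2/dt^2 f(x+tu)|_{t=0}=K[\eta'(s)+2\eta''(s)(u\cdot(x-x_0))^2]$ (with $s=\|x-x_0\|^2$) to calibrate $\eta$ so that both ends of the bound $[-K,K]$ are respected on all of $\mathcal{X}$.
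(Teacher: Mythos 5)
Your proposal is correct and matches the paper's approach: the paper's own proof is a one-sentence sketch that takes exactly this form—construct an $E_{P^*}(W_i^H\mid X_i=x)$ with a unique interior minimum whose Hessian equals $KI$ and whose values stay in $[a,b]$, place $\theta_0^*$ at the binding intercept, and invoke the interval-regression verification theorem together with convexity of $\Theta_0(P^*)$ for condition (a) of Theorem \ref{minimax_cvm_thm_highlevel}. The $C^2$-extension worry you raise has a cleaner resolution than your $\eta$-calibration: for $f(x)=c_0+(K/2)\phi(\lVert x-x_0\rVert)$ the Hessian eigenvalues are $(K/2)\phi'(r)/r$ and $(K/2)\phi''(r)$, so taking $\phi(r)=r^2$ on $[0,\varepsilon]$ and letting $\phi''$ decrease continuously from $2$ down to $-2$ and back to $0$ over $[\varepsilon,3\varepsilon]$ keeps both eigenvalues in $[-K,K]$, because $\phi'(r)\le 2r$ and $|\phi''(r)|\le 2$ throughout.
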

\begin{proof}
The result follows by similar arguments to Theorem \ref{lipschitz_minimax_thm} since a function can be constructed for $E_P(W_i^H|X_i=x)$ that has a unique interior minimum with second derivative matrix $K I$ at its minimum and takes values between, say, $(a+b)/2$ and $b$.
\end{proof}

\bibliography{library}

\clearpage

\begin{figure}[h]
  \centering
\includegraphics[height=3.5in]{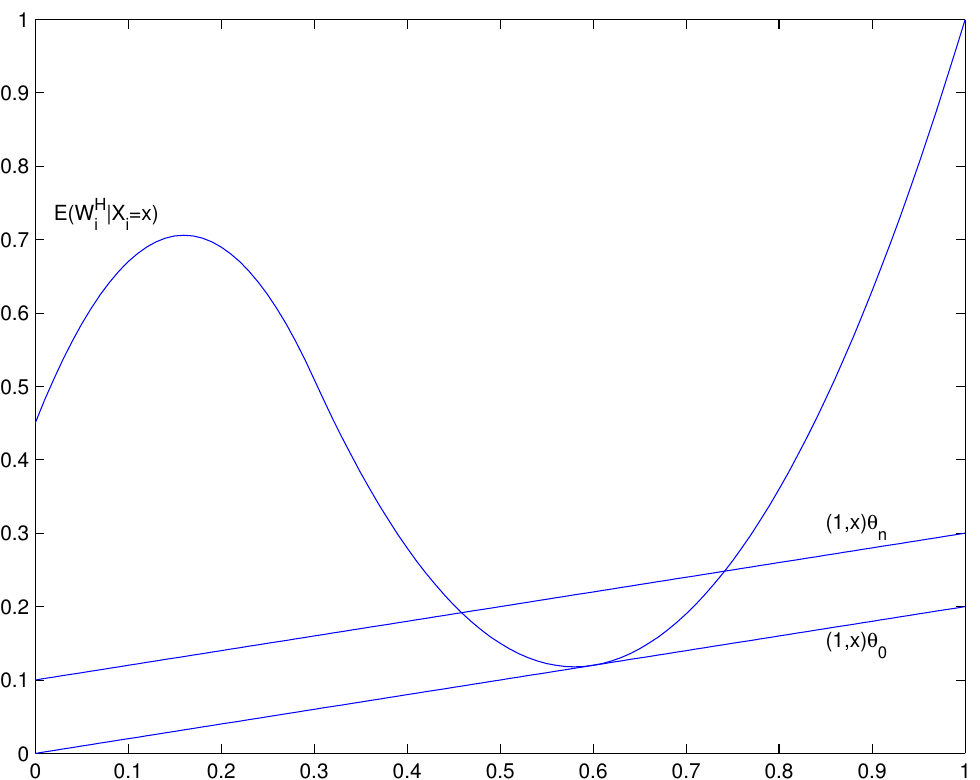}
\caption{Local Alternative for Interval Regression Model}\label{local_alt_fig}
\end{figure}

\begin{figure}[h]
  \centering
\includegraphics[height=3.5in]{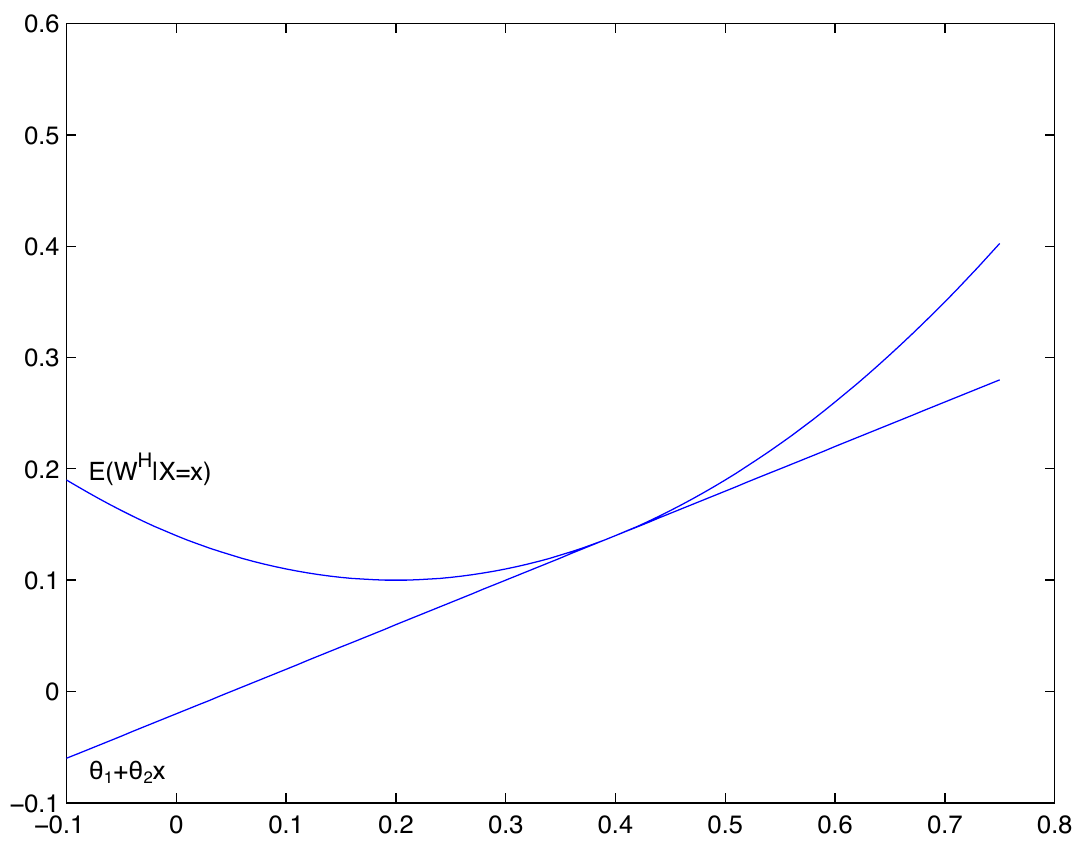}
\caption{Case where Assumption \ref{smoothness_assump_multi} holds with $\gamma=2$}\label{int_reg_smooth_fig}

\vspace{.5in}

  \centering  \includegraphics[height=3.5in]{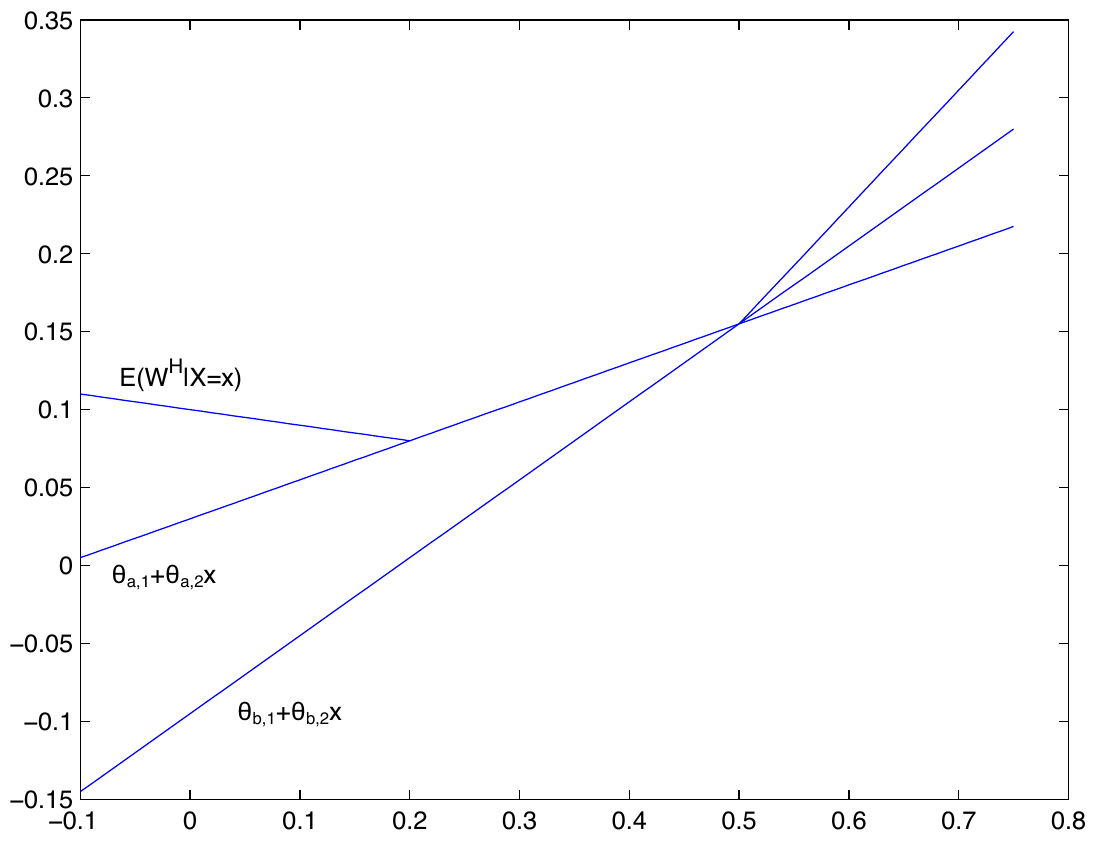}  \caption{Case where Assumption \ref{smoothness_assump_multi} does not hold ($\theta_a$) and
  case where Assumption \ref{smoothness_assump_multi} holds with $\gamma=1$ ($\theta_b$)}\label{int_reg_rootn_fig}
\end{figure}

\clearpage

\begin{table}
\centering
\begin{tabular}{l|l|l|l|l}
$h_n$&$\theta_1-\overline\theta_1$&$n=100$&$n=500$&$n=1000$\\\hline
&0.1&0.16&0.439&0.625\\
&0.2&0.343&0.92&0.997\\
$n^{-1/5}$&0.3&0.62&0.999&1\\
&0.4&0.883&1&1\\
&0.5&0.975&1&1\\\hline
&0.1&0.095&0.266&0.481\\
&0.2&0.201&0.715&0.929\\
$n^{-1/3}$&0.3&0.382&0.976&1\\
&0.4&0.606&0.999&1\\
&0.5&0.809&1&1\\\hline
&0.1&0&0.094&0.138\\
&0.2&0&0.255&0.404\\
$n^{-1/2}$&0.3&0&0.508&0.773\\
&0.4&0&0.812&0.982\\
&0.5&0&0.976&1
\end{tabular}

\caption{Power for Unweighted Instrument CvM Test under Design 1}
\label{uw_cvm_power_d1}
\end{table}

\begin{table}
\centering

\caption{Power for Unweighted Instrument KS Test under Design 1}
\label{uw_ks_power_d1}
\end{table}

\begin{table}
\centering

\caption{Power for Weighted Instrument CvM Test under Design 1}
\label{w_cvm_power_d1}
\end{table}

\begin{table}
\centering

\caption{Power for Weighted Instrument KS Test under Design 1 (from \citet{armstrong_multiscale_2016})}
\label{w_ks_power_d1}
\end{table}

\begin{table}
\centering

\caption{Power for Kernel CvM Test under Design 1}
\label{kern_cvm_power_d1}
\end{table}

\begin{table}
\centering

\caption{Power for Kernel KS Test under Design 1}
\label{kern_ks_power_d1}
\end{table}

\begin{table}
\centering
\begin{tabular}{l|l|l|l|l}
$h_n$&$\theta_1-\overline\theta_1$&$n=100$&$n=500$&$n=1000$\\\hline
&0.1&0.001&0.001&0.001\\
&0.2&0.009&0.029&0.049\\
$n^{-1/5}$&0.3&0.044&0.185&0.386\\
&0.4&0.082&0.524&0.867\\
&0.5&0.18&0.879&0.997\\\hline
&0.1&0.007&0.015&0.014\\
&0.2&0.015&0.067&0.129\\
$n^{-1/3}$&0.3&0.029&0.18&0.454\\
&0.4&0.087&0.525&0.856\\
&0.5&0.167&0.825&0.98\\\hline
&0.1&0&0.014&0.006\\
&0.2&0&0.025&0.032\\
$n^{-1/2}$&0.3&0&0.057&0.123\\
&0.4&0&0.163&0.286\\
&0.5&0&0.321&0.604
\end{tabular}

\caption{Power for Unweighted Instrument CvM Test under Design 2}
\label{uw_cvm_power_d2}
\end{table}

\begin{table}
\centering

\caption{Power for Unweighted Instrument KS Test under Design 2}
\label{uw_ks_power_d2}
\end{table}

\begin{table}
\centering

\caption{Power for Weighted Instrument CvM Test under Design 2}
\label{w_cvm_power_d2}
\end{table}

\begin{table}
\centering

\caption{Power for Weighted Instrument KS Test under Design 2 (from \citet{armstrong_multiscale_2016})}
\label{w_ks_power_d2}
\end{table}

\begin{table}
\centering

\caption{Power for Kernel CvM Test under Design 2}
\label{kern_cvm_power_d2}
\end{table}

\begin{table}
\centering

\caption{Power for Kernel KS Test under Design 2}
\label{kern_ks_power_d2}
\end{table}

\begin{table}
\centering
\begin{tabular}{l|l|l|l|l}
$h_n$&$\theta_1-\overline\theta_1$&$n=100$&$n=500$&$n=1000$\\\hline
&0.1&0.043&0.087&0.161\\
&0.2&0.099&0.487&0.722\\
$n^{-1/5}$&0.3&0.261&0.876&0.99\\
&0.4&0.48&0.995&1\\
&0.5&0.746&1&1\\\hline
&0.1&0.037&0.086&0.122\\
&0.2&0.079&0.297&0.528\\
$n^{-1/3}$&0.3&0.164&0.646&0.912\\
&0.4&0.296&0.937&0.999\\
&0.5&0.507&0.996&1\\\hline
&0.1&0&0.035&0.026\\
&0.2&0&0.087&0.118\\
$n^{-1/2}$&0.3&0&0.195&0.385\\
&0.4&0&0.427&0.703\\
&0.5&0&0.716&0.952
\end{tabular}

\caption{Power for Unweighted Instrument CvM Test under Design 3}
\label{uw_cvm_power_d3}
\end{table}

\begin{table}
\centering

\caption{Power for Unweighted Instrument KS Test under Design 3}
\label{uw_ks_power_d3}
\end{table}

\begin{table}
\centering

\caption{Power for Weighted Instrument CvM Test under Design 3}
\label{w_cvm_power_d3}
\end{table}

\begin{table}
\centering

\caption{Power for Weighted Instrument KS Test under Design 3 (from \citet{armstrong_multiscale_2016})}
\label{w_ks_power_d3}
\end{table}

\begin{table}
\centering

\caption{Power for Kernel CvM Test under Design 3}
\label{kern_cvm_power_d3}
\end{table}

\begin{table}
\centering

\caption{Power for Kernel KS Test under Design 3}
\label{kern_ks_power_d3}
\end{table}

\clearpage

\begin{LARGE}
\begin{center}
Supplement to ``On the Choice of Test Statistic for Conditional Moment Inequalities''
\end{center}
\end{LARGE}

\begin{large}
\begin{center}
Timothy B. Armstrong  \\
Yale University
\end{center}
\end{large}

\begin{large}
\begin{center}
\today
\end{center}
\end{large}

\bigskip

This supplementary appendix contains proofs of the results in the main text as well as auxiliary results.
Section \ref{aux_append} contains auxiliary results used in the rest of this appendix.  These results are restatements or simple extensions of well known results on uniform convergence, and do not constitute part of the main novel contribution of the paper.
Section \ref{cval_append} of this appendix derives critical values for CvM
statistics with variance weights.  Section \ref{proof_append} contains
proofs of the results in the body of the paper.

\section{Auxiliary Results}\label{aux_append}

We state some results on uniform convergence that will be used in the proofs of the main results.  The results in this section are essentially restatements of results used in \citet{armstrong_weighted_2014}, which are in turn minor extensions of results in \citet{pollard_convergence_1984}.
Throughout this section, we consider iid observations $Z_1,\ldots,Z_n$ and a sequence of classes of functions $\mathcal{F}_n$ on the sample space.
Let $\sigma(f)^2=Ef(Z_i)^2-(Ef(Z_i))^2$ and
let $\hat\sigma(f)^2=E_nf(Z_i)^2-(E_nf(Z_i))^2$.

\begin{lemma}\label{unif_conv_lemma}
Suppose that $|f(Z_i)|\le \overline f$ a.s. and that
\begin{align*}
\sup_{n\in\mathbb{N}}\sup_{Q} N(\varepsilon,\mathcal{F}_n,L_1(Q))
  \le A \varepsilon^{-W}
\end{align*}
for some $A$ and $W$, where $N$ is the covering number defined in \citet{pollard_convergence_1984} and the supremum over $Q$ is over all probability measures.
Let $\sigma_n$ be a sequence of constants with $\sigma_n\sqrt{n/\log n}\to\infty$.  Then, for some constant $C$,
\begin{align*}
\frac{\sqrt{n}}{\sqrt{\log n}}\sup_{f\in\mathcal{F}_n}\left|\frac{(E_n-E)f(Z_i)}{\sigma(f)\vee \sigma_n}
  \right|
\le C
\end{align*}
with probability approaching one and
\begin{align*}
\sup_{f\in\mathcal{F}_n}\left|\frac{(E_n-E)f(Z_i)}{\sigma(f)^2\vee \sigma_n^2}
  \right|
\stackrel{p}{\to} 0.
\end{align*}
\end{lemma}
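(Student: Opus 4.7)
The plan is to reduce both statements to a Bernstein-type maximal inequality for VC-type classes of bounded functions and then handle the self-normalization by a peeling argument in the variance level, crucially exploiting $\sigma_n\sqrt{n/\log n}\to\infty$. By the polynomial uniform covering-number bound and boundedness $|f(Z_i)|\le \overline f$, standard empirical-process results (as in \citet{pollard_convergence_1984} and as used in \citet{armstrong_weighted_2014}) yield, for constants $C_1,C_2$ depending only on $A$, $W$, and $\overline f$, that with probability at least $1-n^{-2}$ (say),
\[
\sup_{f\in\mathcal{F}_n,\,\sigma(f)\le\tau} |(E_n-E)f(Z_i)| \le C_1 \tau\sqrt{(\log n)/n} + C_2(\log n)/n
\]
for every deterministic $\tau\in[\sigma_n,\overline f]$.

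I would then peel $\mathcal{F}_n$ into variance shells $S_0=\{f:\sigma(f)\le\sigma_n\}$ and $S_j=\{f:2^{j-1}\sigma_n<\sigma(f)\le 2^j\sigma_n\}$ for $j=1,\ldots,J$ with $J=O(\log n)$ (since $\sigma(f)\le\overline f$). Apply the display with $\tau=\sigma_n$ on $S_0$ and $\tau=2^j\sigma_n$ on $S_j$, then take a union bound over the $J+1$ events. On each $S_j$ one has $\sigma(f)\vee\sigma_n\ge 2^{j-1}\sigma_n$, so dividing gives
\[
\sup_{f\in S_j}\frac{|(E_n-E)f(Z_i)|}{\sigma(f)\vee\sigma_n}\le 2C_1\sqrt{(\log n)/n}+\frac{2C_2\log n}{n\cdot 2^{j-1}\sigma_n}.
\]
The second term is $o(\sqrt{(\log n)/n})$ uniformly in $j$ by $\sigma_n\sqrt{n/\log n}\to\infty$; the base shell $S_0$ is handled identically. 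Multiplying through by $\sqrt{n/\log n}$ gives Conclusion 1.

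Conclusion 2 follows from Conclusion 1 via the elementary bound $\sigma(f)^2\vee\sigma_n^2 \ge \sigma_n\cdot(\sigma(f)\vee\sigma_n)$, checked separately in the cases $\sigma(f)\ge\sigma_n$ and $\sigma(f)<\sigma_n$. Hence
\[
\sup_{f\in\mathcal{F}_n}\frac{|(E_n-E)f(Z_i)|}{\sigma(f)^2\vee\sigma_n^2}\le \frac{1}{\sigma_n}\sup_{f\in\mathcal{F}_n}\frac{|(E_n-E)f(Z_i)|}{\sigma(f)\vee\sigma_n}=O_p\!\left(\sigma_n^{-1}\sqrt{(\log n)/n}\right)=o_p(1),
\]
again by the hypothesis on $\sigma_n$. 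The main obstacle is the self-normalization on functions with near-zero variance: a Hoeffding bound (which ignores $\sigma(f)$) would produce an $\overline f/\sigma_n$ factor that need not vanish. The Bernstein refinement with its $\sigma(f)$-dependent leading term is therefore essential, and the hypothesis $\sigma_n\sqrt{n/\log n}\to\infty$ is calibrated precisely so that the $(\log n)/n$ Bernstein remainder is dominated by $\sigma_n\sqrt{(\log n)/n}$ after normalization.
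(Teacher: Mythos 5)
Your proof is correct, and for the second display it coincides with the paper's argument: the paper also derives it from the first display via $\sigma(f)^2\vee\sigma_n^2\ge \sigma_n(\sigma(f)\vee\sigma_n)$ and $\sqrt{\log n}/(\sigma_n\sqrt{n})\to 0$. The difference is in how the first display is established. The paper does not prove it directly; it centers the class to $\{f-E_Pf(Z_i) : f\in\mathcal{F}_n\}$ (noting this preserves the VC-type covering bound by Lemma A.5 of \citet{armstrong_weighted_2014}) and then invokes Lemma A.1 of \citet{armstrong_weighted_2014} wholesale. Your argument instead re-derives that black-box lemma from scratch: a VC-class Bernstein-type maximal inequality with failure probability $n^{-2}$, peeling across $O(\log n)$ dyadic variance shells from $\sigma_n$ up to $\overline f$, and a union bound, with the condition $\sigma_n\sqrt{n/\log n}\to\infty$ used to dominate the $(\log n)/n$ Bernstein remainder and to control the number of shells and the $\log(\overline f/\tau)$ factor. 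That is essentially the proof the cited lemma would contain, so what you gain is self-containment and an explicit view of where the hypothesis on $\sigma_n$ is used; what the paper buys by citing is brevity and the avoidance of re-verifying the centering step and the covering bound for the centered class, both of which you implicitly assume when you invoke the Bernstein maximal inequality with the same constants $A,W,\overline f$.
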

\begin{proof}
The first display follows by applying Lemma A.1 in \citet{armstrong_weighted_2014} to the sequence of classes of functions $\{f-E_Pf(Z_i)|f\in\mathcal{F}_n\}$, which satisfies the conditions of that lemma by Lemma A.5 in \citet{armstrong_weighted_2014}.  The second display follows from the first display since
\begin{align*}
\sup_{f\in\mathcal{F}_n}\left|\frac{(E_n-E)f(Z_i)}{\sigma(f)^2\vee \sigma_n^2}
  \right|
\le\frac{1}{\sigma_n} \sup_{f\in\mathcal{F}_n}\left|\frac{(E_n-E)f(Z_i)}{\sigma(f)\vee \sigma_n}
  \right|
=\frac{\sqrt{\log n}}{\sigma_n \sqrt{n}}\frac{\sqrt{n}}{\sqrt{\log n}} \sup_{f\in\mathcal{F}_n}\left|\frac{(E_n-E)f(Z_i)}{\sigma(f)\vee \sigma_n}
  \right|
\end{align*}
and $\sqrt{\log n}/(\sigma_n\sqrt{n})\to 0$.
\end{proof}

\begin{lemma}\label{sigma_unif_conv_lemma}
Under the conditions of Lemma \ref{unif_conv_lemma},
\begin{align*}
\sup_{f\in\mathcal{F}_n} \left|\frac{\hat\sigma(f)\vee \sigma_n}{\sigma(f)\vee \sigma_n}-1\right|\stackrel{p}{\to} 0.
\end{align*}
\end{lemma}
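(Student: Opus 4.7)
The plan is to relate the difference $\hat\sigma(f)^2-\sigma(f)^2$ to uniform empirical process bounds supplied by Lemma \ref{unif_conv_lemma}, then pass from the squared quantities to the quantities themselves. Concretely, writing $\mu(f)=Ef(Z_i)$, I would use the algebraic identity
\begin{align*}
\hat\sigma(f)^2-\sigma(f)^2
=(E_n-E)f^2 - (E_nf-Ef)\cdot(E_nf+Ef),
\end{align*}
together with the fact that $|E_nf+Ef|\le 2\overline f$ under the boundedness assumption. Applying the first display of Lemma \ref{unif_conv_lemma} to the class $\mathcal{F}_n$ controls $|(E_n-E)f|$ by $C\sqrt{\log n/n}\,(\sigma(f)\vee\sigma_n)$ uniformly with probability approaching one.

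The next ingredient is applying the same lemma to the squared class $\mathcal{F}_n^{(2)}=\{f^2:f\in\mathcal{F}_n\}$. I would verify that this class inherits the hypotheses: it is bounded by $\overline f^{\,2}$, and since $|f_1^2-f_2^2|\le 2\overline f|f_1-f_2|$ the covering numbers satisfy $N(\varepsilon,\mathcal{F}_n^{(2)},L_1(Q))\le N(\varepsilon/(2\overline f),\mathcal{F}_n,L_1(Q))\le A(2\overline f)^W\varepsilon^{-W}$, giving the same polynomial form. To translate the resulting bound $C\sqrt{\log n/n}\,(\sigma(f^2)\vee\sigma_n)$ into one involving $\sigma(f)\vee\sigma_n$, I would show $\sigma(f^2)\le C_1\overline f\,\sigma(f)$: expand $f^2-Ef^2=(f-\mu(f))^2-\sigma(f)^2+2\mu(f)(f-\mu(f))$, square, take expectations, and bound each term using $|f-\mu(f)|\le 2\overline f$ and $\sigma(f)\le \overline f$. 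Combining these yields, uniformly in $f\in\mathcal{F}_n$ with probability approaching one,
\begin{align*}
\bigl|\hat\sigma(f)^2-\sigma(f)^2\bigr|\le C_2\sqrt{\log n/n}\,\bigl(\sigma(f)\vee\sigma_n\bigr).
\end{align*}

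Finally, I would pass from squares to the truncated standard deviations. Using $(a\vee c)^2=a^2\vee c^2$ for nonnegative numbers and the elementary contraction $|a\vee c-b\vee c|\le|a-b|$, we get
\begin{align*}
\bigl|(\hat\sigma(f)\vee\sigma_n)^2-(\sigma(f)\vee\sigma_n)^2\bigr|
\le \bigl|\hat\sigma(f)^2-\sigma(f)^2\bigr|.
\end{align*}
Factoring the left-hand side as $\bigl|(\hat\sigma(f)\vee\sigma_n)-(\sigma(f)\vee\sigma_n)\bigr|\cdot\bigl((\hat\sigma(f)\vee\sigma_n)+(\sigma(f)\vee\sigma_n)\bigr)$, dividing by $(\sigma(f)\vee\sigma_n)^2$, and using that $(\hat\sigma(f)\vee\sigma_n)/(\sigma(f)\vee\sigma_n)+1\ge 1$ gives
\begin{align*}
\left|\frac{\hat\sigma(f)\vee\sigma_n}{\sigma(f)\vee\sigma_n}-1\right|
\le \frac{C_2\sqrt{\log n/n}\,(\sigma(f)\vee\sigma_n)}{(\sigma(f)\vee\sigma_n)^2}
\le \frac{C_2\sqrt{\log n/n}}{\sigma_n},
\end{align*}
which tends to zero by the hypothesis $\sigma_n\sqrt{n/\log n}\to\infty$. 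The main obstacle is the uniform control of $\sigma(f^2)$ by $\sigma(f)$; without it, the bound from Lemma \ref{unif_conv_lemma} on $(E_n-E)f^2$ would only give a normalization by $\sigma_n$ alone and would not suffice to dominate the region where $\sigma(f)$ is much larger than $\sigma_n$. Everything else is routine application of the auxiliary lemma plus standard manipulations of maxima.
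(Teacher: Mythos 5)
Your proof is correct and follows essentially the same route as the paper; the differences are in bookkeeping. The paper writes $\hat\sigma(f)^2-\sigma(f)^2=(E_n-E)[f-Ef(Z_i)]^2-[(E_n-E)f(Z_i)]^2$ and applies Lemma \ref{unif_conv_lemma} to the \emph{centered} squared class $\{(f-Ef(Z_i))^2:f\in\mathcal{F}_n\}$, using the comparison $\sigma[(f-Ef(Z_i))^2]^2\le E[f(Z_i)-Ef(Z_i)]^4\le 4\overline f^2\sigma(f)^2$; you use the equivalent identity $\hat\sigma(f)^2-\sigma(f)^2=(E_n-E)f^2-(E_nf-Ef)(E_nf+Ef)$ and the \emph{uncentered} class $\{f^2\}$, with the analogous bound $\sigma(f^2)\lesssim\overline f\,\sigma(f)$ and the $L_1$-covering-number inheritance you check by hand (the paper cites Lemma A.5 of \citealp{armstrong_weighted_2014} for this). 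For the passage from squares back to the truncated standard deviations, the paper invokes continuity of $t\mapsto\sqrt t$ at $1$; you make the same step explicit via $(a\vee c)^2=a^2\vee c^2$, the $1$-Lipschitz property of $a\mapsto a\vee c$, and factoring the difference of squares. Neither variant buys anything over the other, and the key ingredients---the two-term decomposition controlled by Lemma \ref{unif_conv_lemma} and the variance comparison for the squared function---are identical.
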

\begin{proof}
By continuity of $t\mapsto\sqrt{t}$ at $1$, it suffices to prove that
$\sup_{f\in\mathcal{F}_n} \left|\frac{\hat\sigma(f)^2\vee \sigma_n^2}{\sigma(f)^2\vee \sigma_n^2}-1\right|\stackrel{p}{\to} 0$.
We have
\begin{align*}
&\sup_{f\in\mathcal{F}_n} \left|\frac{\hat\sigma(f)^2\vee \sigma_n^2}{\sigma(f)^2\vee \sigma_n^2}-1\right|
=\sup_{f\in\mathcal{F}_n} \left|\frac{\hat\sigma(f)^2\vee \sigma_n^2-\sigma(f)^2\vee \sigma_n^2}{\sigma(f)^2\vee \sigma_n^2}\right|
\le \sup_{f\in\mathcal{F}_n} \left|\frac{\hat\sigma(f)^2-\sigma(f)^2}{\sigma(f)^2\vee \sigma_n^2}\right|.
\end{align*}
Note that
\begin{align}\label{samp_var_decom_eq}
\hat\sigma(f)^2-\sigma(f)^2
=(E_n-E)[f(Z_i)-Ef(Z_i)]^2-[(E_n-E)f(Z_i)]^2.
\end{align}
Since
$\sigma[(f-Ef(Z_i))^2]^2
\le E[f(Z_i)-Ef(Z_i)]^4
\le 4\overline f^2\sigma(f)^2$,
we have
\begin{align*}
\sup_{f\in\mathcal{F}_n}
\frac{\left|(E_n-E)[f(Z_i)-Ef(Z_i)]^2\right|}{\sigma(f)^2\vee\sigma_n^2}
\le \sup_{f\in\mathcal{F}_n}
\frac{\left|(E_n-E)[f(Z_i)-Ef(Z_i)]^2\right|}{\sigma[(f-Ef(Z_i))^2]^2\vee\sigma_n^2}
\cdot (4\overline f^2)\vee 1
\end{align*}
which converges in probability to zero by Lemma \ref{unif_conv_lemma} (using Lemma A.5 in \citealp{armstrong_weighted_2014} to verify that the sequence of classes of functions $\{[f-Ef(Z_i)]^2|f\in\mathcal{F}_n\}$ satisfies the conditions of the lemma).  Since
\begin{align*}
\frac{[(E_n-E)f(Z_i)]^2}{\sigma(f)^2\vee \sigma_n^2}\stackrel{p}{\to} 0
\end{align*}
by Lemma \ref{unif_conv_lemma}, the result now follows from this and the triangle inequality applied to (\ref{samp_var_decom_eq}).
\end{proof}

\begin{lemma}\label{lp_bound_lemma}
Suppose that $|f(Z_i)|\le \overline f$ and that
$\sigma_n\sqrt{n}\ge 1$.
Then
\begin{align*}
E\left|\frac{\sqrt{n}(E_n-E)f(Z_i)}{\sigma(f)\vee \sigma_n}\right|^p
  \le C_{p,\overline f}
\end{align*}
for a constant $C_{p,\overline f}$ that depends only on $p$ and $\overline f$.
\end{lemma}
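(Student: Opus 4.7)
The plan is to split into two regimes according to whether $p\le 2$ or $p>2$, since the former is easy and the latter requires a moment inequality for sums of iid bounded variables.

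First I would handle $1\le p\le 2$ by Jensen's inequality. Write $Y_i=f(Z_i)-Ef(Z_i)$, so $\sqrt{n}(E_n-E)f(Z_i)=n^{-1/2}\sum Y_i$ has mean zero and variance $\sigma(f)^2$. Since $t\mapsto t^{p/2}$ is concave on $[0,\infty)$ for $p\le 2$,
\begin{align*}
E\left|\tfrac{\sqrt{n}(E_n-E)f(Z_i)}{\sigma(f)\vee\sigma_n}\right|^p
\le \left(E\left|\tfrac{\sqrt{n}(E_n-E)f(Z_i)}{\sigma(f)\vee\sigma_n}\right|^2\right)^{p/2}
=\left(\tfrac{\sigma(f)^2}{(\sigma(f)\vee\sigma_n)^2}\right)^{p/2}\le 1,
\end{align*}
so the constant $C_{p,\overline f}$ may be taken to be $1$.

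For $p>2$, I would invoke Rosenthal's inequality, which for iid mean-zero $Y_i$ gives $E|\sum Y_i|^p \le C_p\{n E|Y_1|^p + (n E Y_1^2)^{p/2}\}$ for a constant $C_p$ depending only on $p$. Since $|Y_1|\le 2\overline f$, the interpolation $E|Y_1|^p\le (2\overline f)^{p-2}\sigma(f)^2$ reduces the bound to $C_p\{n(2\overline f)^{p-2}\sigma(f)^2+n^{p/2}\sigma(f)^p\}$, and dividing by $n^{p/2}$ yields
\begin{align*}
E\left|\tfrac{1}{\sqrt{n}}\sum Y_i\right|^p
\le C_p\bigl\{n^{1-p/2}(2\overline f)^{p-2}\sigma(f)^2 + \sigma(f)^p\bigr\}.
\end{align*}
The key step is now to absorb the $n^{1-p/2}$ factor using the hypothesis $\sigma_n\sqrt{n}\ge 1$. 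Since $\sigma_n\ge n^{-1/2}$ and $p-2\ge 0$, we have $n^{-(p-2)/2}\le \sigma_n^{p-2}$, and then
\begin{align*}
n^{1-p/2}\sigma(f)^2 = n^{-(p-2)/2}\sigma(f)^2\le \sigma_n^{p-2}\sigma(f)^2\le (\sigma(f)\vee\sigma_n)^{p-2}(\sigma(f)\vee\sigma_n)^2=(\sigma(f)\vee\sigma_n)^p.
\end{align*}
Combined with $\sigma(f)^p\le(\sigma(f)\vee\sigma_n)^p$, this gives $E|n^{-1/2}\sum Y_i|^p\le C_p\{(2\overline f)^{p-2}+1\}(\sigma(f)\vee\sigma_n)^p$, so that dividing by $(\sigma(f)\vee\sigma_n)^p$ yields the desired bound with $C_{p,\overline f}=C_p\{(2\overline f)^{p-2}+1\}$.

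The only subtle point is locating the right form of the moment inequality and executing the interpolation $E|Y_1|^p\le (2\overline f)^{p-2}EY_1^2$; everything else is algebra using $\sigma_n\sqrt{n}\ge 1$ and the definition of the max. I do not anticipate any genuine obstacle here, since the condition $\sigma_n\sqrt{n}\ge 1$ is tailored precisely to make the $n(2\overline f)^{p-2}\sigma(f)^2$ Rosenthal term comparable to the Gaussian term $n^{p/2}\sigma(f)^p$ after rescaling.
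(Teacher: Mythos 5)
Your proof is correct, but it takes a genuinely different route from the paper's. The paper proves the lemma via Bernstein's inequality: it derives a sub-exponential tail bound
$P\bigl(\bigl|\sqrt n(E_n-E)f(Z_i)/(\sigma(f)\vee\sigma_n)\bigr|>t\bigr)\le \exp\bigl(-\tfrac12 t^2/(1+\tfrac23\overline f\,t)\bigr)$,
where the hypothesis $\sigma_n\sqrt n\ge1$ is used to absorb the $\sqrt n(\sigma(f)\vee\sigma_n)$ factor in the Bernstein denominator, and then obtains the moment bound by integrating the tail,
$E|\cdot|^p=\int_0^\infty P(|\cdot|^p>t)\,dt$. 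You instead split on $p\le 2$ versus $p>2$, disposing of the easy case by Jensen and handling $p>2$ with Rosenthal's inequality, the interpolation $E|Y_1|^p\le(2\overline f)^{p-2}\sigma(f)^2$, and the observation $n^{-(p-2)/2}\le\sigma_n^{p-2}$; the hypothesis $\sigma_n\sqrt n\ge1$ thus enters only to equalize Rosenthal's two terms after rescaling. Both arguments are valid and hinge on the same hypothesis in essentially the same role (making the ``rough'' term comparable to the Gaussian term). Your route yields an explicit constant $C_p\{(2\overline f)^{p-2}+1\}$ and stays at the level of moment inequalities, while the paper's route reuses the same exponential concentration machinery that underlies its Lemma \ref{unif_conv_lemma} and Lemma A.1 of \citet{armstrong_weighted_2014}, which keeps the toolbox small; neither approach has an advantage in generality for this lemma.
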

\begin{proof}
By Bernstein's inequality,
\begin{align*}
&P\left(\left|\frac{\sqrt{n}(E_n-E)f(Z_i)}{\sigma(f)\vee \sigma_n}\right| > t\right)
  \le \exp\left(-\frac{1}{2}\frac{n[\sigma(f)\vee \sigma_n]^2 t^2}
        {n \sigma^2(f)+\frac{1}{3}\cdot 2\overline f\cdot \sqrt{n}[\sigma(f)\vee \sigma_n] t}\right)  \\
&\le \exp\left(-\frac{1}{2}\frac{ t^2}
        {1+\frac{1}{3}\cdot 2\overline f\cdot \frac{t}{\sqrt{n}[\sigma(f)\vee \sigma_n]}}\right)
\le \exp\left(-\frac{1}{2}\frac{ t^2}
        {1+\frac{1}{3}\cdot 2\overline f\cdot t}\right)
\le \exp\left(-\frac{1}{2}\frac{ t^2}
        {1+\frac{1}{3}\cdot 2\overline f\cdot t}\right).
\end{align*}
For $t\ge 1$, this is bounded by
$\exp\left(-\frac{ t}
        {2+\frac{2}{3}\cdot 2\overline f}\right)$.
Thus,
\begin{align*}
&E\left|\frac{\sqrt{n}(E_n-E)f(Z_i)}{\sigma(f)\vee \sigma_n}\right|^p
=\int_{t=0}^{\infty} P\left(\left|\frac{\sqrt{n}(E_n-E)f(Z_i)}{\sigma(f)\vee \sigma_n}\right|^p > t\right) \, dt  \\
&\le 1+ \int_{t=1}^{\infty} \exp\left(-\frac{ t^{1/p}}
        {2+\frac{2}{3}\cdot 2\overline f}\right)\, dt
\end{align*}
which is finite and depends only on $p$ and $\overline f$ as claimed.
\end{proof}

\section{Critical Values for CvM Statistics with Variance Weights}\label{cval_append}

For bounded choices of
$\omega$ (which corresponds to $\sigma_n$ bounded away from zero when a
truncated variance weighting is used), \citet{kim_kyoo_il_set_2008} and
\citet{andrews_inference_2013} derive a $\sqrt{n}$ rate of convergence to
an asymptotic distribution that may be degenerate.
\citet{armstrong_weighted_2014} shows that letting $\sigma_n$ go to zero
generally decreases the rate of convergence to $\sqrt{n/\log n}$ for the
KS statistic $T_{n,\infty,\omega}$.  In contrast to the KS case, CvM statistics do not behave much differently if the variance is allowed to go to zero, although some additional arguments are needed to show this.

To deal with the behavior of the CvM statistic for small variances, I place the following condition on the measure over which the sample means are integrated.

\begin{assumption}\label{small_var_mu_assump}
$\mu(\{g|\sigma_j(\theta,g)\le \delta\})\to 0$ as $\delta\to 0$ for all
$j$.
\end{assumption}

This condition will hold for the choices of $\mathcal{G}$ and $\mu$
used in the body of the paper, and also allow for more general choices
of $\mathcal{G}$ and $\mu$.
I also make the following assumption on
the complexity of the class of functions $\mathcal{G}$, which is also
satisfied by the class used in the paper.

\begin{assumption}\label{covering_assump}
For some constants $A$ and $\varepsilon$,
the covering number $N(\varepsilon,\mathcal{G},L_1(Q))$ defined in
\citet{pollard_convergence_1984} satisfies
\begin{align*}
\sup_{Q} N(\varepsilon,\mathcal{G},L_1(Q)) \le A\varepsilon^{-W},
\end{align*}
whre the supremum is over all probability measures.
\end{assumption}

The following condition imposes a bounded distribution of the function $m$.

\begin{assumption}\label{bdd_assump}
For some nonrandom constant $\overline Y$,
$|m_j(W_i,\theta)|\le \overline Y$ for each $j$ with probability one.
\end{assumption}

\begin{theorem}
Suppose that $\sigma_n\sqrt{n/\log n}\to\infty$ and that Assumptions \ref{small_var_mu_assump}, \ref{covering_assump} and \ref{bdd_assump} hold.  Then, for $\theta\in\Theta_0$,
\begin{align*}
&n^{1/2}T_{n,p,(\hat\sigma\vee\sigma_n)^{-1},\mu}(\theta)
  \le\left[\int \sum_{j=1}^{d_Y}
  \left|\frac{\sqrt{n}(E_n-E)m_j(W_i,\theta)g(X_i)}{\hat\sigma_j(\theta,g)\vee \sigma_n}\right|_{-}^p \,
    d\mu(g)\right]^{1/p}  \\
&\stackrel{d}{\to} \left[\int \sum_{j=1}^{d_Y}
|\mathbb{G}_j(g,\theta)/\sigma_j(\theta,g)|_{-}^p \,
    d\mu(g)\right]^{1/p}
\end{align*}
where $\mathbb{G}(g,\theta)$ is a vector of Gaussian processes with covariance function
\begin{align*}
\rho(g,\tilde g)
=E[m(W_i,\theta)g(X_i)-Em(W_i,\theta)g(X_i)]
  [m(W_i,\theta)\tilde g(X_i)-Em(W_i,\theta)\tilde g(X_i)]'.
\end{align*}

\end{theorem}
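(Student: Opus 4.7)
The plan is to prove the two parts separately: the stated upper bound and the weak convergence of that upper bound.

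For the inequality, I would exploit that $\theta \in \Theta_0$ combined with the nonnegativity of $g \in \mathcal{G}$ (Assumption \ref{g_kernel_assump}) implies $E m_j(W_i,\theta)g(X_i) \ge 0$. Writing $E_n m_j g = (E_n - E) m_j g + E m_j g$ and using that $t \mapsto |t|_{-}$ is nonincreasing yields $|E_n m_j g|_{-} \le |(E_n - E) m_j g|_{-}$ pointwise in $g$. Dividing by the positive quantity $\hat\sigma_j(\theta,g) \vee \sigma_n$, raising to the $p$-th power, summing over $j$, and integrating against $\mu$ preserves the inequality and produces the claimed upper bound.

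For the weak convergence, I would proceed in two steps. \emph{First}, replace the random normalization $\hat\sigma_j \vee \sigma_n$ by the deterministic one $\sigma_j \vee \sigma_n$. By Assumptions \ref{covering_assump} and \ref{bdd_assump}, the class $\{m_j(\cdot,\theta)g(\cdot) : g \in \mathcal{G}\}$ satisfies the hypotheses of Lemma \ref{sigma_unif_conv_lemma}, which gives $\sup_{g} |\hat\sigma_j(\theta,g) \vee \sigma_n - \sigma_j(\theta,g) \vee \sigma_n|/(\sigma_j(\theta,g) \vee \sigma_n) = o_p(1)$. Combined with the uniform $L_p$ moment bound from Lemma \ref{lp_bound_lemma}, Fubini shows that swapping the denominators contributes only an $o_p(1)$ error in the $L_p(\mu)$ integral. \emph{Second}, show convergence of the deterministically normalized process. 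The covering number condition in Assumption \ref{covering_assump} and boundedness in Assumption \ref{bdd_assump} make $\{m_j(\cdot,\theta) g(\cdot) : g \in \mathcal{G}\}$ Donsker by standard empirical-process arguments, so $\sqrt{n}(E_n - E)m_j(W_i,\theta)g(X_i) \Rightarrow \mathbb{G}_j(g,\theta)$ in $\ell^\infty(\mathcal{G})$ with the stated covariance.

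The main obstacle is handling the region where $\sigma_j(\theta,g)$ is small, since there the truncation at $\sigma_n$ matters and the map $h \mapsto h/\sigma_j$ is not bounded. I would split $\mathcal{G}$ into $\mathcal{G}_\delta = \{g : \sigma_j(\theta,g) > \delta\}$ and its complement. On $\mathcal{G}_\delta$, for $n$ large one has $\sigma_j \vee \sigma_n = \sigma_j$, and the weighting $1/\sigma_j$ is bounded, so the continuous mapping theorem applied to $h \mapsto \int_{\mathcal{G}_\delta} |h(g)/\sigma_j(\theta,g)|_{-}^{p}\, d\mu(g)$ yields convergence in distribution to $\int_{\mathcal{G}_\delta} |\mathbb{G}_j(g,\theta)/\sigma_j(\theta,g)|_{-}^{p}\, d\mu(g)$. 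On the complement, Assumption \ref{small_var_mu_assump} gives $\mu(\mathcal{G} \setminus \mathcal{G}_\delta) \to 0$ as $\delta \to 0$; applying Fubini with the uniform $L_p$ bound from Lemma \ref{lp_bound_lemma} shows the pre-limit contribution is at most $C_{p,\overline{Y}} \mu(\mathcal{G} \setminus \mathcal{G}_\delta)$, and the same bound applies in the limit since $\mathbb{G}_j(g,\theta)/\sigma_j(\theta,g)$ is a Gaussian process with unit variance at each $g$. Sending $\delta \to 0$ after applying Slutsky and the convergence on $\mathcal{G}_\delta$ yields the stated distributional limit.
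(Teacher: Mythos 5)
Your proposal is structurally the same as the paper's proof: handle the inequality by noting $E m_j(W_i,\theta)g(X_i)\ge 0$ under the null so that $|E_n m_j g|_-\le|(E_n-E) m_j g|_-$; swap the random normalization $\hat\sigma_j\vee\sigma_n$ for $\sigma_j\vee\sigma_n$ using Lemma \ref{sigma_unif_conv_lemma}; split $\mathcal{G}$ into $\{\sigma_j>\delta\}$ and its complement; apply the FCLT and continuous mapping on the first region; bound the second region's contribution via the Bernstein-type $L_p$ bound of Lemma \ref{lp_bound_lemma}, Fubini, Markov, and Assumption \ref{small_var_mu_assump}; and send $\delta\to 0$. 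The paper encapsulates the small-variance estimate in Lemma \ref{int_bound_lemma} rather than doing it inline, and moves the $\hat\sigma/\sigma$ swap to the very end of that lemma, but these are presentational differences.

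The one place you are too quick is the final step, where you conclude by ``applying Slutsky and the convergence on $\mathcal{G}_\delta$.'' Slutsky's lemma does not apply here: the remainder $n^{1/2}T_n - Z_n(\delta)$ (the small-variance piece) is \emph{not} $o_p(1)$ for fixed $\delta$; you've only shown it is small with probability at least $1-\varepsilon(\delta)$, where $\varepsilon(\delta)\to 0$ as $\delta\to 0$ but not as $n\to\infty$. To finish correctly you need the two-sided CDF sandwich that the paper uses: for each $\varepsilon>0$ choose $\delta$ so that both remainders (pre-limit and in-limit) are below $\varepsilon$ with probability $\ge 1-\varepsilon$; then
\[
P(Z_n(\delta)\le t-\varepsilon)-\varepsilon \;\le\; P\bigl(n^{1/2}T_n\le t\bigr)\;\le\; P(Z_n(\delta)\le t),
\]
pass $n\to\infty$ using the convergence of $Z_n(\delta)$ to $Z(\delta)$, then pass $\delta\to 0$ using the analogous remainder bound for the Gaussian process, and finally let $\varepsilon\to 0$. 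You have all of the needed ingredients for this; the gap is only that ``Slutsky'' is the wrong device for assembling them, and a reviewer would push back on that sentence.
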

\begin{proof}
The result with the integral truncated over
$\{\sigma_j(\theta,g)\le \delta| \text{all $j$}\}$ follows immediately from standard arguments using functional central limit theorems.
This, along with Lemma \ref{int_bound_lemma} below gives, letting $Z_n(\delta)$ be the integral truncated at $\{\sigma_j(\theta,g)\le \delta| \text{all $j$}\}$ and $Z(\delta)$ be the limiting variable with this truncation,
\begin{align*}
P(Z_n(\delta)-\varepsilon\le t)-\varepsilon
\le P(n^{1/2}T_{n,p,\omega,\mu}(\theta)\le t)
\le P(Z_n(\delta)\le t)
\end{align*}
for large enough $n$ for any $\varepsilon>0$.
The $\liminf$ of the left hand size is greater than
$P(Z(\delta)\le t-2\varepsilon)-2\varepsilon$,
and the $\limsup$ of the right hand side is less than
$P(Z(\delta)\le t+\varepsilon)+\varepsilon$.  We can bound
$P(Z(\delta)\le t-2\varepsilon)-2\varepsilon$ from below by $P(Z\le
t-2\varepsilon)-2\varepsilon$, and we can bound $P(Z(\delta)\le
t+\varepsilon)+\varepsilon$ from above by
$P(Z\le t+2\varepsilon)+2\varepsilon$ by making $\delta$ small enough
by a version of Lemma \ref{int_bound_lemma} for the limiting process.
Since $\varepsilon$ was arbitrary, this gives the result.
\end{proof}

The proof of the theorem above uses the following auxiliary lemma,
which shows that functions $g$ with low enough variance have little
effect on the integral asymptotically.

\begin{lemma}\label{int_bound_lemma}
Fix $j$ and suppose that Assumptions \ref{small_var_mu_assump},
\ref{covering_assump} and \ref{bdd_assump} hold, and that the null
hypothesis holds under $\theta$.  Then, for every
$\varepsilon>0$, there exists a $\delta>0$ such that
\begin{align*}
P\left( \sqrt{n}\left[\int_{\sigma_j(\theta,g)\le\delta} 
    |E_nm_j(W_i,\theta)g(X_i)
      /(\hat\sigma_j(\theta,g)\vee\sigma_n)|_{-}^p \, d\mu(g)\right]^{1/p}
    >\varepsilon\right)
\le \varepsilon.
\end{align*}
\end{lemma}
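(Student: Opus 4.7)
\textbf{Proof proposal for Lemma \ref{int_bound_lemma}.}

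The plan is to replace the one-sided negative part by a two-sided deviation, normalize correctly using Lemma \ref{sigma_unif_conv_lemma}, and then control the resulting quantity in expectation via Fubini plus the uniform $L^p$-moment bound in Lemma \ref{lp_bound_lemma}. Assumption \ref{small_var_mu_assump} then lets us make the integral as small as we want by shrinking $\delta$.

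First, since the null holds at $\theta$, $E m_j(W_i,\theta)g(X_i) \ge 0$ for every $g \in \mathcal{G}$, so
$|E_n m_j(W_i,\theta)g(X_i)|_- \le |(E_n-E)m_j(W_i,\theta)g(X_i)|.$
Thus the quantity in the probability is bounded by
$\bigl[\int_{\sigma_j(\theta,g)\le\delta} |(E_n-E)m_j(W_i,\theta)g(X_i)/(\hat\sigma_j(\theta,g)\vee\sigma_n)|^p\, d\mu(g)\bigr]^{1/p}.$
By Lemma \ref{sigma_unif_conv_lemma} applied with $\mathcal{F}_n=\{w\mapsto m_j(w,\theta)g(x) : g\in\mathcal{G}\}$ (whose complexity is controlled by Assumption \ref{covering_assump} and whose envelope is $\overline Y\sup_g\|g\|_\infty$, finite on the effective class after a truncation argument), we have $\sup_{g\in\mathcal{G}}|(\hat\sigma_j(\theta,g)\vee\sigma_n)/(\sigma_j(\theta,g)\vee\sigma_n)-1|\stackrel{p}{\to}0$. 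On the event $E_n$ where this ratio lies in $[1/2,2]$ uniformly over $g$, we can replace $\hat\sigma_j\vee\sigma_n$ in the denominator by $\sigma_j\vee\sigma_n$ at the cost of a factor of $2$; since $P(E_n^c) \to 0$, we may take $n$ large enough that $P(E_n^c)\le\varepsilon/2$ and thereafter work on $E_n$.

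Next, by Markov's inequality, on $E_n$ the probability in the statement is bounded by
\begin{equation*}
\frac{2^p}{\varepsilon^p}\,E\!\int_{\sigma_j(\theta,g)\le\delta}\left|\frac{\sqrt n(E_n-E)m_j(W_i,\theta)g(X_i)}{\sigma_j(\theta,g)\vee\sigma_n}\right|^p d\mu(g).
\end{equation*}
By Fubini's theorem (the integrand is nonnegative), this equals
$(2^p/\varepsilon^p)\int_{\sigma_j(\theta,g)\le\delta} E\bigl|\sqrt n(E_n-E)m_j(W_i,\theta)g(X_i)/(\sigma_j(\theta,g)\vee\sigma_n)\bigr|^p\, d\mu(g).$
Now invoke Lemma \ref{lp_bound_lemma} with $Z_i = W_i$ and $f(w)=m_j(w,\theta)g(x)$: since $|m_j(W_i,\theta)g(X_i)|\le \overline Y\cdot\sup_{g\in\mathcal{G}}\|g\|_\infty$, the $p$-th absolute moment is bounded by a constant $C$ depending only on $p$ and the envelope, uniformly in $g$. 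Hence the displayed expression is at most $(2^p C/\varepsilon^p)\cdot\mu(\{g:\sigma_j(\theta,g)\le\delta\})$.

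Finally, by Assumption \ref{small_var_mu_assump}, $\mu(\{g:\sigma_j(\theta,g)\le\delta\})\to 0$ as $\delta\to 0$, so we can choose $\delta$ small enough that this bound is at most $\varepsilon/2$. Combined with the contribution from $E_n^c$, the total probability is at most $\varepsilon$, giving the lemma. The main technical point is the $\hat\sigma$-vs-$\sigma$ comparison in step two — once the denominator is deterministic, everything reduces to a moment bound times a shrinking measure, and the $\sigma_n$ floor in the denominator together with $\sigma_n\sqrt n\ge 1$ is precisely what makes Lemma \ref{lp_bound_lemma} applicable uniformly over $g$.
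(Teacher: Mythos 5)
Your proof is correct and follows essentially the same route as the paper's: use the null to bound $|E_n m_j g|_-$ by $|(E_n-E)m_j g|$, apply Fubini together with the uniform $L^p$ moment bound of Lemma \ref{lp_bound_lemma} to get a bound of the form (constant)$\times\mu(\{g:\sigma_j(\theta,g)\le\delta\})$, shrink $\delta$ via Assumption \ref{small_var_mu_assump}, and handle the $\hat\sigma_j$ versus $\sigma_j$ discrepancy with Lemma \ref{sigma_unif_conv_lemma}. The only cosmetic difference is that you make the $\hat\sigma_j\to\sigma_j$ replacement at the outset, whereas the paper performs it as the final step; the substance is identical.
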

\begin{proof}
We have
\begin{align*}
&E\int_{\sigma_j(\theta,g)\le\delta} 
    |\sqrt{n}E_nm_j(W_i,\theta)g(X_i)
      /(\sigma_j(\theta,g)\vee\sigma_n)|_{-}^p \, d\mu(g)  \\
&=\int_{\sigma_j(\theta,g)\le\delta} 
    E|\sqrt{n}E_nm_j(W_i,\theta)g(X_i)
      /(\sigma_j(\theta,g)\vee\sigma_n)|_{-}^p \, d\mu(g)  \\
&\le \int_{\sigma_j(\theta,g)\le\delta} 
    E|\sqrt{n}(E_n-E)m_j(W_i,\theta)g(X_i)
      /(\sigma_j(\theta,g)\vee\sigma_n)|^p \, d\mu(g)
\le \mu\left(\left\{g|\sigma_j(\theta,g)\le\delta\right\}\right)
  \cdot C_{p,\overline Y}
\end{align*}
for $C_{p,\overline Y}$ given in Lemma \ref{lp_bound_lemma}.  Applying Markov's inequality and using Assumption \ref{small_var_mu_assump}, it follows that, for any $\varepsilon>0$, there exists a $\delta$ such that
\begin{align*}
P\left(\sqrt{n}\left[\int_{\sigma_j(\theta,g)\le\delta} 
    |E_nm_j(W_i,\theta)g(X_i)
      /(\sigma_j(\theta,g)\vee\sigma_n)|_{-}^p \, d\mu(g)\right]^{1/p} > \varepsilon/2\right)
\le \varepsilon/2.
\end{align*}
The result follows since
\begin{align*}
&\sqrt{n}\left[\int_{\sigma_j(\theta,g)\le\delta} 
    |E_nm_j(W_i,\theta)g(X_i)
      /(\hat\sigma_j(\theta,g)\vee\sigma_n)|_{-}^p \, d\mu(g)\right]^{1/p}  \\
&\le \sqrt{n}\left[\int_{\sigma_j(\theta,g)\le\delta} 
    |E_nm_j(W_i,\theta)g(X_i)
      /(\sigma_j(\theta,g)\vee\sigma_n)|_{-}^p \, d\mu(g)\right]^{1/p}
\cdot \sup_{g} (\sigma_j(\theta,g)\vee\sigma_n)/(\hat\sigma_j(\theta,g)\vee\sigma_n)
\end{align*}
and $\sup_{g} (\sigma_j(\theta,g)\vee\sigma_n)/(\hat\sigma_j(\theta,g)\vee\sigma_n)\le 2$ with probability approaching one by Lemma \ref{sigma_unif_conv_lemma}.
\end{proof}

\section{Proofs}\label{proof_append}

This section contains proofs of the results in the body of the paper.
The proofs use a number of auxiliary lemmas, which are stated and
proved first.  In the following, $\theta_n$ is always assumed to be a sequence converging to $\theta_0$.

\begin{lemma}\label{kern_unif_lemma}
Under the assumptions of Theorem \ref{kern_lim_thm}, there exists a constant $C$ such that
\begin{align*}
\sup_{x\in\mathbb{R}^{d_X}}\frac{\sqrt{n}}{\sqrt{h^{d_X}\log n}}\left|(E_n-E)m(W_i,\theta_n)k((X_i-x)/h)\right|\le C
\end{align*}
and
\begin{align*}
\sup_{x\in\mathbb{R}^{d_X}}\frac{\sqrt{n}}{\sqrt{h^{d_X}\log n}}\left|(E_n-E)k((X_i-x)/h)\right|
  \le C
\end{align*}
with probability approaching one.  In addition,
\begin{align*}
\sup_{\{x|\omega_j(\theta_n,x)>0 \text{ some $j$}\}}
  \left|\frac{E_nk((X_i-h)/h)}{Ek((X_i-h)/h)}-1\right|\stackrel{p}{\to} 0.
\end{align*}
\end{lemma}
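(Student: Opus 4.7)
The plan is to reduce each of the three statements to an application of Lemma \ref{unif_conv_lemma} to an appropriate sequence of classes of functions, after controlling the variance of kernel-weighted sample means by $O(h_n^{d_X})$.

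For statement (1), take $\mathcal{F}_n^{(1)} = \{w \mapsto m_j(w,\theta_n) k((x_w - x)/h_n) : x \in \mathbb{R}^{d_X},\ j = 1,\ldots,d_Y\}$. This class is uniformly bounded by $\overline Y \cdot \sup_u k(u)$ by Assumption \ref{bdd_y_assump_local}, and its covering numbers inherit the polynomial bound $A\varepsilon^{-W}$ from $\mathcal{G}$ (Assumption \ref{g_kernel_assump}) using a product-class preservation result (cf.\ Lemma A.5 in \citealp{armstrong_weighted_2014}); the fixed envelope $m_j(\cdot,\theta_n)$ only rescales the covering number. A change of variables $u = (\tilde x - x)/h_n$ gives
\begin{align*}
E[m_j(W_i,\theta_n) k((X_i - x)/h_n)]^2 \le \overline Y^2 h_n^{d_X} \int k(u)^2 f_X(x + u h_n)\, du = O(h_n^{d_X}),
\end{align*}
uniformly in $x$, so $\sigma(f) \le C_1 h_n^{d_X/2}$ for all $f \in \mathcal{F}_n^{(1)}$. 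I then apply Lemma \ref{unif_conv_lemma} with truncation sequence $\tilde\sigma_n := C_1 h_n^{d_X/2}$. The bandwidth condition $h_n/n^{-s}\to c_h$ with $0<s<1/d_X$ gives $\tilde\sigma_n \sqrt{n/\log n} \asymp \sqrt{n h_n^{d_X}/\log n} \to \infty$, verifying the hypothesis of the lemma. Its conclusion reads $\sup_{f \in \mathcal{F}_n^{(1)}} |(E_n - E) f| \le C \sqrt{\log n/n}\,(\sigma(f) \vee \tilde\sigma_n) \le C' \sqrt{h_n^{d_X} \log n/n}$ w.p.a.\ 1, which, after multiplication by $\sqrt{n/(h_n^{d_X}\log n)}$, is statement (1). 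Statement (2) follows by identical reasoning applied to $\mathcal{F}_n^{(2)} = \{w \mapsto k((x_w - x)/h_n) : x \in \mathbb{R}^{d_X}\}$.

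For statement (3), I combine (2) with a lower bound on the denominator. Writing $E k((X_i - x)/h_n) = h_n^{d_X} \int k(u) f_X(x + u h_n)\, du$, on the set where $\omega_j(\theta_n, x) > 0$ for some $j$, Assumption \ref{kern_dens_assump} implies $f_X(\tilde x) \ge \varepsilon$ for all $\tilde x$ with $\|\tilde x - x\| \le \varepsilon$. Since $k$ has finite support, for $h_n$ small enough the integrand is bounded below by $\varepsilon k(u)$ on the support of $k$, so $E k((X_i - x)/h_n) \ge c h_n^{d_X}$ uniformly on the relevant set. Combined with (2), this yields
\begin{align*}
\sup_{x} \left|\frac{E_n k((X_i - x)/h_n)}{E k((X_i - x)/h_n)} - 1\right| \le \frac{C'\sqrt{h_n^{d_X} \log n/n}}{c\, h_n^{d_X}} = O\!\left(\sqrt{\log n/(n h_n^{d_X})}\right) \stackrel{p}{\to} 0,
\end{align*}
where the supremum is restricted to the set in the lemma and the conclusion uses $n h_n^{d_X}/\log n \to \infty$.

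The main technical hurdle is verifying that $\mathcal{F}_n^{(1)}$ inherits the polynomial covering number bound from $\mathcal{G}$, since the factor $m_j(W_i,\theta_n)$ depends on $W_i$ through more than just $X_i$ and so is not itself an element of $\mathcal{G}$. This is handled by the product-class lemma cited above, using uniform boundedness of $m_j$ from Assumption \ref{bdd_y_assump_local}. Once the covering bound is in place, the variance computation and the application of Lemma \ref{unif_conv_lemma} are routine, and the lower bound on $Ek((X_i-x)/h_n)$ needed for (3) is immediate from the structure of Assumption \ref{kern_dens_assump}.
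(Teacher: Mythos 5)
Your proof is correct and follows essentially the same route as the paper: bound the variance of each kernel-weighted summand by $O(h^{d_X})$, apply Lemma~\ref{unif_conv_lemma} with truncation $\sigma_n\asymp h^{d_X/2}$ (using $nh^{d_X}/\log n\to\infty$ from $s<1/d_X$), and for the third display divide the second by the uniform lower bound $Ek((X_i-x)/h)\gtrsim h^{d_X}$ available on the set where some $\omega_j(\theta_n,x)>0$ via Assumption~\ref{kern_dens_assump}. You spell out the covering-number preservation step that the paper's proof leaves implicit, but the argument is the same.
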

\begin{proof}
The first two displays follow from Lemma \ref{unif_conv_lemma} after noting that
\begin{align*}
var(m(W_i,\theta_n)k((X_i-x)/h))
\le \overline Y^2\overline k^2\overline f_X B^{d_X} h^{d_X}
\end{align*}
where $\overline k$ and $\overline f_X$ are bounds for $k$ and $f_X$, and $B$ is such that $k(u)=0$ whenever $\max_{1\le j\le d_X}|u_j|>B/2$, and similarly for
$var(k((X_i-x)/h))$, and that $\sqrt{h^{d_X}}\sqrt{n}/\sqrt{\log n}\to \infty$ under these assumptions.

For the last display, note that, for $x$ such that $\omega_j(\theta_n,x)>0$ for some $j$, $Ek((X_i-x)/h)\ge \underline f_X h^{d_X}\int k(u)\, du$ for large enough $n$, where $\underline f_X$ is a lower bound for the density of $X_i$ (which can be taken to be $\varepsilon$ in Assumption \ref{kern_dens_assump}).  Thus,
\begin{align*}
&\sup_{\{x|\omega_j(\theta_n,x)>0 \text{ some $j$}\}}
  \left|\frac{E_nk((X_i-h)/h)}{Ek((X_i-h)/h)}-1\right|
\le \sup_{x\in\mathbb{R}^{d_X}}
  \left|\frac{(E_n-E)k((X_i-h)/h)}{\underline f_X h^{d_X}\int k(u)\, du}\right|  \\
&=\sup_{x\in\mathbb{R}^{d_X}}
  \frac{\sqrt{n}}{\sqrt{h^{d_X}\log n}}
     \left|(E_n-E)k((X_i-h)/h)\right|
\cdot \frac{\sqrt{h^{d_X}\log n}}{\sqrt{n}\underline f_X h^{d_X}\int k(u)\, du}.
\end{align*}
The result then follows from the second display, since
$\frac{\sqrt{\log n}}{\sqrt{n h^{d_X}}}\to 0$.
\end{proof}

Let
\begin{align*}
\tilde T_{n,p,(\hat\sigma\vee \sigma_n)^{-1},\mu}(\theta)
  =\left[\int_{h>0}\int_x
\sum_{j=1}^{d_Y}\left|\frac{E_nm(W_i,\theta)k((X_i-x)/h)}{\sigma_j(\theta,x,h)\vee \sigma_n}\right|_{-}^pf_{\mu}(x,h)\, dx\, dh\right]^{1/p}
\end{align*}
and let
\begin{align*}
\tilde T_{n,p,\text{kern}}(\theta)
  =\left[\int_x
\sum_{j=1}^{d_Y}\left|\frac{E_nm(W_i,\theta)k((X_i-x)/h)}{Ek((X_i-x)/h)}\right|_{-}^p\omega_j(\theta,x)\, dx\, dh\right]^{1/p}.
\end{align*}
The notation $\sigma_j(\theta,\tilde x,h)$ is used to denote $\sigma_j(\theta,g)$ where $g(x)=k((x-\tilde x)/h)$.

\begin{lemma}\label{equiv_lemma_1}
Under Assumptions
\ref{g_kernel_assump},
\ref{mu_assump},
\ref{smoothness_assump_multi} and
\ref{bdd_y_assump_local},
\begin{align*}
\sqrt{n}T_{n,p,(\hat\sigma\vee \sigma_n)^{-1},\mu}(\theta_n)
= \sqrt{n}\tilde T_{n,p,(\hat\sigma\vee \sigma_n)^{-1},\mu}(\theta_n)(1+o_P(1))
\end{align*}
for any sequence $\theta_n\to\theta_0$.
If Assumption \ref{kern_dens_assump} holds as well, then
\begin{align*}
(nh^{d_X})^{1/2}T_{n,p,\text{kern}}(\theta_n)
  =(nh^{d_X})^{1/2}\tilde T_{n,p,\text{kern}}(\theta_n)(1+o_P(1))
\end{align*}
for any sequence $\theta_n\to\theta_0$.
\end{lemma}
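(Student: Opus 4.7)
The plan is to reduce both displays to uniform consistency of the appropriate denominator---$\hat\sigma_j$ for the first display and the sample kernel-weight sum $E_n k$ for the second---and then exploit the elementary identity $|z/a|_- = |z|_-/a$ for any real $z$ and any $a > 0$ to factor the denominator out of the negative-part operation. Both pieces then reduce to an algebraic two-sided sandwich bound on $T^p/\tilde T^p$, together with a standard $p$-th root argument.

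For the first display, I would invoke Lemma \ref{sigma_unif_conv_lemma} applied to the sequence of function classes $\mathcal{F}_n^{(j)} = \{m_j(\cdot,\theta_n)g(\cdot) : g \in \mathcal{G}\}$. Boundedness of the envelope holds by Assumption \ref{bdd_y_assump_local} and the boundedness of $k$ in Assumption \ref{g_kernel_assump}, and the uniform-in-$n$ polynomial $L_1$ covering bound on $\mathcal{F}_n^{(j)}$ is inherited from Assumption \ref{g_kernel_assump} by standard envelope-preservation arguments (cf.\ Lemma A.5 of \citet{armstrong_weighted_2014}). This yields
\begin{align*}
R_n \;\equiv\; \max_{j} \sup_{g \in \mathcal{G}} \left|\frac{\sigma_j(\theta_n,g) \vee \sigma_n}{\hat\sigma_j(\theta_n,g) \vee \sigma_n} - 1\right| \;\stackrel{p}{\to}\; 0.
\end{align*}
Factoring the positive denominator out of $|\cdot|_-^p$ in the definition of $T \equiv T_{n,p,(\hat\sigma\vee\sigma_n)^{-1},\mu}(\theta_n)$ and integrating over $g$ gives the sandwich bound $(1-R_n)_+^p \tilde T^p \le T^p \le (1+R_n)^p \tilde T^p$, where $\tilde T \equiv \tilde T_{n,p,(\hat\sigma\vee\sigma_n)^{-1},\mu}(\theta_n)$. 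Taking $p$-th roots and multiplying by $\sqrt{n}$ delivers $\sqrt{n}T = \sqrt{n}\tilde T\,(1+o_p(1))$.

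For the second display, I would use the last conclusion of Lemma \ref{kern_unif_lemma}, which says that
\begin{align*}
S_n \;\equiv\; \sup_{\{x\,:\,\omega_j(\theta_n,x)>0 \text{ some } j\}} \left|\frac{E_n k((X_i-x)/h)}{E k((X_i-x)/h)} - 1\right| \;\stackrel{p}{\to}\; 0.
\end{align*}
Since the integrand in $T_{n,p,\text{kern}}$ vanishes wherever $\omega_j(\theta_n,x) = 0$ for all $j$, writing $\hat{\bar m}_j(\theta_n,x) = \big[E_n m_j(W_i,\theta_n)k((X_i-x)/h) \big/ E k((X_i-x)/h)\big] \cdot \big[E k((X_i-x)/h) \big/ E_n k((X_i-x)/h)\big]$ and factoring the positive second factor out of $|\cdot|_-^p$ allows the same sandwich argument with $S_n$ in place of $R_n$, yielding $T_{n,p,\text{kern}}(\theta_n) = \tilde T_{n,p,\text{kern}}(\theta_n)(1+o_p(1))$, which gives the second display after multiplying by $(nh^{d_X})^{1/2}$.

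The main obstacle is bookkeeping rather than anything substantive: one has to check that the product classes $\{m_j(\cdot,\theta_n)g : g \in \mathcal{G}\}$ inherit the uniform polynomial $L_1$ covering bound from $\mathcal{G}$ (so that Lemmas \ref{unif_conv_lemma} and \ref{sigma_unif_conv_lemma} apply with constants not depending on $n$), and that in the kernel case the restriction of the supremum to $\{\omega_j(\theta_n,x) > 0\}$ in Lemma \ref{kern_unif_lemma} is exactly the region where the integrand is supported, so that no loss occurs from estimating the denominator only on this set.
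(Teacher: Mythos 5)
Your proposal is correct and takes essentially the same route as the paper: factor the ratio of denominators out of the negative-part operator, observe it is uniformly close to one (Lemma \ref{sigma_unif_conv_lemma} for the variance-weighted case, the last conclusion of Lemma \ref{kern_unif_lemma} for the kernel case), and conclude via an $L_p$-norm bound. The paper states this as the single inequality $|\sqrt{n}T - \sqrt{n}\tilde T| \le \sqrt{n}\tilde T\cdot\sup_{g,j}|\sigma_j/\hat\sigma_j - 1|$ obtained from the reverse Minkowski inequality, whereas you phrase it as a two-sided multiplicative sandwich on $T^p/\tilde T^p$; these are equivalent.
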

\begin{proof}
We have
\begin{align*}
|\sqrt{n}T_{n,p,(\hat\sigma\vee \sigma_n)^{-1},\mu}(\theta_n)-\sqrt{n}\tilde T_{n,p,(\hat\sigma\vee \sigma_n)^{-1},\mu}(\theta_n)|
\le \sqrt{n}\tilde T_{n,p,(\hat\sigma\vee \sigma_n)^{-1},\mu}(\theta)
  \cdot \sup_{x,j} \left|\frac{\sigma_j(\theta_n,x,h)\vee \sigma_n}{\hat \sigma_j(\theta_n,x,h)\vee \sigma_n}-1\right|.
\end{align*}
Thus, the first display follows from Lemma \ref{sigma_unif_conv_lemma}.

Similarly, for the second display,
\begin{align*}
&|(nh^{d_X})^{1/2}T_{n,p,\text{kern}}(\theta_n)-(nh^{d_X})^{1/2}\tilde T_{n,p,\text{kern}}(\theta_n)|  \\
&\le (nh^{d_X})^{1/2}\tilde T_{n,p,\text{kern}}(\theta_n)
  \cdot \sup_{\{x|\omega_j(\theta,x)>0 \text{ some $j$}\}} 
     \left|\frac{Ek((X_i-x)/h)}{E_nk((X_i-x)/h)}-1\right|,
\end{align*}
and the result follows from Lemma \ref{kern_unif_lemma}.
\end{proof}

Let
\begin{align*}
\tilde{\tilde T}_{n,p,(\hat\sigma\vee \sigma_n)^{-1},\mu}(\theta)
  =\left[\int_{h>0}\int_x
\sum_{j=1}^{d_Y}\left|\frac{Em(W_i,\theta)k((X_i-x)/h)}{\sigma_j(\theta,x,h)\vee \sigma_n}\right|_{-}^pf_{\mu}(x,h)\, dx\, dh\right]^{1/p}
\end{align*}
and let
\begin{align*}
\tilde{\tilde T}_{n,p,\text{kern}}(\theta)
  =\left[\int_x
\sum_{j=1}^{d_Y}\left|\frac{Em(W_i,\theta)k((X_i-x)/h)}{Ek((X_i-x)/h)}\right|_{-}^p\omega_j(\theta,x)\, dx\, dh\right]^{1/p}.
\end{align*}
Also define
\begin{align*}
\tilde{\tilde T}_{n,p,1,\mu}(\theta)
  =\left[\int_{h>0}\int_x
\sum_{j=1}^{d_Y}\left|Em(W_i,\theta)k((X_i-x)/h)\right|_{-}^pf_{\mu}(x,h)\, dx\, dh\right]^{1/p}.
\end{align*}

\begin{lemma}\label{inst_equiv_lemma_2}
Under Assumptions
\ref{g_kernel_assump},
\ref{mu_assump},
\ref{smoothness_assump_multi} and
\ref{bdd_y_assump_local},
\begin{align*}
\sqrt{n}\tilde T_{n,p,(\hat\sigma\vee \sigma_n)^{-1},\mu}(\theta_n)
=\sqrt{n}\tilde{\tilde T}_{n,p,(\hat\sigma\vee \sigma_n)^{-1},\mu}(\theta_n)+o_{P}(1).
\end{align*}
and
\begin{align*}
\sqrt{n} T_{n,p,1,\mu}(\theta_n)
=\sqrt{n}\tilde{\tilde T}_{n,p,1,\mu}(\theta_n)+o_{P}(1).
\end{align*}
\end{lemma}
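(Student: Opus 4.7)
The plan is to use the $1$-Lipschitz property of $|\cdot|_-$ together with the triangle inequality in $L_p(\mu)$ to reduce both displays to an $L_p$-bound on the (possibly normalized) centered empirical process $(E_n-E)m_j(W_i,\theta_n)k((X_i-\tilde x)/h)$, and then to show that this bound is $o_P(n^{-1/2})$ by splitting the integration domain into a shrinking neighborhood of the active set $\mathcal{X}_0\times\{0\}$ and its complement. For the second (bounded-weights) display, Minkowski combined with $\bigl||a|_--|b|_-\bigr|\le|a-b|$ gives
\begin{align*}
\sqrt n|T_{n,p,1,\mu}(\theta_n)-\tilde{\tilde T}_{n,p,1,\mu}(\theta_n)|\le\left[\int\sum_{j=1}^{d_Y}\bigl|\sqrt n (E_n-E)m_j(W_i,\theta_n)k((X_i-\tilde x)/h)\bigr|^p f_\mu(\tilde x,h)\,d\tilde x\,dh\right]^{1/p},
\end{align*}
and the analogous bound with an extra denominator $\sigma_j(\theta_n,\tilde x,h)\vee\sigma_n$ for the first display.

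Next, I decompose the integration into an inactive piece, where either $h$ is bounded below by a fixed $\delta>0$ or $\tilde x$ lies outside a small neighborhood of $\mathcal{X}_0$ with $h\gtrsim(\log n/n)^{1/d_X}$, and its complement (an active piece near $\mathcal{X}_0\times\{0\}$ together with a very-small-$h$ tail). On the inactive piece, Assumptions \ref{smoothness_assump_multi} and \ref{diff_m_assump} give $Em_j k\ge c h^{d_X}$ uniformly, while Lemmas \ref{unif_conv_lemma} and \ref{kern_unif_lemma} give uniform convergence of $E_n m_j k$ to $Em_j k$ at rate $\sqrt{h^{d_X}\log n/n}$; hence $E_n m_j k\ge 0$ uniformly with probability approaching one, and both $|\cdot|_-$ terms in $T$ and $\tilde{\tilde T}$ vanish identically there, contributing zero. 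On the active piece, Lemma \ref{lp_bound_lemma} gives the uniform pointwise bound $E\bigl|\sqrt n(E_n-E)m_j k\bigr|^p\le C h^{p d_X/2}$, and integrating against $f_\mu$ after the change of variables $v=(\tilde x-x_k)/a_n^{1/\gamma}$, $\tilde h=h/a_n^{1/\gamma}$ from Section \ref{intuition_sec} gives a bound of order $a_n^{(d_X+1)/(p\gamma)+d_X/(2\gamma)}$ on the $L_p$ norm, which is $o_P(1)$ since $a_n\to 0$ and the exponent is strictly positive. The very-small-$h$ tail is handled by Assumption \ref{mu_assump}: bounded density of $\mu$ gives $\mu(\{h\le h_n\})\le C h_n$, controlling the $L_p$ norm on the tail by a power of $h_n$, which is $o(1)$ for suitably chosen $h_n\to 0$.

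The first display is treated analogously, with Lemma \ref{sigma_unif_conv_lemma} used to replace $\hat\sigma$ by $\sigma$ in the denominator and the pointwise moment bound $E\bigl|\sqrt n(E_n-E)m_j k/(\sigma\vee\sigma_n)\bigr|^p\le C$ from Lemma \ref{lp_bound_lemma} replacing the unweighted moment bound. The main obstacle is the very-small-$h$ tail in the variance-weighted case: where $\sigma\le\sigma_n$, the truncation floor $\sigma_n$ controls the denominator, the naive $L_p$ bound over the full support is merely $O_P(1)$ rather than $o_P(1)$, and one must combine the hypothesis $\sigma_n(n/\log n)^{1/2}\to\infty$ with the bounded density of $\mu$ to show that the $\mu$-measure of $\{h:\sigma_j(\theta_n,\tilde x,h)\le\sigma_n\}$ shrinks fast enough for the integrated bound on this tail, together with the bound on the complementary inactive piece, to remain $o_P(n^{-1/2})$ after scaling by $\sqrt n$.
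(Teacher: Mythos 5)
Your overall strategy is the same as the paper's: bound $|\tilde T - \tilde{\tilde T}|$ (or $|T_{n,p,1,\mu} - \tilde{\tilde T}_{n,p,1,\mu}|$) by an $L_p(\mu)$ norm of the centered empirical process via Minkowski and the Lipschitz property of $|\cdot|_-$, use uniform convergence (Lemma \ref{unif_conv_lemma}) to show the sample moments stay nonnegative away from $\mathcal{X}_0\times\{0\}$, and control the contribution of the remaining region via the pointwise $L_p$ moment bound of Lemma \ref{lp_bound_lemma} and the small $\mu$-measure guaranteed by Assumption \ref{mu_assump}. So the skeleton is right, and your anticipation that the variance-weighted small-$h$ tail is the crux is also right.

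There are, however, three points where your write-up diverges from a correct completed proof. First, Lemma \ref{sigma_unif_conv_lemma} plays no role here: both $\tilde T$ and $\tilde{\tilde T}$ already carry the \emph{population} $\sigma_j(\theta,\tilde x,h)\vee\sigma_n$ in the denominator; the $\hat\sigma\to\sigma$ swap is the content of Lemma \ref{equiv_lemma_1}, which is a separate earlier step. Second, your resolution of the variance-weighted tail is under-specified and misattributed. The paper handles it by introducing an auxiliary sequence $\tilde\sigma_n$ with $\tilde\sigma_n\sqrt{n/\log n}\to\infty$ and $\tilde\sigma_n/\sigma_n\to 0$: uniform convergence over $\{\sigma\ge\tilde\sigma_n\}$ (not over $\{\sigma\ge\sigma_n\}$) is combined with the fixed-$\varepsilon$ bound $Em_jk\ge\eta\,\mathrm{var}[m_jk]/(\overline k\overline Y^2)$ for $h>\varepsilon$ to show that $\hat{\mathcal{G}}\subseteq\{h\le\varepsilon\}\cup\{\sigma_j<\tilde\sigma_n\}$ w.p.a.\ 1, and then the $L_p^p$ moment over $\{\sigma_j<\tilde\sigma_n\}$ picks up the multiplicative factor $(\tilde\sigma_n/\sigma_n)^p\to 0$ because the denominator $\sigma\vee\sigma_n=\sigma_n$ there while Lemma \ref{lp_bound_lemma} delivers its bound with denominator $\sigma\vee\tilde\sigma_n=\tilde\sigma_n$. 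Your plan of showing $\mu(\{\sigma_j\le\sigma_n\})\to 0$ can also be made to work, but the driver is $\sigma_n\to 0$ (from the definition of the truncated weighting), not $\sigma_n(n/\log n)^{1/2}\to\infty$ as you state; and you would still have to verify that the region with $\sigma_j$ just above $\sigma_n$ but with $h$ too small for the uniform-convergence argument is covered, which is precisely what the two-threshold $\tilde\sigma_n$ device avoids having to track. Third, the change of variables by $a_n^{1/\gamma}$ you invoke from Section \ref{intuition_sec} is unnecessary for this lemma and in fact does not apply as stated, since the lemma is for an arbitrary sequence $\theta_n\to\theta_0$ and no rate $a_n$ has been fixed; the paper simply uses a fixed $\varepsilon$-ball in $(\tilde x,h)$ so that $\mu(\{h\le\varepsilon\})$ (and hence the expected $L_p^p$ integral there) can be made as small as desired, which together with Markov's inequality and the containment of $\hat{\mathcal{G}}$ yields the $o_P(1)$ conclusion without any quantitative rate.
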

\begin{proof}
Let $\tilde \sigma_n\to 0$ be such that
$\tilde\sigma_n\sqrt{n/\log n}\to\infty$ and
$\tilde\sigma_n/\sigma_n\to 0$ (i.e. $\tilde \sigma_n$ is chosen to be much smaller than $\sigma_n$, but such that the assumptions still hold for $\tilde \sigma_n$).  Note that
\begin{align*}
&\sqrt{n}|\tilde{\tilde T}_{n,p,(\hat\sigma\vee \sigma_n)^{-1},\mu}(\theta_n)
-\tilde T_{n,p,(\hat\sigma\vee \sigma_n)^{-1},\mu}(\theta_n)|  \\
&\le \left[\int\int_{(x,h)\in \hat{\mathcal{G}}}
\sum_{j=1}^{d_Y}\left|\sqrt{n}\frac{(E_n-E)m(W_i,\theta_n)k((X_i-x)/h)}{\sigma_j(\theta,x,h)\vee \sigma_n}\right|^pf_{\mu}(x,h)\, dx\, dh\right]^{1/p}
\end{align*}
where $\hat{\mathcal{G}}=\{(x,h)|Em(W_i,\theta_n)k((X_i-x)/h)<0 \text{ or }
  E_n(W_i,\theta_n)k((X_i-x)/h)<0\}$.

For any $\varepsilon>0$, there exists an $\eta>0$ such that, for $h>\varepsilon$ and large enough $n$,
\begin{align*}
Em_j(W_i,\theta_n)k((X_i-x)/h)\ge \eta Ek((X_i-x)/h)
\ge \eta \cdot var[m_j(W_i,\theta_n)k((X_i-x)/h)]\cdot \frac{1}{\overline k\overline Y^2}
\end{align*}
where the second inequality follows since
\begin{align*}
var[m_j(W_i,\theta_n)k((X_i-x)/h)]
\le \overline Y^2E[k((X_i-x)/h)^2]
\le \overline Y^2\overline kEk((X_i-x)/h).
\end{align*}
Thus, for large enough $n$ we will have
\begin{align*}
&E_nm_j(W_i,\theta_n)k((X_i-x)/h)  \\
&\ge (E_n-E)m_j(W_i,\theta_n)k((X_i-x)/h)
  +var[m_j(W_i,\theta_n)k((X_i-x)/h)]\cdot \frac{\eta}{\overline k\overline Y^2},
\end{align*}
and the last line is positive for all $(x,h)$ with $\sigma_j(\theta_n,x,h)\ge \tilde\sigma_n$ with probability approaching one by Lemma \ref{unif_conv_lemma}.

From this and the fact that $Em(W_i,\theta_n)k((X_i-x)/h)\ge 0$ for all $h>\varepsilon$ for large enough $n$, it follows that
$\hat{\mathcal{G}}\subseteq \{(x,h)|h\le \varepsilon \text{ or } \sigma_j(\theta_n,x,h)<\tilde\sigma_n\}$
with probability approaching one.  Note that
\begin{align*}
&E\int\int_{\{(x,h)|h\le \varepsilon\}}
\sum_{j=1}^{d_Y}\left|\frac{\sqrt{n}(E_n-E)m(W_i,\theta_n)k((X_i-x)/h)}{\sigma_j(\theta,x,h)\vee \sigma_n}\right|^pf_{\mu}(x,h)\, dx\, dh  \\
&=\int\int_{\{(x,h)|h\le \varepsilon\}}
\sum_{j=1}^{d_Y}E\left|\frac{\sqrt{n}(E_n-E)m(W_i,\theta_n)k((X_i-x)/h)}{\sigma_j(\theta,x,h)\vee \sigma_n}\right|^pf_{\mu}(x,h)\, dx\, dh
\end{align*}
by Fubini's theorem, and this can be made arbitrarily small by making $\varepsilon$ small by Lemma \ref{lp_bound_lemma} and Assumption \ref{mu_assump}.
Similarly,
\begin{align*}
&E\int\int_{\{(x,h)|\sigma_j(\theta_n,x,h)< \tilde\sigma_n \text{ some $j$}\}}
\sum_{j=1}^{d_Y}\left|\frac{\sqrt{n}(E_n-E)m(W_i,\theta_n)k((X_i-x)/h)}{\sigma_j(\theta,x,h)\vee \sigma_n}\right|^pf_{\mu}(x,h)\, dx\, dh  \\
&\le \mu(\mathbb{R}^{d_X}\times [0,\infty))
\cdot \sup_{\{(x,h,j)|\sigma_j(\theta_n,x,h)< \tilde\sigma_n\}} E\left|\frac{\sqrt{n}(E_n-E)m(W_i,\theta_n)k((X_i-x)/h)}{\sigma_j(\theta,x,h)\vee \sigma_n}\right|^p  \\
&= \mu(\mathbb{R}^{d_X}\times [0,\infty))
\cdot \sup_{\{(x,h,j)|\sigma_j(\theta_n,x,h)< \tilde\sigma_n\}} E\left|\frac{\sqrt{n}(E_n-E)m(W_i,\theta_n)k((X_i-x)/h)}{\sigma_j(\theta,x,h)\vee \tilde \sigma_n}\right|^p\frac{\tilde\sigma_n}{\sigma_n},
\end{align*}
which converges to zero by Lemma \ref{lp_bound_lemma}.
Using this and Markov's inequality, it follows that
$\sqrt{n}|\tilde{\tilde T}_{n,p,(\hat\sigma\vee \sigma_n)^{-1},\mu}(\theta)
-\tilde T_{n,p,(\hat\sigma\vee \sigma_n)^{-1},\mu}(\theta)|$ can be made arbitrarily small with probability approaching one by making $\varepsilon$ small.  This gives the first display of the lemma.

The second display follows by the same argument with $\sigma_n$ set to the supremum of $\sigma_j(\theta,x,h)$ over $x$, $h$ on the support of $\mu$, $\theta$ in a neighborhood of $\theta_0$ and all $j$.
\end{proof}

\begin{lemma}\label{kern_equiv_lemma_2}
Under Assumptions
\ref{g_kernel_assump},
\ref{mu_assump},
\ref{smoothness_assump_multi},
\ref{bdd_y_assump_local} and
\ref{kern_dens_assump},
\begin{align*}
(nh^{d_X})^{1/2}\tilde T_{n,p,\text{kern}}(\theta_n)
  =(nh^{d_X})^{1/2}\tilde{\tilde T}_{n,p,\text{kern}}(\theta_n)+o_P(1).
\end{align*}
\end{lemma}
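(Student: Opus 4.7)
The plan is to follow the structure of Lemma \ref{inst_equiv_lemma_2}, adapted to the kernel setup where the bandwidth $h=h_n$ is fixed rather than integrated over, so that the only shrinking set is in the $x$ variable. First, using the reverse triangle inequality for $L_p$ norms together with the fact that $|\cdot|_-$ is $1$-Lipschitz and equals zero when its argument is nonnegative, I would bound
\begin{align*}
(nh^{d_X})^{1/2}\bigl|\tilde T_{n,p,\mathrm{kern}}(\theta_n) - \tilde{\tilde T}_{n,p,\mathrm{kern}}(\theta_n)\bigr|
\le \left[\int_{\hat{\mathcal X}} \sum_{j=1}^{d_Y}\left|\frac{(nh^{d_X})^{1/2}(E_n - E)m_j(W_i,\theta_n)k((X_i-x)/h)}{Ek((X_i-x)/h)}\right|^p \omega_j(\theta_n,x)\, dx\right]^{1/p},
\end{align*}
where $\hat{\mathcal X}$ is the random set of $x$ on which either $E_n m_j(W_i,\theta_n)k((X_i-x)/h)$ or $Em_j(W_i,\theta_n)k((X_i-x)/h)$ is negative for some $j$.

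Next I would show that $\hat{\mathcal X}$ is contained, with probability approaching one, in a deterministic shrinking neighborhood $\tilde{\mathcal X}_n := \cup_k\{x:\|x-x_k\|\le K\epsilon_n\}$ of $\mathcal X_0$. Outside any fixed neighborhood of $\cup_k\{x_k\}$, Assumption \ref{smoothness_assump_multi}(i) and Assumption \ref{diff_m_assump} (using $\theta_n\to\theta_0$) give $\bar m_j(\theta_n,x)$ bounded below by a positive constant uniformly in large $n$; Assumption \ref{kern_dens_assump} lower bounds $Ek((X_i-x)/h)$ by $\underline f_X h^{d_X}\int k(u)\,du$ on the support of $\omega_j$, and Lemma \ref{kern_unif_lemma} gives $(E_n-E)m_jk(\cdot/h)/Ek(\cdot/h)=O_P(\sqrt{\log n/(nh^{d_X})})=o_P(1)$ uniformly, so both sample and population expectations are strictly positive on this region with probability approaching one. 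Inside a fixed neighborhood $B(x_k)$, Assumption \ref{smoothness_assump_multi}(ii) combined with a first-order expansion in $\theta$ using Assumption \ref{diff_m_assump} gives $\bar m_j(\theta_n,x) \ge (\underline\psi/2) \|x-x_k\|^{\gamma(j,k)} - C\|a_n\|$, so $Em_jk(\cdot/h)$ and $E_n m_j k(\cdot/h)$ are strictly positive whenever $\|x-x_k\|$ exceeds a constant multiple of $\epsilon_n := \max\{\|a_n\|^{1/\gamma}, h, (\log n/(nh^{d_X}))^{1/(2\gamma)}\}$, which tends to zero under the conditions of Theorem \ref{kern_lim_thm}.

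Finally, with the integration reduced to the deterministic set $\tilde{\mathcal X}_n$, I would apply Fubini and Lemma \ref{lp_bound_lemma} to the expected value of the $p$-th power of the right-hand side. Using $\sigma(m_jk(\cdot/h)) = O(h^{d_X/2})$ together with the choice $\sigma_n = h^{d_X/2}$ (which satisfies $\sigma_n\sqrt n\ge 1$ for large $n$) and $Ek(\cdot/h)\ge c h^{d_X}$ on the support of $\omega_j$, Lemma \ref{lp_bound_lemma} combined with Assumption \ref{bdd_y_assump_local} gives the pointwise bound $E|(nh^{d_X})^{1/2}(E_n-E)m_jk(\cdot/h)/Ek(\cdot/h)|^p\le C$ with $C$ depending only on $p$ and $\overline Y$; boundedness of $\omega_j$, which follows from its continuity and the compact support conditions coming from Assumption \ref{kern_dens_assump}, makes the integrand bounded in expectation. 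Hence the expectation of the restricted integral is $O(\mathrm{vol}(\tilde{\mathcal X}_n))=O(\epsilon_n^{d_X})$, and Markov's inequality yields $L_p$ norm $O_P(\epsilon_n^{d_X/p})=o_P(1)$. The main obstacle is the second step: establishing the uniform geometric inclusion $\hat{\mathcal X}\subseteq \tilde{\mathcal X}_n$ when multiple contact points $x_k$ have different local exponents $\gamma(j,k)$ and the parameter sequence $\theta_n$ is simultaneously moving; once this deterministic envelope is in hand, the final $L_p$ bound reduces to a routine Fubini-plus-Markov computation.
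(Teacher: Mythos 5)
Your proposal is on the right track and assembles the right tools (reverse triangle inequality, reduction to a neighborhood of $\mathcal{X}_0$, Fubini, Lemma \ref{lp_bound_lemma}, Markov), but it takes a harder path than the paper. Where you construct a deterministic \emph{shrinking} envelope $\tilde{\mathcal{X}}_n$ with an explicit rate $\epsilon_n$ tied to $\|a_n\|$, $h$, and the sampling error, the paper simply fixes an arbitrary $\varepsilon>0$: for each $\varepsilon$, on the set $\bar{\mathcal{X}}(\varepsilon)$ of $x$ in the support of $\omega_j$ with $\|x-x_k\|\ge\varepsilon$ for all $k$, Assumptions \ref{smoothness_assump_multi}(i) and \ref{diff_m_assump} give $E m_j(W_i,\theta_n)k((X_i-x)/h)\ge\eta\,E k((X_i-x)/h)$ for some $\eta>0$ and $n$ large, and Lemma \ref{kern_unif_lemma} then makes the corresponding sample moment positive there uniformly w.p.a.1. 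The difference of statistics is therefore bounded (w.p.a.1) by the integral over $\{x:\|x-x_k\|<\varepsilon\text{ some }k\}$, and by Fubini, Lemma \ref{lp_bound_lemma}, and the lower bound $Ek((\cdot-x)/h)\ge\delta h^{d_X}$ from Assumption \ref{kern_dens_assump}, the expectation of that integral is of order $\int_{\|x-x_k\|<\varepsilon\text{ some }k}\omega_j(\theta_n,x)\,dx = O(\varepsilon^{d_X})$, which is then sent to zero by taking $\varepsilon$ small. This yields $o_P(1)$ by Markov while entirely bypassing the geometric inclusion you flag as your ``main obstacle''; that obstacle is self-inflicted by your choice to quantify the envelope's rate (and it would require a fairly delicate case analysis across the different $\gamma(j,k)$ and a simultaneous $\theta$-expansion, essentially reproducing the rescaling bookkeeping of Lemma \ref{drift_approx_kern_lemma} where it is not needed). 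Your approach would, if completed, prove the stronger statement that the remainder is $O_P(\epsilon_n^{d_X/p})$, but the lemma only asks for $o_P(1)$, and the paper's fixed-$\varepsilon$ argument delivers that more economically and without the dependence on the specific rate of $\theta_n$ that your construction of $\epsilon_n$ introduces.
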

\begin{proof}
For any $\varepsilon>0$, there is an $\eta>0$ such that $Em_j(W_i,\theta_n)k((X_i-x)/h)>\eta Ek((X_i-x)/h)$
for all $x\in \bar{\mathcal{X}}(\varepsilon)$ where
$\bar{\mathcal{X}}(\varepsilon)$ is the set of $x$
with $\|x-x_k\|\ge \varepsilon$ for all $k=1,\ldots,\ell$
and $\omega_j(\theta_n,x)>0$ for some $j$.  Thus, arguing as in Lemma \ref{inst_equiv_lemma_2} and using Lemma \ref{kern_unif_lemma}, it follows that, with probability approaching one,
\begin{align*}
&(nh^{d_X})^{1/2}|\tilde T_{n,p,\text{kern}}(\theta_n)-\tilde{\tilde T}_{n,p,\text{kern}}(\theta_n)|  \\
&\le \left[\int_{x\not\in \bar{\mathcal{X}}(\varepsilon)} \sum_{j=1}^{d_Y} \left|\frac{\sqrt{nh^{d_X}}(E_n-E)m_j(W_i,\theta_n)k((X_i-x)/h)}{Ek((X_i-x)/h)}\right|^p\omega_j(\theta_n,x)\, dx\right]^{1/p}.
\end{align*}
Using Markov's inequality and Fubini's theorem along with the fact that
$\int_{x\not\in\bar{\mathcal{X}}(\varepsilon)} w_j(\theta_nx)\, dx$ can be made arbitrarily small by making $\varepsilon$ small, the result follows so long as
\begin{align*}
E\left|\frac{\sqrt{nh^{d_X}}(E_n-E)m_j(W_i,\theta_n)k((X_i-x)/h)}{Ek((X_i-x)/h)}\right|^p
\end{align*}
can be bounded uniformly over $x$ such that $\omega_j(\theta_n,x)>0$.  But this follows from Lemma \ref{lp_bound_lemma}, since, by Assumptions \ref{g_kernel_assump} and \ref{kern_dens_assump}, for some $\delta>0$,
$Ek((X_i-x)/h)\ge \delta h^{d_X}$ for all $x$ with $\omega_j(\theta_n,x)>0$.
\end{proof}

For the following lemma, recall that
$w_j(x_k)=(s_j^2(x_k,\theta_0)f_X(x_k)\int k(u)^2\, du)^{-1/2}$
and $s_j^2(x,\theta)=var(m(W_i,\theta)|X_i=x)$.

\begin{lemma}\label{sigma_lim_lemma}
Under Assumptions
\ref{g_kernel_assump},
\ref{mu_assump},
\ref{smoothness_assump_multi} and
\ref{bdd_y_assump_local},
for $k=1,\ldots,\ell$
\begin{align*}
\sup_{\|(x,h)-(x_k,0)\|\le \varepsilon_n}
\left|h^{-d_X/2}\sigma_j(\theta_n,x,h)-w_j(x_k)^{-1}\right|
\to 0.
\end{align*}
for any sequences $\varepsilon_n\to 0$ and $\theta_n\to \theta_0$.
\end{lemma}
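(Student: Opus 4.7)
The plan is to expand the variance as a difference of second moment and squared mean, each expressed as an integral against $f_X$, then apply change of variables $u=(\tilde x-x)/h$ to isolate a factor $h^{d_X}$. Specifically, I would write
\begin{align*}
\sigma_j^2(\theta_n,x,h)
=E[m_j(W_i,\theta_n)^2 k((X_i-x)/h)^2]
-[Em_j(W_i,\theta_n)k((X_i-x)/h)]^2
\end{align*}
and change variables to obtain
\begin{align*}
h^{-d_X}E[m_j(W_i,\theta_n)^2k((X_i-x)/h)^2]
=\int E(m_j(W_i,\theta_n)^2\mid X_i=x+hu)k(u)^2 f_X(x+hu)\,du
\end{align*}
and
\begin{align*}
h^{-d_X}Em_j(W_i,\theta_n)k((X_i-x)/h)
=\int \bar m_j(\theta_n,x+hu)k(u) f_X(x+hu)\,du.
\end{align*}

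Next, I would show that the first display above converges to $E(m_j(W_i,\theta_0)^2\mid X_i=x_k)f_X(x_k)\int k(u)^2\,du$ uniformly over $\|(x,h)-(x_k,0)\|\le\varepsilon_n$. This is the main step: it uses continuity of $f_X$ at $x_k$ (Assumption \ref{smoothness_assump_multi}(iii)), continuity of $s_j^2(\cdot,\cdot)$ at $(x_k,\theta_0)$ together with continuity of $\bar m_j$ at $(\theta_0,x_k)$ (which gives continuity of $E(m_j^2\mid X_i=\cdot)=s_j^2+\bar m_j^2$ at $(x_k,\theta_0)$), the fact that $k$ has bounded support, and the uniform bound $|m_j|\le \overline Y$. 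Uniform convergence then follows from dominated convergence (the integrand is uniformly bounded by $\overline Y^2\overline k^2\overline f_X$ on a compact set) combined with an equicontinuity argument at $x_k$. For the second display, boundedness of the integrand gives a uniform $O(1)$ bound, so the squared mean contributes $O(h^{2d_X})$ to $\sigma_j^2$, which vanishes after multiplying by $h^{-d_X}$.

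Combining these, $h^{-d_X}\sigma_j^2(\theta_n,x,h)$ converges uniformly to $E(m_j(W_i,\theta_0)^2\mid X_i=x_k)f_X(x_k)\int k(u)^2\,du$. Since $j\in J(k)$ (the only case where $w_j(x_k)$ is defined and the lemma is applied) we have $\bar m_j(\theta_0,x_k)=0$, so this limit equals $s_j^2(x_k,\theta_0)f_X(x_k)\int k(u)^2\,du=w_j(x_k)^{-2}$. Finally, since this limit is strictly positive by Assumption \ref{smoothness_assump_multi}(iv), continuity of $\sqrt{\cdot}$ at a positive value lifts the uniform convergence from squared quantities to the quantities themselves, yielding the claim.

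The principal obstacle is the uniformity in $(x,h)$: pointwise convergence at $(x_k,0)$ is essentially a standard kernel calculation, but the shrinking-neighborhood uniform statement requires care to argue that the continuity modulus of $E(m_j^2\mid X_i=\cdot)$ and of $f_X$ at $x_k$ can be combined with a uniform envelope for the integrand. This is a routine equicontinuity/dominated convergence argument but is the only nontrivial step; everything else is bookkeeping with the change of variables.
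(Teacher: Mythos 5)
Your argument is correct and uses essentially the same approach as the paper: reduce to the variance level via continuity of $\sqrt{\cdot}$ at a positive point, expand $\sigma_j^2$ as second moment minus squared mean, change variables to extract $h^{d_X}$, show the squared-mean contribution is $O(h^{d_X})$ after normalization, and let the second-moment term converge to $s_j^2(x_k,\theta_0)f_X(x_k)\int k(u)^2\,du$ using continuity of $f_X$ and the conditional second moment together with the bounded envelope. The only cosmetic difference is that the paper splits $E(m_j^2\mid X)=s_j^2+\bar m_j^2$ before passing to the limit and kills the $\bar m_j^2$ term via continuity and $\bar m_j(\theta_0,x_k)=0$, whereas you carry $E(m_j^2\mid X)$ as a single object and invoke $\bar m_j(\theta_0,x_k)=0$ at the end; the two manipulations are equivalent.
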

\begin{proof}
By differentiability of the square root function at $w_j^{-2}(x_k)$, it suffices to show that
$\sup_{\|(x,h)-(x_k,0)\|\le \varepsilon_n}
\left|h^{-d_X}\sigma^2_j(\theta_n,x,h)-w^{-2}_j(x_k)\right|
\to 0$.
Note that
\begin{align*}
&h^{-d_X}\sigma_j^2(\theta_n,x,h)
=h^{-d_X}E[m(W_i,\theta_n)^2k((X_i-x)/h)^2]
  -h^{-d_X}\{E[m(W_i,\theta_n)k((X_i-x)/h)]\}^2  \\
&=h^{-d_X}\int s_j^2(\tilde x,\theta_n) k((\tilde x-x)/h)^2f_X(\tilde x)\, d\tilde x  \\
  &+h^{-d_X}\int E[m(W_i,\theta_n)|X_i=\tilde x]^2k((\tilde x-x)/h)^2f_X(\tilde x)\, d\tilde x  \\
  &-h^{-d_X}\left\{\int E[m(W_i,\theta_n)|X_i=\tilde x]k((\tilde x-x)/h)f_X(\tilde x)\, d\tilde x\right\}^2.
\end{align*}
By Assumption \ref{g_kernel_assump} and part (iii) of Assumption \ref{smoothness_assump_multi}, the second term is bounded by a constant times
$\sup_{\|(x,h)-(x_k,0)\|\le \varepsilon_n}E[m(W_i,\theta_n)|X_i=x]^2$, which converges to zero by
continuity of $E[m(W_i,\theta)|X_i=x]$ at $(\theta_0,x_k)$.
By Assumptions \ref{g_kernel_assump} and \ref{smoothness_assump_multi},
the third term is bounded by a constant times $h^{-d_X}\cdot h^{2 d_X}\le \varepsilon_n^{d_X}$ uniformly over $(x,h)$ with $\|(x,h)-(x_k,0)\|\le \varepsilon_n$.
Using a change of variables, the first term can be written as
$\int s_j^2(x+uh,\theta_n)k(u)^2f_X(x+uh)\, du$,
which converges to $w_j^{-2}(x_k)$ uniformly over $\|(x,h)-(x_k,0)\|\le \varepsilon_n$ by continuity of $s_j$ and $f_X$, and by Assumption \ref{g_kernel_assump}.
\end{proof}

\begin{lemma}\label{k_lim_lemma}
Suppose that Assumptions
\ref{g_kernel_assump},
\ref{mu_assump},
\ref{smoothness_assump_multi},
\ref{bdd_y_assump_local} and
\ref{kern_dens_assump}
hold, and that $\int k(u)\, du=1$.  Then
\begin{align*}
\sup_{\|x-x_k\|\le \varepsilon}
|h^{-d_X}Ek((X_i-x)/h)-f_X(x_k)|\to 0
\end{align*}
as $h\to 0$ and $\varepsilon\to 0$ for $k=1,\ldots, \ell$.
\end{lemma}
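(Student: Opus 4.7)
The plan is a routine change-of-variables computation followed by a uniform continuity argument. First I would apply the substitution $u = (\tilde x - x)/h$ to rewrite
\begin{align*}
h^{-d_X}Ek((X_i-x)/h)
  = h^{-d_X}\int k((\tilde x-x)/h) f_X(\tilde x)\, d\tilde x
  = \int k(u) f_X(x+hu)\, du.
\end{align*}
Using the normalization $\int k(u)\, du = 1$ (assumed in the lemma's hypotheses), I can then express the quantity of interest as
\begin{align*}
h^{-d_X}Ek((X_i-x)/h)-f_X(x_k)
  = \int k(u)\left[f_X(x+hu)-f_X(x_k)\right]\, du.
\end{align*}

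Next I would exploit the bounded support of $k$ from Assumption \ref{g_kernel_assump}: there is some $B<\infty$ with $k(u)=0$ for $\|u\|>B$. Therefore, whenever $\|x-x_k\|\le \varepsilon$ and $u$ lies in the support of $k$, we have $\|x+hu-x_k\|\le \varepsilon + Bh$. By part (iii) of Assumption \ref{smoothness_assump_multi}, $f_X$ is continuous on $B(x_k)$, so for any $\eta>0$ there exists $\delta>0$ such that $|f_X(\tilde x)-f_X(x_k)|<\eta$ whenever $\|\tilde x-x_k\|<\delta$. Taking $\varepsilon$ and $h$ small enough that $\varepsilon + Bh < \delta$ gives
\begin{align*}
\sup_{\|x-x_k\|\le \varepsilon}\left|\int k(u)[f_X(x+hu)-f_X(x_k)]\, du\right|
  \le \eta\int k(u)\, du = \eta,
\end{align*}
using nonnegativity of $k$. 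Since $\eta$ was arbitrary, this yields the claimed convergence to zero as $h,\varepsilon\to 0$.

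There is no real obstacle here — the only subtlety is tracking that bounded support of $k$ lets one upgrade continuity of $f_X$ at the single point $x_k$ into a uniform bound on the integrand over the joint range of $x$ (near $x_k$) and $u$ (in the support of $k$), after which nonnegativity of $k$ combined with $\int k=1$ finishes the estimate. No uniformity in a larger class of functions or probabilistic argument is needed, so this is simpler than the preceding kernel-variance lemma (Lemma \ref{sigma_lim_lemma}) and does not require any of the more intricate uniform convergence machinery developed earlier in the appendix.
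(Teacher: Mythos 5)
Your proof is correct and follows the same approach as the paper: change of variables $u=(\tilde x - x)/h$, then combine $\int k=1$, bounded support of $k$, and continuity of $f_X$ at $x_k$ to get the uniform limit. You merely spell out the uniform-continuity step in more detail than the paper does.
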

\begin{proof}
We have
\begin{align*}
h^{-d_X}Ek((X_i-x)/h)
=h^{-d_X}\int k((\tilde x-x)/h)f_X(\tilde x)\, d\tilde x
=\int k(u) f_X(x+uh)\, du,
\end{align*}
and $\int k(u)\, du=1$ and $f_X(x+uh)$ converges to $f_X(x_k)$ uniformly over $\|x-x_k\|\le \varepsilon$ and $u$ in the support of $k$ as
$\varepsilon\to 0$ and $h\to 0$.
\end{proof}

For notational convenience in the following lemmas, define, for $(j,k)$ with
 $j\in J(k)$,
\begin{align*}
\tilde \psi_{j,k}(x-x_k)
=\frac{\bar m_j(\theta_0,x)-\bar m_j(\theta_0,x_k)}{\|x-x_k\|^{\gamma(j,k)}}
\end{align*}
so that
\begin{align*}
\sup_{\|x-x_k\|<\delta}\left|\tilde \psi_{j,k}(x-x_k)
  -\psi_{j,k}\left(\frac{x-x_k}{\|x-x_k\|}\right)\right|
\to 0
\end{align*}
under Assumption \ref{smoothness_assump_multi}.

\begin{lemma}\label{drift_approx_bdd_lemma}
Under Assumptions
\ref{g_kernel_assump},
\ref{mu_assump},
\ref{smoothness_assump_multi} and
\ref{bdd_y_assump_local},
for any $a\in\mathbb{R}^{d_\theta}$,
\begin{align*}
&r^{-[d_X+p(d_X+\gamma)+1]/\gamma}
\int\int \sum_{j=1}^{d_Y}|E m_j(W_i,\theta_0+ra)k((X_i-\tilde x)/h)|_{-}^p
    f_\mu(\tilde x,h)\, d\tilde x \, dh  \\
&\stackrel{r\to 0}{\to}
\sum_{k=1}^{\mathcal{X}_0} \sum_{j\in \tilde J(k)}
  \lambda_{\text{bdd}}(a,j,k,p).
\end{align*}
\end{lemma}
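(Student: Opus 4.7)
The plan is to make the intuitive calculation in Section \ref{intuition_sec} rigorous via localization, a change of variables, and dominated convergence. First I would show that for small $r$, the $(\tilde x, h)$-integral concentrates on neighborhoods of the points $(x_k, 0)$ for $k=1,\dots,\ell$. Outside of $\cup_k B_\delta(x_k) \times [0,\delta]$, either $h\ge \delta$ or $\tilde x$ is bounded away from $\mathcal{X}_0$; in either case, $\int \bar m_j(\theta_0,x) k((x-\tilde x)/h) f_X(x)\,dx$ is bounded below by a positive constant (using Assumption \ref{smoothness_assump_multi}(i) together with the kernel's finite support and the boundedness of the support of $\mu$). By Assumption \ref{diff_m_assump}, the drift satisfies $|\bar m_j(\theta_0+ra,x)-\bar m_j(\theta_0,x)|\le Cr$ uniformly, so the inner expectation stays nonnegative outside these neighborhoods once $r$ is small. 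This reduces the integral to a sum over $k$ of integrals over $B_\delta(x_k)\times[0,\delta]$.

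Next, on each such neighborhood, I apply the change of variables $x=x_k+r^{1/\gamma}u$, $\tilde x=x_k+r^{1/\gamma}v$, $h=r^{1/\gamma}\tilde h$, which gives $k((x-\tilde x)/h)=k((u-v)/\tilde h)$, Jacobian $r^{d_X/\gamma}$, $r^{d_X/\gamma}$, $r^{1/\gamma}$. For $j\notin J(k)$, Assumption \ref{smoothness_assump_multi}(i) gives $\bar m_j(\theta_0,x_k)>0$, so the integrand is strictly positive for small $r$ and contributes nothing. For $j\in J(k)$, a first-order Taylor expansion in $\theta$ together with Assumption \ref{smoothness_assump_multi}(ii) yields
\begin{align*}
\bar m_j(\theta_0+ra,\,x_k+r^{1/\gamma}u)
= r^{\gamma(j,k)/\gamma}\|u\|^{\gamma(j,k)}\tilde\psi_{j,k}(r^{1/\gamma}u)
   + r\,\bar m_{\theta,j}(\theta_0,x_k)a + o(r)
\end{align*}
uniformly on bounded sets in $u$. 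For $j\in\tilde J(k)$, factoring $r\cdot r^{d_X/\gamma}$ out of the inner $du$ integral and $r^{(d_X+1)/\gamma}$ out of the $dv\,d\tilde h$ measure produces the exact prefactor $r^{[d_X+p(d_X+\gamma)+1]/\gamma}$, and the remaining integrand converges pointwise to the integrand appearing in $\lambda_{\text{bdd}}(a,j,k,p)$, with $f_\mu(x_k+r^{1/\gamma}v, r^{1/\gamma}\tilde h)\to f_\mu(x_k,0)$ by Assumption \ref{mu_assump} and $f_X(x_k+r^{1/\gamma}u)\to f_X(x_k)$ by Assumption \ref{smoothness_assump_multi}(iii).

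The scaling check for $j\in J(k)\setminus\tilde J(k)$ (where $\gamma(j,k)<\gamma$) shows that the contribution vanishes: after factoring out $r^{\gamma(j,k)/\gamma+d_X/\gamma}$, the bracketed expression converges to the nonnegative quantity $\|u\|^{\gamma(j,k)}\psi_{j,k}(u/\|u\|)$, while the negative part is supported on a shrinking set in $(v,\tilde h)$ of appropriate volume. Combining the prefactor $r^{p(\gamma(j,k)-\gamma)/\gamma}$ with the volume of this set shows the net contribution is $o(1)$.

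The main obstacle is justifying the passage to the limit, since the $(v,\tilde h)$ domain expands without bound as $r\to 0$. I would dominate as follows: since $k$ has finite support, $k((u-v)/\tilde h)$ is supported on $\|u-v\|\le B\tilde h$, so for $\|v\|$ or $\tilde h$ large, any $u$ in the kernel's support satisfies $\|u\|\gtrsim\max(\|v\|,\tilde h)$, and hence $\|u\|^\gamma\psi_{j,k}(u/\|u\|)\ge \underline\psi \|u\|^\gamma$ dominates the bounded drift $|\bar m_{\theta,j}(\theta_0,x_k)a|$. This confines the support of the negative part to a bounded region in $(v,\tilde h)$ independent of $r$, and there the integrand is bounded uniformly (because $\bar m_j$ is bounded, $k$ is bounded with finite support, $f_X$ and $f_\mu$ are bounded by Assumptions \ref{smoothness_assump_multi}(iii) and \ref{mu_assump}). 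Dominated convergence then yields the stated limit, and $r_{\text{bdd}}(a)\to 0$ as $a\to 0$ follows from continuity of $\lambda_{\text{bdd}}(a,j,k,p)$ in $a$ with $\lambda_{\text{bdd}}(0,j,k,p)=0$.
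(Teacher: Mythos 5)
Your proposal follows essentially the same route as the paper: a localization to neighborhoods of the touching points in $\mathcal{X}_0$, the rescaling $x=x_k+r^{1/\gamma}u$, $\tilde x=x_k+r^{1/\gamma}v$, $h=r^{1/\gamma}\tilde h$, and dominated convergence, with the same treatment of $j\notin J(k)$ and $j\in J(k)\setminus\tilde J(k)$. The only structural difference is the order of operations: the paper first argues directly that the negative part of the inner integral vanishes unless $\max\{\|\tilde x-x_k\|,h\}<Cr^{1/\gamma}$ for suitable $C$, so that after rescaling the $(v,\tilde h)$-domain is the fixed compact set $\{\|v\|<C,\,\tilde h<C\}$ and domination is trivial; you localize to a fixed $\delta$-ball, so the rescaled domain expands and you must justify the dominating function.

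In that domination step there is a small slip. You assert that "for $\|v\|$ or $\tilde h$ large, any $u$ in the kernel's support satisfies $\|u\|\gtrsim\max(\|v\|,\tilde h)$." This is false when $\tilde h$ is large and $\|v\|$ is not: the support of $k((\cdot-v)/\tilde h)$ is a ball of radius $B\tilde h$ around $v$, which can contain $u$ arbitrarily close to $0$ (take $v=0$, $\tilde h$ large, $u=0$). The conclusion you want (that the negative part is confined to a bounded region in $(v,\tilde h)$) is still correct, but the argument for the large-$\tilde h$ direction should be integral rather than pointwise: over the kernel support, the positive contribution of $\|u\|^\gamma\psi_{j,k}(u/\|u\|)$ to $\int[\cdot]k((u-v)/\tilde h)\,du$ grows on the order of $\tilde h^{\gamma+d_X}$, while the negative contribution from the bounded drift $\bar m_{\theta,j}(\theta_0,x_k)a$ is supported on a fixed ball and hence stays $O(\tilde h^{d_X})$, so the inner integral becomes nonnegative for $\tilde h$ beyond a threshold independent of $v$. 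Patching this in, your proof is sound; alternatively, adopting the paper's sharper $Cr^{1/\gamma}$-pre-localization avoids the issue entirely.
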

\begin{proof}
For simplicity, assume that $\gamma(j,k)=\gamma$ for all $j,k$.  The general result follows from applying the same arguments to show that areas of $(x,h)$ near $(j,k)$ with $\gamma(j,k)<\gamma$ do not matter asymptotically.

For $C$ large enough, the integrand will be zero unless
$\max\{\|\tilde x-x_k\|,h\}<C r^{1/\gamma}$ for some $k$ with $j\in J(k)$.  Thus, it suffices to prove the lemma for, fixing $(j,k)$ with $j\in J(k)$,
\begin{align*}
&\int\int
|E m_j(W_i,\theta_0+ra)k((X_i-\tilde x)/h)|_{-}^p
    f_\mu(\tilde x,h)\, d\tilde x \, dh  \\
&=\int\int
\left|\int \bar m_j(\theta_0+ra,x)k((x-\tilde x)/h)f_X(x)\, dx\right|_{-}^p
    f_\mu(\tilde x,h)\, d\tilde x \, dh  \\
&=\int\int
\left|\int 
[\|x-x_k\|^\gamma\tilde \psi_{j,k}(x-x_k)
+\bar m_{\theta,j}(\theta^*(r),x) ra]
k((x-\tilde x)/h)f_X(x)\, dx\right|_{-}^p
f_\mu(\tilde x,h)\, d\tilde x \, dh
\end{align*}
where the integrals are taken over $\|\tilde x-x_k\|<C r^{1/\gamma},h<C r^{1/\gamma}$ and $\theta^*(r)$ is between $\theta_0$ and $\theta_0+ra$ (we suppress the dependence of $\theta^*(r)$ on $x$ in the notation).
Using the change of variables
$u=(x-x_k)/r^{1/\gamma}$, $v=(x-x_k)/r^{1/\gamma}$, $\tilde h=h/r^{1/\gamma}$, this is equal to
\begin{align*}
&\int\int \left|\int 
[\|r^{1/\gamma}u\|^\gamma\tilde \psi_{j,k}(r^{1/\gamma}u)
+\bar m_{\theta,j}(\theta^*(r),x_k+r^{1/\gamma}u) ra]
k((u-v)/\tilde h)f_X(x_k+r^{1/\gamma}u)r^{d_X/\gamma}\, du\right|_{-}^p  \\
&f_\mu(x_k+r^{1/\gamma}v,r^{1/\gamma}\tilde h)r^{d_X/\gamma}\, dv r^{1/\gamma}\, d\tilde h  \\
&=%
r^{[d_X+1+p(\gamma+d_X)]/\gamma}
\int\int \left|\int 
[\|u\|^\gamma\tilde \psi_{j,k}(r^{1/\gamma}u)
+\bar m_{\theta,j}(\theta^*(r),x_k+r^{1/\gamma}u) a]
k((u-v)/\tilde h)f_X(x_k+r^{1/\gamma}u)\, du\right|_{-}^p  \\
&f_\mu(x_k+r^{1/\gamma}v,r^{1/\gamma}\tilde h)\, dv \, d\tilde h
\end{align*}
where the integrals are taken over
$\|v\|<C,\tilde h<C$.  The result now follows from the dominated convergence theorem (here, and in subsequent results involving sequences of the form $\int|\int g_n(z,w)\, d\mu(z)|_{-}^p\, d\nu(w)$, the dominated convergence theorem is applied to the inner integral for each $w$, and again to the outer integral).

\end{proof}

\begin{lemma}\label{drift_approx_var_lemma}
Under the conditions of Theorem \ref{var_weight_lim_thm},
for any $a\in\mathbb{R}^{d_\theta}$,
\begin{align*}
&r^{-[d_X+p(d_X/2+\gamma)+1]/\gamma}
\int\int \sum_{j=1}^{d_Y}|E m_j(W_i,\theta_0+ra)k((X_i-\tilde x)/h)
  /(\sigma_j(\theta_0+ra,\tilde x, h)\vee \sigma_n)|_{-}^p
    f_\mu(\tilde x,h)\, d\tilde x \, dh  \\
&\le \sum_{k=1}^{\mathcal{X}_0} \sum_{j\in \tilde J(k)}
  \lambda_{\text{var}}(a,j,k,p) + o(1)
\end{align*}
for any $r=r_n\to 0$.  If, in addition,
$\sigma_n r_n^{-d_X/(2\gamma)}\to 0$,
the above display
will hold with the inequality replaced by equality.
\end{lemma}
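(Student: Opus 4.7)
The plan is to mirror the proof of Lemma \ref{drift_approx_bdd_lemma} closely, handling the additional denominator $\sigma_j(\theta_0+ra,\tilde x,h)\vee\sigma_n$ via Lemma \ref{sigma_lim_lemma}. First I would reduce to the case $\gamma(j,k)=\gamma$ for all relevant $(j,k)$, since regions where $\gamma(j,k)<\gamma$ contribute only lower order terms under the scaling below. For each $(j,k)$ with $j\in J(k)$, the integrand is supported on $\max\{\|\tilde x-x_k\|,h\}<Cr^{1/\gamma}$ for $C$ large, and inside this region a first order Taylor expansion in $\theta$ combined with part (ii) of Assumption \ref{smoothness_assump_multi} writes $\bar m_j(\theta_0+ra,x)=\|x-x_k\|^\gamma\tilde\psi_{j,k}(x-x_k)+\bar m_{\theta,j}(\theta^*(r),x)ra$ exactly as in the bounded case (using that $\bar m_j(\theta_0,x_k)=0$ for $j\in J(k)$).

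Next I would apply the change of variables $u=(x-x_k)/r^{1/\gamma}$, $v=(\tilde x-x_k)/r^{1/\gamma}$, $\tilde h=h/r^{1/\gamma}$. The numerator produces the same $r^{(d_X+\gamma)/\gamma}$ scaling as in the bounded case. For the denominator, Lemma \ref{sigma_lim_lemma} gives $\sigma_j(\theta_0+ra,\tilde x,h)=h^{d_X/2}w_j(x_k)^{-1}(1+o(1))$ uniformly on the relevant shrinking neighborhood, contributing an extra factor $r^{-d_X/(2\gamma)}\tilde h^{-d_X/2}$; taking the $p$-th power and combining with the Jacobian $r^{(d_X+1)/\gamma}$ then yields the advertised scale $r^{[d_X+p(\gamma+d_X/2)+1]/\gamma}$, with dominated convergence producing the stated $\lambda_{\text{var}}$ limit. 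For the inequality in general I would simply note that $\sigma_j\vee\sigma_n\ge\sigma_j$, so replacing the denominator by $\sigma_j$ enlarges the integrand and the preceding step gives the upper bound. For the equality case I would use that after the change of variables the denominator becomes $\sigma_n\vee[r_n^{d_X/(2\gamma)}\tilde h^{d_X/2}w_j(x_k)^{-1}(1+o(1))]$, and the assumption $\sigma_n r_n^{-d_X/(2\gamma)}\to 0$ forces $\sigma_n$ to be dominated by $\sigma_j$ on any set where $\tilde h$ is bounded below, with the complementary region contributing negligibly.

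The main obstacle I anticipate is producing a uniform integrable envelope for the transformed integrand over $\{(v,\tilde h):\|v\|<C,\,0<\tilde h<C\}$ to justify dominated convergence. The factor $\tilde h^{-d_X/2}$ is singular at $\tilde h=0$, but the kernel $k((u-v)/\tilde h)$ has support in $u$ of volume proportional to $\tilde h^{d_X}$, so the inner integral is itself of order $\tilde h^{d_X}$ and the ratio ends up of order $\tilde h^{d_X/2}$, giving an integrable power of $\tilde h$ after the $p$-th power; combined with the bounded, continuous density $f_\mu$ from Assumption \ref{mu_assump}, this supplies the envelope. The more delicate piece is the equality claim: for very small $\tilde h$ the truncation at $\sigma_n$ can still be active, so one must verify that this contribution remains negligible uniformly in $n$, which is precisely where the rate condition on $\sigma_n$ enters.
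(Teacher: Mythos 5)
Your proposal follows essentially the same route as the paper's own proof: restrict to the $\gamma(j,k)=\gamma$ case, localize to shrinking neighborhoods of each $x_k$, apply the change of variables $u=(x-x_k)/r^{1/\gamma}$, $v=(\tilde x-x_k)/r^{1/\gamma}$, $\tilde h=h/r^{1/\gamma}$, invoke Lemma \ref{sigma_lim_lemma} to replace $\sigma_j/h^{d_X/2}$ by $w_j(x_k)^{-1}$, and close with dominated convergence. The paper's handling of the two claims is exactly what you describe: the upper bound is obtained by dropping the truncation (replacing $\sigma_n^{-1}$ by $\infty$ after the change of variables, which is your observation that $1/(\sigma_j\vee\sigma_n)\le 1/\sigma_j$), and the equality case uses the pointwise convergence of $\tilde h^{-d_X/2}\tilde s_j^{-1}\wedge(r_n^{d_X/(2\gamma)}\sigma_n^{-1})$ under the rate condition $\sigma_n r_n^{-d_X/(2\gamma)}\to 0$. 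The envelope argument you flag as the delicate point—that the inner integral is $O(\tilde h^{d_X})$ from the kernel support, so the ratio with $\tilde h^{-d_X/2}$ is $O(\tilde h^{d_X/2})$ and hence $p$-integrable on the bounded support of $\mu$—is indeed what makes the (nested) dominated convergence go through, though the paper leaves this implicit in its remark after Lemma \ref{drift_approx_bdd_lemma} about applying DCT first to the inner and then to the outer integral.
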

\begin{proof}
As in the previous lemma, the following argument assumes, for simplicity, that $\gamma(j,k)=\gamma$ for all $(j,k)$ with $j\in J(k)$.
Let $\tilde s_j(r,\tilde x,h)=\sigma_j(\theta_0+ra,\tilde x,h)/h^{d_X/2}$.
As before, for large enough $C$, the integrand will be zero unless $\max\{\|\tilde x-x_k\|,h\}<C r^{1/\gamma}$ for some $k$ with $j\in J(k)$.  Thus, it suffices to prove the result for, fixing $(j,k)$ with $j\in J(k)$,
\begin{align*}
&\int\int |E m_j(W_i,\theta_0+ra)k((X_i-\tilde x)/h)
  (h^{-d_X/2}\tilde s_j^{-1}(r,\tilde x,h)\wedge \sigma_n^{-1})|_{-}^p
    f_\mu(\tilde x,h)\, d\tilde x \, dh  \\
&=\int\int 
  \left|\int [\|x-x_k\|^\gamma \tilde \psi_{j,k}(x-x_k)
   +\bar m_{\theta,j}(\theta^*(r),x) ra]  \right.  \\
&\left. k((x-\tilde x)/h)
     (h^{-d_X/2}\tilde s_j^{-1}(r,\tilde x,h)\wedge \sigma_n^{-1})f_X(x)\, dx
  \right|_{-}^p
    f_\mu(\tilde x,h)\, d\tilde x \, dh
\end{align*}
where the integral is taken over $\|\tilde x-x_k\|<C r^{1/\gamma}$, $h<C r^{1/\gamma}$ and $\theta^*(r)$ is between $\theta_0$ and $\theta_0+ra$.  Using the change of variables
$u=(x-x_k)/r^{1/\gamma}$, $v=(\tilde x-x_k)/r^{1/\gamma}$,$\tilde h=h/r^{1/\gamma}$, this is equal to
\begin{align*}
&\int\int
  \left|\int r[\|u\|^\gamma \tilde \psi_{j,k}(r^{1/\gamma}u)+\bar m_{\theta,j}(\theta^*(r),x_k+ur^{1/\gamma}) a]k((u-v)/\tilde h)\right.  \\
&\left.    (((r^{1/\gamma}\tilde h)^{-d_X/2}\tilde s_j^{-1}(r,x_k+vr^{1/\gamma},r^{1/\gamma}\tilde h))\wedge \sigma_n^{-1})
f_X(x_k+ur^{1/\gamma})r^{d_X/\gamma}\, du\bigg.\right|_{-}^p  \\
& f_\mu(x_k+vr^{1/\gamma},r^{1/\gamma}\tilde h)r^{d_X/\gamma}\, dv r^{1/\gamma}\, d\tilde h  \\
&=%
r^{[p(\gamma+d_X/2)+d_X+1]/\gamma}\int\int
  \left|\int [\|u\|^\gamma \tilde \psi_{j,k}(r^{1/\gamma}u)+\bar m_{\theta,j}(\theta^*(r),x_k+ur^{1/\gamma}) a]k((u-v)/\tilde h)\right.  \\
&\left.((\tilde h^{-d_X/2}\tilde s_j^{-1}(r,x_k+vr^{1/\gamma},r^{1/\gamma}\tilde h))\wedge (r^{d_X/(2\gamma)}\sigma_n^{-1}))  
f_X(x_k+ur^{1/\gamma})\, du\bigg.\right|_{-}^p
    f_\mu(x_k+vr^{1/\gamma},r^{1/\gamma}\tilde h)\, dv \, d\tilde h.
\end{align*}
where the integral is taken over $\|v\|<C$, $h<C$.
By Lemma \ref{sigma_lim_lemma} and the dominated convergence theorem, this converges to $\lambda_{var}(a,j,k,p)$ if $\sigma_n r_n^{-d_X/(2\gamma)}\to 0$.  If $\sigma_n r_n^{-d_X/(2\gamma)}$ does not converge to zero, the above display is bounded from above by the same expression with $\sigma_n^{-1}$ replaced by $\infty$.

\end{proof}

\begin{lemma}\label{drift_approx_kern_lemma}
Under the conditions of Theorem \ref{kern_lim_thm},
for any $a\in\mathbb{R}^{d_\theta}$,
\begin{align*}
&r^{-(\gamma p+d_X)/\gamma}
\int \sum_{j=1}^{d_Y} \left|
  [Em_j(W_i,\theta_0+ra)k((X_i-x)/h)/Ek((X_i-x)/h)]
\omega_j(\theta_0+ra,x)
   \right|_{-}^p \, d x  \\
&\to \sum_{k=1}^{|\mathcal{X}_0|}\sum_{j\in J(k)}
 \lambda_{\text{kern}}(a,c_{h,r},j,k,p)
\end{align*}
as $r\to 0$ with $h/r^{1/\gamma}\to c_{h,r}$ for $c_{h,r}>0$.  If the limit is zero for $(a,c_{h,r})$ in a neighborhood of the given values, the sequence will be exactly equal to zero for large enough r.

If $h/r^{1/\gamma}\to 0$, then, as
$r\to 0$,
\begin{align*}
&r^{-(\gamma p+d_X)/\gamma}
\int \sum_{j=1}^{d_Y} \left|
  [Em_j(W_i,\theta_0+ra)k((X_i-x)/h)/Ek((X_i-x)/h)]
\omega_j(\theta_0+ra,x)
   \right|_{-}^p \, d x  \\
&\to \sum_{k=1}^{|\mathcal{X}_0|}\sum_{j\in J(k)}
 \tilde\lambda_{\text{kern}}(a,j,k,p).
\end{align*}

\end{lemma}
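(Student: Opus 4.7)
The plan follows the template of Lemmas \ref{drift_approx_bdd_lemma} and \ref{drift_approx_var_lemma}: localize the outer integral in $x$ to small balls $B(x_k,\delta)$ around the finitely many points $x_k \in \mathcal{X}_0$, combine a first-order Taylor expansion in $\theta$ with the expansion of $\bar m_j(\theta_0,\cdot)$ at $x_k$ from Assumption \ref{smoothness_assump_multi}(ii), perform the rescaling $\tilde x = (x-x_k)/r^{1/\gamma}$, and pass to the limit by dominated convergence. The only genuinely new element compared to those lemmas is the kernel denominator $Ek((X_i-x)/h)$: after the substitution $u = (\tilde x - x)/h$, both the numerator and denominator pick up a factor $h^{d_X}$ and, by Lemma \ref{k_lim_lemma} and continuity of $f_X$, the leftover factors of $f_X(x_k)$ cancel. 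Outside $\bigcup_k B(x_k,\delta)$, Assumption \ref{diff_m_assump} keeps $\bar m_j(\theta_0+ra,\cdot)$ bounded away from $0$ for small $r$, and Assumption \ref{kern_dens_assump} prevents $\omega_j$ from touching the boundary of the support of $f_X$, so the smoothed ratio is bounded away from $0$ there and $|\cdot|_{-}^p$ contributes nothing.

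Inside $B(x_k,\delta)$, after the substitution the ratio becomes
\begin{align*}
\frac{\int \bar m_j(\theta_0+ra,\, x_k + r^{1/\gamma}\tilde x + uh)\, k(u)\, f_X(x_k + r^{1/\gamma}\tilde x + uh)\, du}{\int k(u)\, f_X(x_k + r^{1/\gamma}\tilde x + uh)\, du},
\end{align*}
and the expansion $\bar m_j(\theta_0+ra,\, x_k + r^{1/\gamma}\tilde x + uh) = r\bigl[\|\tilde x + u(h/r^{1/\gamma})\|^\gamma \psi_{j,k}(\cdot) + \bar m_{\theta,j}(\theta_0,x_k)a\bigr] + o(r)$ holds for $j \in J(k)$ with $\gamma(j,k) = \gamma$; pairs with $\gamma(j,k) < \gamma$ make the bracket positive for small $r$ and contribute $0$. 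Combining this with $\omega_j(\theta_0+ra,\cdot) \to \omega_j(\theta_0,x_k)$ and the Jacobian $r^{d_X/\gamma}$ from $dx = r^{d_X/\gamma}\, d\tilde x$ gives a rescaled integrand scaling as $r^p\cdot r^{d_X/\gamma} = r^{(\gamma p + d_X)/\gamma}$, which exactly cancels the prefactor. In case 1 with $h/r^{1/\gamma} \to c_{h,r} > 0$ the kernel convolution in $u$ survives nontrivially and dominated convergence produces $\lambda_{\text{kern}}(a, c_{h,r}, j, k, p)$; in case 2 with $h/r^{1/\gamma} \to 0$ the displacement $u(h/r^{1/\gamma})$ is negligible, so $\|\tilde x + u(h/r^{1/\gamma})\|^\gamma \psi_{j,k}(\cdot) \to \|\tilde x\|^\gamma \psi_{j,k}(\tilde x/\|\tilde x\|)$ and $\int k(u)\, du = 1$ collapses the inner integral, leaving $\tilde\lambda_{\text{kern}}(a,j,k,p)$. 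Summing over $k$ and $j\in J(k)$ yields the two advertised limits.

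For the exact-zero statement in case 1, if $\lambda_{\text{kern}}(a',c',j,k,p) = 0$ for all $(a',c')$ in an open neighborhood of $(a, c_{h,r})$, then the bracket inside $|\cdot|_{-}^p$ in the definition of $\lambda_{\text{kern}}$ must in fact be pointwise nonnegative; the uniform-on-compacts convergence used above (together with continuity of $\omega_j$, $f_X$ and $\bar m_{\theta,j}$, and Lemma \ref{k_lim_lemma}) shows that the pre-limit bracket is also nonnegative once $r$ is small enough, so $|\cdot|_{-}^p$ vanishes identically. The main obstacle is the dominated-convergence step near $\tilde x = 0$ in case 2, where $\|\tilde x\|^\gamma \psi_{j,k}(\tilde x/\|\tilde x\|)$ is only H\"older at the origin and the pointwise approximation $u(h/r^{1/\gamma}) \to 0$ is non-uniform in $\tilde x$ near $0$; this is handled by the observation that the rescaled integrand is uniformly bounded on compact $\tilde x$-sets (by Assumption \ref{bdd_y_assump_local} and continuity of $\omega_j$), while the lower bound $\underline\psi > 0$ forces the bracket to be positive for $\|\tilde x\|$ large, so the effective region of integration is bounded and standard DCT applies.
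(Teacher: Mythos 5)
Your overall strategy matches the paper's: localize the $x$-integral to shrinking neighborhoods of each $x_k\in\mathcal{X}_0$, expand $\bar m_j$ in $\theta$ (first order) and in $x$ (Assumption \ref{smoothness_assump_multi}(ii)), rescale by $r^{1/\gamma}$, identify the factor $r^{p+d_X/\gamma}=r^{(\gamma p+d_X)/\gamma}$, and pass to the limit by dominated convergence, with a terminal use of $\int k(u)\,du=1$ and Lemma \ref{k_lim_lemma} to handle the kernel denominator. The organizational difference (you first substitute $u=(\tilde x-x)/h$ to normalize the convolution and only then rescale the outer point, whereas the paper does both changes of variables simultaneously with an explicit $b(\tilde x)$ factor) is cosmetic. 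Your handling of case 2 ($h/r^{1/\gamma}\to 0$) and your note on domination and the effective boundedness of the integration region via $\underline\psi>0$ are fine.

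There is a genuine gap in your argument for the exact-zero claim in case 1. You write that because the limiting bracket is pointwise nonnegative for all $(a',c')$ in a neighborhood of $(a,c_{h,r})$, ``uniform-on-compacts convergence\ldots shows that the pre-limit bracket is also nonnegative once $r$ is small enough.'' Uniform convergence to a function that is merely nonnegative does not force the pre-limit to be nonnegative; if the limit vanishes on a set of positive measure, the pre-limit could dip slightly negative there and produce a small nonzero contribution. The neighborhood hypothesis is exactly what rescues the argument, but you do not actually use it in the inference step. The paper's proof exploits the specific structure of the bracket, namely that it is a linear combination of a nonnegative term with coefficient $c^\gamma$ and a constant term with coefficient $\bar m_{\theta,j}(\theta_0,x_k)a$: by perturbing to slightly smaller $\tilde c^\gamma\le c_{h,r}^\gamma(1-\varepsilon)$ and slightly more negative $\bar m_{\theta,j}(\theta_0,x_k)\tilde a\le(\bar m_{\theta,j}(\theta_0,x_k)a)(1+\varepsilon)$, the hypothesis gives $\int[\cdot]k(\tilde u-\tilde v)\,d\tilde u\ge 0$ at the perturbed parameters while the margin between perturbed and unperturbed brackets is bounded below by a strictly positive constant (proportional to $\varepsilon\bar m_{\theta,j}(\theta_0,x_k)a\int k$). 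This strictly positive margin, not mere nonnegativity of the limit, is what then absorbs the $o(1)$ discrepancy between the pre-limit and limit integrands and forces the $|\cdot|_-$ to vanish identically for small $r$. Your proposal should invoke this monotone perturbation explicitly; as written the step is not valid.

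A minor remark: your worry about non-uniformity of $u(h/r^{1/\gamma})\to 0$ near $\tilde x=0$ in case 2 is unfounded. Since $\psi_{j,k}$ is assumed continuous (in Theorem \ref{kern_lim_thm}) and bounded, and $\|\tilde x\|^\gamma\to 0$ as $\tilde x\to 0$, the map $\tilde x\mapsto\|\tilde x\|^\gamma\psi_{j,k}(\tilde x/\|\tilde x\|)$ extends continuously by $0$ at the origin, so the relevant pointwise convergence holds there; the only substantive point is domination, which your boundedness plus effective-compactness observation already provides.
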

\begin{proof}
As before, this proof treats the case where $J(k)=\tilde J(k)$ for ease of exposition.  As with the proofs of Lemmas \ref{drift_approx_bdd_lemma} and \ref{drift_approx_var_lemma}, it suffices to prove the result for, fixing $(j,k)$ with $j\in J(k)$,
\begin{align*}
&\int \left|
  [Em_j(W_i,\theta_0+ra)k((X_i-\tilde x)/h)/Ek((X_i-\tilde x)/h)]
\omega_j(\theta_0+ra,\tilde x)
   \right|_{-}^p \, d \tilde x  \\
&=\int \left|
  \int [\|x-x_k\|^\gamma \tilde \psi_{j,k}(x-x_k)
   +\bar m_{\theta,j}(\theta^*(r),x) ra]k((x-\tilde x)/h)f_X(x)\, dx 
   h^{-d_X}b(\tilde x)
\omega_j(\theta_0+ra,\tilde x)
   \right|_{-}^p \, d \tilde x
\end{align*}
where the integral is over $\|\tilde x-x_k\|< C r^{1/\gamma}$
and $b(\tilde x)\equiv h^{d_X}/Ek((X_i-\tilde x)/h)$ converges to
$(f_X(x_k))^{-1}$ uniformly over $\tilde x$ in any shrinking neighborhood of $x_k$ by Lemma \ref{k_lim_lemma}.  Let $\tilde h=h/r^{1/\gamma}$.  By the change of variables $u=(x-x_k)/r^{1/\gamma}$, $v=(\tilde x-x_k)/r^{1/\gamma}$, the above display is equal to
\begin{align}\label{kern_seq_eq}
&\int \left|
  \int [\|ur^{1/\gamma}\|^\gamma \tilde \psi_{j,k}(ur^{1/\gamma})
   +\bar m_{\theta,j}(\theta^*(r),x_k+ur^{1/\gamma}) ra]k((u-v)/\tilde h)f_X(x_k+ur^{1/\gamma})r^{d_X/\gamma}\, du
\right.  \nonumber  \\
&\left.   (r^{1/\gamma}\tilde h)^{-d_X}b(x_k+vr^{1/\gamma})
\omega_j(\theta_0+ra,x_k+r^{1/\gamma} v)
   \right|_{-}^p r^{d_X/\gamma}\, dv  \nonumber  \\
&=r^{p+d_X/\gamma}
\int \left|
  \int [\|u\|^\gamma \tilde \psi_{j,k}(ur^{1/\gamma})
   +\bar m_{\theta,j}(\theta^*(r),x_k+ur^{1/\gamma}) a]k((u-v)/\tilde h)f_X(x_k+ur^{1/\gamma})\, du
\right.  \nonumber  \\
&\left.   \tilde h^{-d_X}b(x_k+vr^{1/\gamma})
\omega_j(\theta_0+ra,x_k+r^{1/\gamma} v)
   \right|_{-}^p \, dv
\end{align}
where the integral is over $v<C$.
The first display of the lemma (the case where $h/r^{1/\gamma}\to c_{h,r}$ for $c_{h,r}>0$) follows from this and the dominated convergence theorem.

To show that the sequence is exactly zero for small enough $r$ when the limit is zero in a neighborhood of $(a,c_{h,r})$, note, that, if the limit is zero in a neighborhood of $(a,c_{h,r})$, we will have, for all $(\tilde a,\tilde c_{h,r})$ in this neighborhood and any $v$,
\begin{align*}
&\int \left[\|u\|^\gamma \psi_{j,k}\left(\frac{u}{\|u\|}\right)
   +\bar m_{\theta,j}(\theta_0,x_k) \tilde a\right]k((u-v)/\tilde c_{h,r})
 \, du  \\
&=\int \left[\tilde c_{h,r}^{\gamma}\|\tilde u \|^\gamma \psi_{j,k}\left(\frac{u}{\|u\|}\right)
   +\bar m_{\theta,j}(\theta_0,x_k) \tilde a\right]k(\tilde u-\tilde v)
 \, \tilde c_{h,r}^{d_X} d\tilde u
\ge 0.
\end{align*}
Evaluating this at $(\tilde c_{r,h},\tilde a)$
such that $\tilde c_{h,r}^{\gamma}\le c_{h,r}^{\gamma}(1-\varepsilon)$ and
(for the case where $\bar m_{\theta,j}(\theta_0,x_k) a$ is negative)
$\bar m_{\theta,j}(\theta_0,x_k) \tilde a
  \le (\bar m_{\theta,j}(\theta_0,x_k) a)(1+\varepsilon)$
shows that
\begin{align*}
\int \left[c_{h,r}^{\gamma}\|\tilde u \|^\gamma \psi_{j,k}\left(\frac{u}{\|u\|}\right)
  \cdot (1-\varepsilon)
   +(\bar m_{\theta,j}(\theta_0,x_k) a)(1+\varepsilon)\right]k(\tilde u-\tilde v)
 \,  d\tilde u
\ge 0
\end{align*}
for all $v$ for some $\varepsilon>0$.  The above display is, for small enough $r$, a lower bound for the inner integral in (\ref{kern_seq_eq}) times a constant that does not depend on $r$, so that, for small enough $r$, the inner integral in (\ref{kern_seq_eq}) will be nonnegative for all $v$ and (\ref{kern_seq_eq}) will eventually be equal to zero.

For the case where $\tilde h=h/r^{1/\gamma}\to 0$, multiplying (\ref{kern_seq_eq}) by $r^{-(p+d_X/\gamma)}$ gives, after the change of variables
$\tilde u=(u-v)/\tilde h$,
\begin{align*}
&\int \left|
  \int [\|\tilde h \tilde u+v\|^\gamma 
 \tilde \psi_{j,k}((\tilde h \tilde u+v)r^{1/\gamma})
   +\bar m_{\theta,j}(\theta^*(r),x_k+(\tilde h\tilde u+v)r^{1/\gamma}) a]k(\tilde u)
 f_X(x_k+(\tilde u \tilde h+v)r^{1/\gamma})\, d\tilde u
\right.  \\
&\left. b(x_k+vr^{1/\gamma})
\omega_j(\theta_0+ra,x_k+r^{1/\gamma} v)
   \right|_{-}^p \, dv
\end{align*}
which converges to
\begin{align*}
\int \left|[\|v\|^\gamma \psi_{j,k}(v/\|v\|)+\bar m_{\theta,j}(\theta_0,x_k) a]
\omega_j(\theta_0,x_k) \right|_{-}^p\, dv
\end{align*}
by the dominated convergence theorem, as required.

\end{proof}

We are now ready for the proofs of the main results.

\begin{proof}[proof of Theorem \ref{bdd_weight_lim_thm}]
The result follows immediately from Lemmas 
\ref{inst_equiv_lemma_2} and
\ref{drift_approx_bdd_lemma} since \linebreak
$(n^{-\gamma/\{2[d_X+\gamma+(d_X+1)/p]\}})^{-[d_X+p(d_X+\gamma)+1]/(\gamma p)}
=n^{1/2}$.
\end{proof}

\begin{proof}[proof of Theorem \ref{var_weight_lim_thm}]
The result follows immediately from Lemmas 
\ref{equiv_lemma_1}, \ref{inst_equiv_lemma_2} and
\ref{drift_approx_var_lemma} since \linebreak
$(n^{-\gamma/\{2[d_X/2+\gamma+(d_X+1)/p]\}})^{-[d_X+p(d_X/2+\gamma)+1]/(\gamma p)}
=n^{1/2}$.
\end{proof}

\begin{proof}[proof of Theorem \ref{kern_lim_thm}]
The result follows from Lemmas 
\ref{equiv_lemma_1}, \ref{kern_equiv_lemma_2}
and \ref{drift_approx_kern_lemma}.  Note that $(nh^{d_X})^{p/2}/(n^{1-d_X s})^{p/2}\stackrel{p}{\to} c_h^{d_X p/2}$, and that, for the case where
$s\ge 1/[2(\gamma+d_X/p+d_X/2)$,
\begin{align*}
(n^{-q})^{-(\gamma p+d_X)/(\gamma p)}
  =(n^{-(1-s d_X)/[2(1+d_X/(p\gamma))]})^{-(\gamma p+d_X)/(\gamma p)}
  =n^{(1-s d_X)/2}.
\end{align*}
For the case where $s< 1/[2(\gamma+d_X/p+d_X/2)]$, it follows from Lemmas 
\ref{equiv_lemma_1}, \ref{kern_equiv_lemma_2}
and \ref{drift_approx_kern_lemma} that
\begin{align*}
&n^{q(\gamma p+d_X)/(\gamma p)} T_n(\theta_0+a_n)  
\stackrel{p}{\to} \left(\sum_{k=1}^{|\mathcal{X}_0|}\sum_{j\in J(k)}
     \lambda_{\text{kern}}(a,c_h,j,k,p)\right)^{1/p}
\end{align*}
so that $(nh^{d_X})^{1/2}T_n(\theta_0+a_n)$ will converge to $\infty$ in this case if the limit in the above display is strictly positive.  If the limit in the above display is zero in a neighborhood of $(a,c_h)$, it follows from 
Lemmas \ref{equiv_lemma_1} and \ref{kern_equiv_lemma_2}
that $(nh^{d_X})^{1/2}T_n(\theta_0+a_n)$ is, up to $o_p(1)$, equal to a term that is zero for large enough $n$ by Lemma \ref{drift_approx_kern_lemma}.

\end{proof}

\end{document}